\newtheorem{definition}{Definition} 
\newtheorem{example}{Example} 
\newcommand{\bench}[1]{{#1}}
\def\TG{\mathit{TG}}
\def\varset{\mathit{vars}}
\newcommand{\vars}[1]{\ensuremath{\mathit{vars}(#1)}}
\newcommand{\ep}{\ensuremath{\langle {atom}_1, \ldots, {atom}_n \rangle}}
\newcommand{\args}[2][X]{\ensuremath{#1_1, \ldots, #1_#2}}
\newcommand{\inargs}[1]{\ensuremath{\mathit{in\_args}(#1)}}
\newcommand{\outargs}[1]{\ensuremath{\mathit{out\_args}(#1)}}
\newcommand{\pp}[1]{\ensuremath{\mathit{pp}(#1)}}
\def\LVbefore{\mathit{LV}_{\mathit{before}}}
\def\LVafter{\mathit{LV}_{\mathit{after}}}
\newcommand{\LVb}[1]{\ensuremath{\mathit{LV}_{\mathit{before}}(#1)}}
\newcommand{\LVa}[1]{\ensuremath{\mathit{LV}_{\mathit{after}}(#1)}}
\newcommand{\VV}[1]{\ensuremath{\mathit{VV}(#1)}}
\def\LRbefore{\mathit{LR}_{before}}
\def\LRafter{\mathit{LR}_{after}}
\newcommand{\LRb}[1]{\ensuremath{\mathit{LR}_{before}(#1)}}
\newcommand{\LRa}[1]{\ensuremath{\mathit{LR}_{after}(#1)}}
\newcommand{\VR}[1]{\ensuremath{\mathit{VR}(#1)}}
\newcommand{\bornR}[1]{\ensuremath{{bornR}(#1)}}
\newcommand{\deadR}[1]{\ensuremath{{deadR}(#1)}}
\newcommand{\localR}[1]{\ensuremath{{localR}(#1)}}
\newcommand{\inputR}[1]{\ensuremath{{inputR}(#1)}}
\newcommand{\outputR}[1]{\ensuremath{{outputR}(#1)}}
\newcommand{\allocR}[1]{\ensuremath{{allocR}(#1)}}
\newcommand{\outlivedR}[1]{\ensuremath{{outlivedR}(#1)}}
\newcommand{\allocation}[1]{\ensuremath{{allocation}(#1)}}
\def\bornRegs{\mathit{bornR}}
\def\deadRegs{\mathit{deadR}}
\def\localRegs{\mathit{localR}}
\def\inputRegs{\mathit{inputR}}
\def\outputRegs{\mathit{outputR}}
\def\allocRegs{\mathit{allocR}}
\def\outlivedRegs{\mathit{outlivedR}}
\def\allocationRegs{\mathit{allocation}}
\newcommand{\atomat}[1]{\ensuremath{{atom\_at}(#1)}}
\newcommand{\Reach}[1]{\ensuremath{{Reach}(#1)}}
\newcommand{\create}[1]{\ensuremath{{create}(#1)}}
\newcommand{\remove}[1]{\ensuremath{{remove}(#1)}}
\newcommand{\code}[1]{{\tt#1}}
\title[Region-Based Memory Management for Mercury Programs]
    {Region-Based Memory Management\\for Mercury Programs}
\author[Q. Phan, G. Janssens and Z. Somogyi]
{QUAN PHAN, GERDA JANSSENS \\
Department of Computer Science \\
Katholieke Universiteit Leuven \\
Celestijnenlaan, 200A \\
B-3001 Leuven, Belgium \\
\email{quan.leuven@gmail.com,gerda.janssens@cs.kuleuven.be}
\and
ZOLTAN SOMOGYI \\
National ICT Australia and\\
Department of Computer Science and Software Engineering\\
The University of Melbourne\\
Victoria, 3010 Australia\\
\email{zs@unimelb.edu.au}
}
\begin{document}
\bibliographystyle{acmtrans}
\label{firstpage}
\maketitle

\begin{abstract}
Region-based memory management (RBMM)
is a form of compile time memory management,
well-known from the functional programming world.
In this paper we describe our work on implementing RBMM
for the logic programming language Mercury.
One interesting point about Mercury is that it is designed with
strong type, mode, and determinism systems.
These systems not only provide
Mercury programmers with several direct software engineering benefits,
such as self-documenting code and clear program logic,
but also give language implementors
a large amount of information that is useful for program analyses.
In this work, we make use of this information to develop program analyses
that determine the distribution of data into regions
and transform Mercury programs by inserting into them
the necessary region operations.
We prove the correctness of our program analyses and transformation.
To execute the annotated programs, we have implemented runtime support
that tackles the two main challenges posed by backtracking.
First, backtracking can require regions
removed during forward execution to be ``resurrected'';
and second,
any memory allocated during a computation that has been backtracked over
must be recovered promptly and without waiting
for the regions involved to come to the end of their life.
We describe in detail our solution of both these problems.
We study in detail how our RBMM system performs
on a selection of benchmark programs,
including some well-known difficult cases for RBMM.
Even with these difficult cases, our RBMM-enabled Mercury system
obtains clearly faster runtimes for 15 out of 18 benchmarks
compared to the base Mercury system with its Boehm runtime garbage collector,
with an average runtime speedup of 24\%,
and an average reduction in memory requirements of 95\%.
In fact, our system achieves optimal memory consumption in some programs.

A shorter version of this paper, without proofs,
is to appear in Theory and Practice of Logic Programming (TPLP).
\end{abstract}

\begin{keywords}
region-based memory management, region analysis, runtime support,
backtracking, logic programming, Mercury
\end{keywords}

\section{Introduction}

Memory management is an integral part
of all practical programming language systems.
Traditionally, memory has been left to the programmer to manage
using constructs such as C's \code{malloc} and \code{free},
but experience has shown that such manual systems
require a large amount of quite tedious of work from programmers,
and are very hard to use correctly.
More recent programming languages therefore automate memory management.
The standard way to implement automatic memory management
is runtime garbage collection.
This provides memory safety, good memory reuse, and reasonable performance,
but it does have a significant downside, which is that
decisions about which parts of memory can be reused
are made completely at runtime, which can incur significant overheads.

Region-based memory management or RBMM~\cite{TofteTalpin97}
is a recent technique for avoiding these overheads
by moving decisions from runtime to compile time,
thus shifting most of the responsibility to the compiler.
RBMM is based on the idea of
putting each group of heap objects that have the same lifetime
into their own regions,
the motive being that reclaiming entire regions at the end of their lifetime
makes collection very fast.
A typical scenario is a function storing its intermediate results in a region
that is freed once the final result of the function has been computed.
All the decisions about which objects are allocated into which regions
and when each region should be created and removed are made at compile time.

Since the fundamental work on RBMM
for functional programming~\cite{TofteTalpin97},
there have been several improvements and new developments
in that context~\cite{Aiken95better,Birkedal96from,Henglein01hmn}.
RBMM has also been adapted to other programming paradigms,
such as imperative programming~\cite{Gay98memory,Grossman02cyclone},
object-oriented programming~\cite{Cherem04,Chin04},
and logic programming~\cite{Makholm00,Makholm00master,MakholmSagonas02}.

The initial work on RBMM for logic programming languages
applied RBMM to Prolog.
However, the first attempt~\cite{Makholm00,Makholm00master}
was developed for a non-standard implementation of Prolog
which would require substantial changes
before it could be applied in any standard implementation.
The authors of \cite{MakholmSagonas02} fixed this problem by implementing RBMM
in the context of the standard technology for implementing Prolog,
the Warren Abstract Machine (WAM).
Nevertheless, this work mainly concentrated on
the runtime extensions needed to run Prolog programs with RBMM.
As its analysis algorithm, it used an adapted version
of a type-based region analysis originally developed
for the strongly typed functional language SML~\cite{Henglein01hmn}.
Since Prolog has no static type system
and more importantly no static mode system,
the region inference has to get the information it needs
from type and mode inferences, which often yield imprecise results.
Moreover, a Prolog implementation's lack of knowledge
about the determinism of a program's predicates
generally requires them to be treated as nondeterministic.
These limitations prevent the application of most of the optimizations
that would improve the performance of RBMM,
making it hard for it to become a practical alternative
to native runtime collectors in Prolog systems.
The logic programming language Mercury has none of these limitations;
the Mercury compiler knows the type of every variable
and the mode and determinism of every goal in the program.
This fact, the pure nature of Mercury (the absence of side-effects),
and the limited research on RBMM in logic programming
motivated us to investigate whether region-based memory management
could be developed and implemented efficiently for Mercury.

In this paper
we describe the first automated RBMM system for Mercury.
Given a Mercury program,
\begin{itemize}
\item
our system determines the set of regions the program should use;
\item
it decides, for each allocation site in the program,
which region the allocation should happen in;
\item
it inserts instructions into the program to create each region
just before it is first needed; and
\item
it inserts instructions into the program to remove each region
as soon as it is safe to do so.
\end{itemize}

The main contributions of our work are as follows.
\begin{enumerate}
\item
We develop the static program analyses needed
for generating region-annotated programs.
These include a region points-to analysis to divide Mercury terms into regions,
a liveness analysis that assigns lifetimes to the regions,
and a program transformation to annotate the
original programs with the derived region information.
\item
We prove several safety properties
for memory accesses and region operations in the resulting annotated programs.
\item
Our runtime support system
handles the interaction of RBMM with backtracking
correctly and without incurring excessive overheads.
\item
Our RBMM-enabled system achieves faster execution times
and much lower memory requirements
for most of our benchmark programs
than the standard Mercury system, which uses
the Boehm-Demers-Weiser garbage collector~\cite{Boehm88} for memory management.
The region system actually achieves
optimal memory consumption on some benchmarks.
\item
We make a detailed analysis of the RBMM behavior
of a selection of programs, including some well-known difficult cases.
This study reveals the impact of sharing on memory reuse in RBMM systems.
\end{enumerate}

A previous version of our region analysis and transformation
was published in \cite{Phan07iclp}.
In~\cite{Phan08ismm} we described the runtime support for RBMM.
They all have been reformulated, extended and/or refined in this paper.

The structure of the paper is as follows.
In Section~\ref{seCbg} we introduce Mercury
and the compiler's internal representation of Mercury programs.
Section~\ref{seCbgCmercuryRBMM} describes intuitively
how RBMM can be realized for Mercury,
and explains our decisions on how to support backtracking.
Section~\ref{seCregionCmodel} explains how we decide
which terms should be stored in which regions,
taking into account sharing among terms.
Based on this region model, we develop the static analyses of our system:
Sections~\ref{seCrpta},~\ref{seClra}, and~\ref{seCpt} contain respectively
our region points-to analysis, our region liveness analysis,
and our program transformation,
together with theorems about their correctness.
Section~\ref{seCsupportdet} shows
the basic extensions to the Mercury runtime system
needed to support RBMM in deterministic code,
while Section~\ref{seCsupportnondet}
describes the extensions needed to support backtracking (nondeterminism).
Section~\ref{seCexper} presents a detailed evaluation of our RBMM system,
as well as a discussion of the relation
between sharing and memory reuse in region-based systems.
We discuss related research in Section~\ref{seCrelatedwork},
present our ideas for future work in Section~\ref{seCfuture},
and conclude in Section~\ref{seCconclusion}.

\section{Background}
\label{seCbg}

\subsection{Mercury}
\label{seCbgCmercury}

Mercury is a pure logic programming language
intended for the creation of large, fast,
reliable programs~\cite{Somogyi96mercury}.
While the syntax of Mercury is based on the syntax of Prolog,
semantically the two languages are very different due to Mercury's purity,
its type, mode, determinism and module systems,
and its support for evaluable functions.
(Mercury treats functions
as predicates with the return value as an extra argument,
so in the rest of the paper we will talk only about predicates.)

Mercury has a strong Hindley-Milner type system very similar to Haskell's.
Some types are built into the language (e.g. \code{int}),
but users can also introduce new types using type definitions
such as the one in Example~\ref{example:typedeclaration}.
\begin{example} The declaration of the type \code{list\_int}.\\
\code{:- type list\_int ---> []; [int | list\_int]}.\\
This defines the type of lists of integers.
\hfill $\Box$
\label{example:typedeclaration}
\end{example}
Mercury programs are statically typed; the compiler knows the type of
every argument of every predicate (from declarations or inference)
and every local variable (from inference).

The mode system classifies each argument of each predicate
as either input or output;
there are exceptions, but they are not relevant to this paper.
If input, the argument passed by the caller must be a ground term.
If output, the argument passed by the caller must be a distinct free variable,
which the callee will instantiate to a ground term.
It is possible for a predicate to have more than one mode;
the usual example is \code{append}, which has two principal modes:
\code{append(in,in,out)} and \code{append(out,out,in)}.
We call each mode of a predicate a \emph{procedure}.
The Mercury compiler generates separate code for each procedure.

Each procedure has a determinism,
which puts limits on the number of its possible solutions.
Procedures with determinism \emph{det} succeed exactly once;
\emph{semidet} procedures succeed at most once;
\emph{multi} procedures succeed at least once;
while \emph{nondet} procedures may succeed any number of times.

\begin{example}
Figure~\ref{fig:running:normal} shows
the \emph{quicksort} program written in Mercury,
including declarations of the types, modes, and determinisms
for its two essential predicates, \code{qsort} and \code{split}.
We include the code of \code{main} for completeness,
but it is of no relevance to the topic of the paper.
The notation \code{!IO} represents two variables,
which in this case stand for the initial and final states of the world,
i.e.\ the state before the program writes out its result with \code{io.write},
and the state after.
(The \code{io.write} predicate is defined
in the \code{io} module of the Mercury standard library.)
\hfill $\Box$
\label{example:running:normal}
\end{example}
\begin{figure}
\scriptsize
\begin{Verbatim}[frame=single,framerule=0.2pt,framesep=3pt]
  main(!IO) :-                                    :- pred split(int, list_int, list_int,
      qsort([2, 3, 1], [], S),                        list_int).
      io.write(S, !IO).                           :- mode split(in, in, out, out) is det.
                                                  split(_, [], [], []).
  :- pred qsort(list_int, list_int, list_int).    split(X, [Le | Ls], L1, L2) :-
  :- mode qsort(in, in, out) is det.              ( if X >= Le then
  qsort([], A, A).                                    split(X, Ls, L11, L2),
  qsort([Le | Ls], A, S) :-                           L1 = [Le | L11]
      split(Le, Ls, L1, L2),                        else
      qsort(L2, A, S2),                               split(X, Ls, L1, L21),
      qsort(L1, [Le | S2], S).                        L2 = [Le | L21]
                                                  ).
\end{Verbatim}
\normalsize
\caption{The \emph{quicksort} program in Mercury.}
\label{fig:running:normal}
\end{figure}

We support a very large subset of Mercury:
unifications, first order calls,
conjunctions, disjunctions, switches,
if-then-elses, negations, and quantification.
The only parts we do not support are higher order calls
(including typeclass method calls),
calls to foreign language code,
and multi-module programs.
A complete description of Mercury can be found in~\cite{mercury_refman}.

\subsection{Mercury Code inside the Compiler}
\label{seCbgCinternal}

The compiler converts all predicate definitions into an internal form.
For our subset of Mercury,
this internal form is given by the following abstract syntax:
$$
\begin{array}{lcl}
\mbox{predicate}~P & : & p(x_1, \ldots, x_n)~\leftarrow~G\\
\mbox{goal}~G      & : & x = y ~|~ x = f(y_1, \ldots, y_n) ~|~
                            p(x_1, \ldots, x_n) ~|~ \\
                   &   & (G_1, \cdots, G_n) ~|~ (G_1 ; \ldots ; G_n) ~|~
                            not~G ~|~ \\
                   &   & (if~G_c~then~G_t~else~G_e) ~|~
                            some~[x_1,\ldots,x_n]~G \\
\end{array}
$$
We call the first three kinds of goals (unifications and calls)
atomic goals or just atoms.
The rest are called compound goals, in which
a sequence of goals separated by commas is a conjunction, while
a sequence of goals separated by semicolons is a disjunction.

As this implies, the Mercury compiler internally converts
any predicate definition with two or more clauses
into a single clause with an explicit disjunction.
The clauses themselves are transformed into \emph{superhomogeneous form},
in which each atom (including clause heads) must be of one of the forms
\code{p(X1,...,Xn)}, \code{Y = X}, or \code{Y = f(X1,...,Xn)},
where all of the \code{Xi} are distinct.

Inside the compiler, every goal (compound as well as atomic)
is annotated with mode and determinism information.
For unifications, we show the mode information by writing
\code{<=} for construction unifications,
\code{=>} for deconstruction unifications,
\code{==} for equality tests,
and \code{:=} for assignments.
The compiler reorders conjunctions as needed to ensure that
goals that consume the value of a variable
always come after the goal that produces its value.
We show the \emph{quicksort} program in this abstract syntax
in Figure~\ref{fig:running:superhomogeneous}.
\begin{figure}[htb]
\scriptsize
\begin{Verbatim}[frame=single,framerule=0.2pt,framesep=3pt]
         main(!IO) :-                          split(X, L, L1, L2) :-
         (1) L <= [2, 1, 3],                   (
         (2) A <= [],                          (1) L => [],
         (3) qsort(L, A, S),                   (2) L1 <= [],
         (4) io.write(S, !IO),                 (3) L2 <= []
                                               ;
         qsort(L, A, S) :-                     (4) L => [Le | Ls],
         (                                     (5) ( if X >= Le then
         (1) L => [],                          (6)     split(X, Ls, L11, L2),
         (2) S := A                            (7)     L1 <= [Le | L11]
         ;                                           else
         (3) L => [Le | Ls],                   (8)     split(X, Ls, L1, L21),
         (4) split(Le, Ls, L1, L2),            (9)     L2 <= [Le | L21]
         (5) qsort(L2, A, S2),                     )
         (6) A1 <= [Le | S2],                  ).
         (7) qsort(L1, A1, S)
         ).
\end{Verbatim}
\small
\caption{\emph{quicksort} program in superhomogeneous form.}
\normalsize
\label{fig:running:superhomogeneous}
\end{figure}
For readability, we have chosen meaningful names for some additional variables
that are added automatically by the Mercury compiler.
We also replace the sequence of unifications
needed to construct a single ground term
with a single goal.
For example, the list construction at (1) in \code{main}
in Figure~\ref{fig:running:superhomogeneous},
actually stands for
\begin{verbatim}
V_0 <= [],
V_1 <= 3, V_2 <= [V_1 | V_0],
V_3 <= 1, V_4 <= [V_3 | V_2],
V_5 <= 2, L <= [V_5 | V_4]
\end{verbatim}
These extra details are of no interest in this paper.

In the rest of the paper, we will ignore negation, since \code{not~$G$} can
be implemented as \code{if $G$ then fail else true},
where \code{fail} and \code{true} are two builtin goals,
with \code{fail} always failing and \code{true} always succeeding.
Note that in Mercury (unlike in Prolog),
the condition of an if-then-else is allowed to succeed several times.
Whether the condition of a particular if-then-else can do so
will be recorded in its determinism annotation,
and many parts of the compiler, including the RBMM implementation,
handle conditions of different determinisms differently.

Another situation in which determinism information is important
is existential quantification.
(Mercury also supports universal quantification,
but the compiler internally converts ${all}~[x_1,~\ldots,~x_n]~G$
to ${not}~{some}~[x_1,~\ldots,~x_n]~{not}~G$,
so we do not have to deal with it.)
If ${some}~[\ldots]~G$ quantifies away all the output variables of $G$,
then different solutions of $G$ would be indistinguishable,
so even if $G$ can have more than one solution, ${some}~[\ldots]~G$ will not.
We call such a quantification a \emph{commit},
and we handle commits differently from other quantifications.

\section{Overview of Region-Based Memory Management for Mercury}
\label{seCbgCmercuryRBMM}

We divide the task of realizing RBMM for Mercury into two parts:
(a) two static analyses and a program transformation,
which work entirely at compile time,
and (b) dynamic runtime support, which executes at runtime
code added to the program by the compiler at compile time.

The goal of the static analyses and transformation
is to annotate Mercury programs with information about regions.
An annotated program contains information about the regions
in which terms are constructed and when regions are created and freed.
To obtain this information, we first use
a region points-to analysis to detect the regions used by a program,
and then we compute the lifetimes of these regions
using a region liveness analysis.
The program transformation then uses these pieces of information
to convert the program into a region-annotated program.

The runtime support for RBMM has two main tasks.
First, it has to implement the necessary operations on regions:
the creation of regions, allocation into regions, and the removal of regions
(Section~\ref{seCsupportdet}).
Second, it has to provide support
for the interaction of backtracking with RBMM.
There are two main forms of interaction:
instant reclaiming and backward liveness
(Section~\ref{seCsupportnondet}).

The memory allocated by computations that have been backtracked over
will never be accessed again,
since backtracking effectively ``erases'' such computations.
To prevent memory leaks, this memory should be recovered
immediately when forward execution resumes again;
we call this \emph{instant reclaiming}.
This obviously has to be done at runtime,
so in our system,
the compiler inserts the code required to do this into the program
at both resume points
(points in the program where forward execution can resume after backtracking,
such as the starts of second and later disjuncts in a disjunction)
and at program points that establish resume points
(such as just before entry into a disjunction).

In logic programming languages, the presence of backtracking
requires the notion of liveness to be divided into two parts.
A variable, memory location or region
is \emph{forward live} at a program point
if it can be accessed during forward execution from that program point,
and it is \emph{backward live} at a program point
if it can be accessed during backward execution (i.e.\ after backtracking
to a choice point established \emph{before} that program point).
The two notions of liveness are independent:
all four combinations of forward and backward liveness and deadness
are possible.
Regions can be reclaimed only
when they are both forward dead and backward dead.

Our region liveness analysis takes into account \emph{only} forward liveness,
and we ensure safety with respect to backward liveness through runtime support.
Our reasons for why we handle backward liveness this way are that
\begin{itemize}
\item
handling it purely at compile time is not possible,
since runtime support will still be needed in some cases,
as we will point out in Section~\ref{seCfuture},
and a purely-runtime solution is simpler than a solution that mixes
compile time and runtime aspects; and
\item
we can implement a large part of this runtime support
using the machinery we need anyway for instant reclaiming.
\end{itemize}
However, handling backward liveness at least partially at compile time
may turn out to be more efficient,
which is why we intend to explore it in future work.

\subsection{Region Variables}

We use \emph{region variables} to refer to regions,
just as we use program variables to refer to values.
To allocate a new region,
we use the instruction \code{create(R)},
which creates a region and binds the region variable \code{R} to it.
To free a region we use the instruction \code{remove(R)},
which frees the memory of the region to which \code{R} is currently bound.
Our regions can and actually do live across procedure boundaries,
and thus we pass region variables as extra arguments to procedure calls.
Figure~\ref{fig:running:annotated} shows
the region-annotated \emph{quicksort} program after our region transformation.
Our source-to-source transform represents these instructions,
and the instructions we introduce later,
as calls to builtin predicates.
We describe the implementation of these predicates
in Section~\ref{seCsupportdet}.

\begin{figure}[htb]
\scriptsize
\begin{Verbatim}[frame=single,framerule=0.2pt,framesep=3pt]
   main(!IO) :-                              split(X, L@R1, L1@R3, L2@R4) :-
       create(R20), create(R21),             (
   (1) L <= [2, 1, 3] in R20,                (1) L => [],
       create(R22),                              remove(R1),
   (2) A <= [] in R22,                           create(R3),
   (3) qsort(L@R20, A@R22, S@R22),           (2) L1 <= [] in R3,
   (4) io.write(S, !IO),                         create(R4),
       remove(R21), remove(R22).             (3) L2 <= [] in R4
                                             ;
   qsort(L@R6, A@R8, S@R8) :-                (4) L => [Le | Ls],
   (                                         (5) ( if X >= Le then
   (1) L => [],                              (6)      split(X, Ls@R1, L11@R3, L2@R4),
       remove(R6),                           (7)      L1 <= [Le | L11] in R3
   (2) S := A                                      else
   ;                                         (8)      split(X, Ls@R1, L1@R3, L21@R4),
   (3) L => [Le | Ls],                       (9)      L2 <= [Le | L21] in R4
   (4) split(Le, Ls@R6, L1@R9, L2@R10),          )
   (5) qsort(L2@R10, A@R8, S2@R8),           ).
   (6) A1 <= [Le | S2] in R8,
   (7) qsort(L1@R9, A1@R8, S@R8)
   ).
\end{Verbatim}
\small
\caption{Region-annotated \emph{quicksort} program.}
\normalsize
\label{fig:running:annotated}
\end{figure}

In the region-annotated code,
we use the postfix \code{@Ri} to annotate both actual and formal arguments
with their region variables.
We also annotate each unification that constructs a new memory cell
with the region in which the cell will be allocated.
For example, in \code{main}, the skeleton of the list \code{L}
is in the region (bound to) \code{R20},
while that of the accumulator \code{A} is in \code{R22}.
The elements of the lists are in \code{R21} (but see below).
In the call to \code{qsort},
\code{R20} and \code{R22} are passed as actual region arguments,
corresponding to the formal arguments \code{R6} and \code{R8}
in the definition of \code{qsort}.
We do not need to pass the region of the elements because
\code{qsort} and \code{split} just read from it.
The region \code{R20} is passed to \code{qsort} from \code{main} and
is removed in the base case branch of \code{split}
in the call to \code{split} at (4) in \code{qsort}.
The two new lists \code{L1} and
\code{L2} are allocated in two separate regions referred to by \code{R9}
and \code{R10}.
These regions are created by the base case branch of \code{split},
and removed (indirectly) by the recursive calls to \code{qsort} at (5) and (7).
If \code{L1} and \code{L2} are empty lists,
the removals will happen in the base case branch of \code{qsort};
otherwise, they will happen in the base case branch of \code{split}.
The region \code{R22} of the resulting list is the region of the accumulator,
which is created in \code{main}.

\section{Region Modelling}
\label{seCregionCmodel}

\subsection{Storing Terms in Regions Based on Their Types}

As we want to distribute terms over different regions,
we first discuss the representation of terms
when the heap memory is divided into regions.

We assume that a term that does not fit into one word
will be represented by a pointer to a memory cell on the heap.
We also assume that a term that can be represented by a single memory word
does not need storage on the heap in its own right.
When those terms are on their own,
they will be stored in registers or in stack slots.
When they are arguments of a larger term,
they will be stored in a word on the heap,
but this word will be counted as
belonging to the memory cell representing the larger term.

Our assumptions are compatible with the implementation of Mercury
in the Melbourne Mercury Compiler (MMC).
The MMC knows the types of all variables,
and these types give us information about the storage size of terms.
Terms of primitive types such as int and char are stored in one word,
and the same is true of enumeration types
(types in which \emph{all} functors have arity zero).
The principal functor of a term that needs heap space is represented
by a \emph{possibly-tagged} pointer to a block of memory words on the heap.
The compiler knows all the functors in the type of the term.
It also knows that all words in the Mercury heap are aligned,
so pointers to them have two free bits on 32 bit machines,
and three free bits on 64 bit machines.
Therefore if a type has at most four function symbols
(eight on 64 bit machines),
the principal functor can be represented
by what Mercury calls a ``primary tag'' on the lowest bits of the pointer.
When a type has only one functor, even this is not needed.
When a type has more than four or eight functors
(on 32 and 64 bit machines respectively)
the compiler will use one primary tag value
to represent several function symbols,
and will use the first word of the pointed-to memory block
as a secondary tag to distinguish between them.
(The usual implementations of Prolog have a similar word
in \emph{every} heap cell other than those storing lists,
increasing their memory footprint.)

\begin{example} Consider the following types.\\
    \code{:- type elem ---> f; g(int); h(list\_int, int)}.\\
    \code{:- type list\_elem ---> []; [elem | list\_elem]}.\\
Figure~\ref{fig:termrep} shows MMC's representation
of the term \code{[f, g(1), h([1, 2], 2)]} bound to the variable \code{L},
which is of type \code{list\_elem}.
Boxes with slim border are locations on the stack or in registers,
while boxes with bold borders are locations on the heap.
Note the representation of the term \code{h([1, 2], 2)}
in the last element of the list:
we need a two-word block for \code{h}'s arguments,
but the functor itself is stored implicitly in the tagged pointer.
\hfill $\Box$
\label{example:termrep}
\end{example}
\begin{figure}[htp]
\centering
\includegraphics[scale=1]{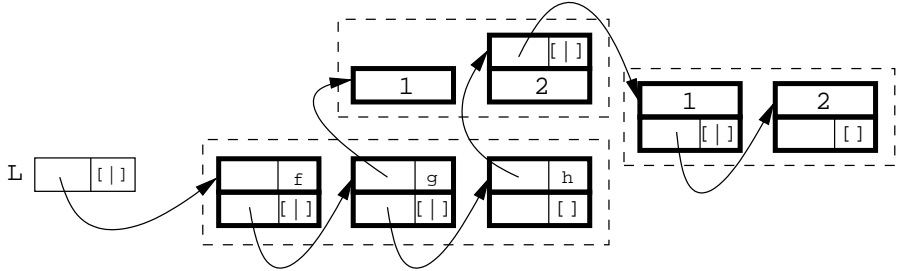}
\caption{Term representation of \code{L}=[f, g(1), h([1, 2], 2)].}
\label{fig:termrep}
\end{figure}

We now consider the storage of terms when the heap is split into regions.
The idea is to use different regions to store different parts of a term
so that we can reclaim the memory of a part by destroying its region
as soon as that part becomes dead.
Many programs (including \emph{quicksort})
create temporary lists
in which the elements have much longer lifetimes.
Therefore storing the elements and the list skeletons in different regions
will allow us to recover the memory of the list skeletons much earlier.
Generalizing from this, we divide each term into regions
based on the type of each of its subterms.
We will develop this idea in the next subsection.

In Figure~\ref{fig:termrep},
the regions used to store our example term are shown by the dashed lines.
We put the two-word memory blocks
making up the skeleton of the list \code{L} into one region
because they have the type \code{list\_elem}.
We also put \emph{all} the elements,
which have the type \code{elem},
into another region.
Finally, the first subterm of the third element,
which is of type \code{list\_int} rather than \code{list\_elem},
is stored in yet another region.

The representation of the list of integers here
seems inconsistent with what we said in Section~\ref{seCbgCmercuryRBMM},
where we have an extra, separate region for the integers.
The reason for this is because in this section we want to give a region model
as close as possible to the implementation of Mercury in the MMC,
in which integers do not need their own memory cells on the heap.
Here we have two different viewpoints:
a theoretical one that wants to treat all types the same way,
and a practical one that wants to accurately reflect
how the implementation handles values of each type.
For convenience,
we take the liberty of switching between the two viewpoints at will.
When talking about theoretical topics
such as static analyses and transformation for convenience
we generally assume that all types (including \code{int}) require heap storage;
when talking about the actual implementation,
we will assume that the implementation does not create regions
without having anything to put into them.
We will be more specific only if the context is not clear.

\subsection{Modelling Regions of a Type}

Our system needs a storage scheme
that specifies how the terms of a type are stored.
Consider a type \code{t} declared as follows.
\begin{verbatim}
:- type t ---> ...; f(t1,..., ti,..., tn); ...
\end{verbatim}
We associate a region variable $R^t$ with the type.
The block of memory words corresponding to a principal functor,
such as \code{f}, of a term of type \code{t}
is stored in the region bound to $R^t$.
In the rest of the paper we abbreviate this by simply saying that
a principal functor is stored in $R^t$.
The principal functor of an argument of \code{f} that has type \code{ti}
is stored in the region bound to $R^{ti}$,
which is associated to \code{ti}.
If a type \code{t} is recursive or mutually recursive,
we still use only one region variable $R^t$.
This implies that any term of a recursive type
is modelled by a finite number of regions.

We model the storage scheme using a type-based region graph,
$\TG(N,E)$ with $N$ being a set of nodes and $E$ being a set of directed edges.
A node stands for a region variable.
A directed edge from one node to another represents the fact that
the region bound to the region variable
represented by the source node of the edge
contains references into (points-to)
the region bound to the region variable
represented by the target node of the edge.
The reference relation represented by the edges
is actually defined by the type.

Consider the type-based region graph of the type \code{t}, $\TG_t$,
with the region variables $R^t$, $R^{t1}$, $R^{t2}$ and so on.
If $R^t$ is represented by the node $n$,
then for each node $m$ representing $R^{ti}$,
we have exactly one edge $(n, (f,i), m)$ with the label $(f,i)$.
We refer to $n$ as the \emph{principal node} of $\TG_t$.

\begin{example}
The type-based region graph for the type \code{list\_elem}
in Example~\ref{example:termrep}
is shown in Figure~\ref{fig:typebased:listelem}.
The \code{[|]} principal functor is stored in $R^{\mathit{list\_elem}}$.
It has two arguments, the first having the type \code{elem} and
the second having the same type \code{list\_elem}.
Thus we have two edges from $R^{\mathit{list\_elem}}$,
the first pointing to $R^{\mathit{elem}}$ where the principal functors of
\code{elem} (\code{g/1} and \code{h/2}) are stored,
and the second being a self-edge.
The edge labelled \code{(h,1)} is due to the first argument of the functor
\code{h/2}.
The reader may want to compare this type-based region graph
with Figure~\ref{fig:termrep},
which shows the memory representation of a term of this type.
\begin{figure}[htp]
\centering
\includegraphics{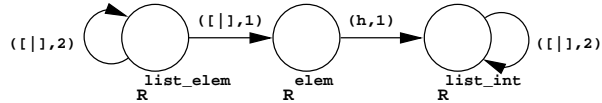}
\caption{The type-based region graph of the type \code{list\_elem}.}
\label{fig:typebased:listelem}
\end{figure}
\hfill $\Box$
\label{example:typebasedrptg}
\end{example}
\begin{example}
Consider the following types \code{t1} and \code{t2},
which are mutually recursive.
\begin{verbatim}
    :- type t1 ---> f(int, t2).
    :- type t2 ---> g(t1, int) ; h.
\end{verbatim}
The type-based region graph for these types
is shown in Figure~\ref{fig:typebased:mutually}.
\begin{figure}[htp]
\centering
\includegraphics{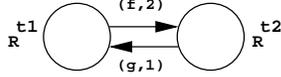}
\caption{Type-based region graph of mutually recursive types.}
\label{fig:typebased:mutually}
\end{figure}
\end{example}

\subsection{Region Points-To Graph}

Now that we have the region model for types,
our next goal is to model the memory used by a Mercury program
in terms of regions.
A program consists of a set of procedures,
each having its own set of program variables
that, at runtime, are instantiated with relevant terms.
Therefore we define the notion of a region points-to graph
that models the memory used by a set of variables.
The memory used by a procedure
is modelled by a region points-to graph for its variables.
Finally, the memory model for the whole program
is expressed through the region points-to graphs of its procedures.

In Mercury, variables are instantiated by unifications.
A construction unification \code{X <= f(..., Y, ...)}
allocates new memory for storing the functor \code{f}
(actually the block of memory words storing \code{f}'s arguments,
and, if the tag on the pointer to the block is not enough for this,
\code{f}'s identity),
and creates sharing between \code{X} and each \code{Y}.
In a deconstruction unification \code{X => f(..., Y, ...)}
or an assignment unification \code{Y := X},
\code{Y} is instantiated and shares a subterm or the whole term with \code{X},
respectively.
Hence the region points-to graphs should capture
the memory locations of the variables and the sharing among them.

A {\bf region points-to graph}, $G(N, E)$, for a set of variables $V$,
consists of a set of nodes $N$, representing region variables and
a set of directed edges, $E$, representing references between the regions
bound to these region variables.
The edges here serve exactly the same purpose as those in a $\TG$ graph.
However, each node $n$ in the region points-to graph
has an associated set of program variables, $\vars{n}$,
whose principal functors are stored in the region
that is bound to the region variable that is represented by $n$.
The $\varset$ sets of the various nodes
must represent a partition of the set of variables of interest
(e.g. the set of variables in a procedure):
each variable in the set must appear in the $\varset$ set of exactly one node.
(Note that the $\varset$ set of a node may be empty;
this can happen when a variable's value has some subterms
that the code in question does not access.)
We have $V = \hbox{\raise-1mm\hbox{$\textstyle
\bigcup \atop \scriptstyle {n \in N}$}} \vars{n}$.
The notation $n_X$ denotes the node where \code{X} $\in \vars{n_X}$
and we refer to $n_X$ as the \emph{location} of \code{X},
since this node represents the region where
the principal functor of the term that \code{X} is bound to is stored.
The function $\mathit{node}(n_X, (f, i))$
returns the node $m$ if $(n_X, (f,i), m) \in E$,
otherwise its result is undefined.

\emph{Sharing} is represented in a region points-to graph in two ways.
First, directed edges represent the sharing of subterms,
and second, a node whose $\varset$ set contains more than one variable
represents the fact that these variables may be bound to the same term.
An example of the latter is given by
the variables of an assignment unification:
they are bound to the same term and
therefore they should be in the $\varset$ set of the same node.
A region points-to graph represents sharing at the level of the regions.

\begin{definition}[Region-sharing in a region points-to graph]
\label{definition:region-sharing}
Two variables X and Y {\bf region-share} in a region points-to graph
if there exists a node that can be reached from both $n_X$ and $n_Y$.
\end{definition}

For convenience, we also say \emph{a node represents a region},
by which we mean the region to which
the region variable represented by the node is bound at runtime.
Then we can say \emph{a functor is stored in a node} meaning
that the functor (i.e.\ the memory block corresponding to it)
is stored in the region represented by the node.

For a procedure $p$,
we denote its region points-to graph by $G_p(N_p, E_p)$.
$G_p$ should represent the locations and sharing
among all the variables in $p$.
It is possible to form a region points-to graph for a procedure
exactly from the type-based region graphs of all of its variables
(whose types are known to the compiler).
Although this region points-to graph adequately models
the locations of the procedure's relevant terms,
it does not represent the sharing among them.
Actually, as we will see in Section~\ref{seCrpta},
we use that region points-to graph as the starting point
in our region points-to analysis of a procedure,
with the ultimate aim of producing a region points-to graph
that also represents all the \emph{possible} sharing
among the procedure's variables.

\begin{example}
Consider the following sequence of code to construct the term
that \code{L} in Example~\ref{example:termrep} is bound to.
The type of \code{K} is of no importance.
\begin{verbatim}
    ...,
    X <= [1, 2],
    Y := X,
    Z <= h(Y, 2),
    L <= [f, g(1), Z],
    K <= k(Z),
    ...
\end{verbatim}
The region points-to graph that represents
the memory manipulated by this sequence is shown in Figure~\ref{fig:sharing}.
\code{X} and \code{Y} are in the $\varset$ set of the same node
because the assignment makes \code{Y} point to the term
to which \code{X} is bound.
The direct sharing between \code{Z} and \code{Y},
and between \code{L} and \code{Z},
is represented by the edges between their corresponding nodes.
The indirect sharing between \code{L} and \code{Y} is modelled
by the fact that $n_Y$ is reachable from $n_L$ through the directed edges.
The sharing between \code{L} and \code{K} is represented by the fact that
$n_Z$ is reachable from both $n_L$ and $n_K$.
\begin{figure}[htp]
\centering
\includegraphics{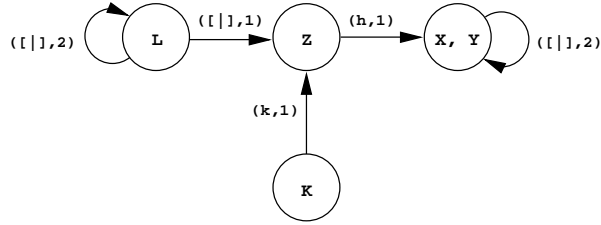}
\caption{Modelling of sharing information.}
\label{fig:sharing}
\end{figure}
\label{example:sharingrptg}
\hfill $\Box$
\end{example}

\section{Region Points-To Analysis}
\label{seCrpta}

The region points-to analysis aims at computing
for each procedure in a Mercury program
a region points-to graph that represents
the locations of its variables and the sharing among them.

The region points-to analysis is unification-based and flow-insensitive,
i.e.\ the execution order of the atomic goals in a procedure does not matter,
and consists of an intraprocedural analysis and an interprocedural analysis.
Both analyses make use of the unify operation
shown in Algorithm~\ref{algo:unify},
whose task is capture sharing between two nodes in a region points-to graph.
This algorithm should be invoked when the analyses learn
that two variables whose nodes are $n$ and $m$ respectively
can refer to the same storage;
it will update the points-to graph by unifying the two nodes,
i.e.\ merging them into one.
To ensure that
there is only one out-edge with a specific label from any given node,
unifying two nodes will cause
their corresponding child nodes to be unified as well,
unless they are the same node already.

\algsetup{indent=2em}
\begin{algorithm}
    \small
    \caption{$\mathit{unify}(n, m)$}
    \label{algo:unify}
\begin{algorithmic}
    \REQUIRE{$G(N,E)$, $n, m \in N$.}
    \ENSURE{$G(N,E)$ with $n$ representing the unified node.}

    \STATE $N = N \setminus \{m\}$

    \STATE $\vars{n} = \vars{n} \cup \vars{m}$

    \FORALL{$(m, (f, i), k) \in E$}
        \STATE $E = E \setminus \{(m,(f,i),k)\}$
        \IF{$ (n, (f, i), k) \not\in E$}
            \STATE $E = E \cup \{(n,(f,i),k)\}$
        \ENDIF
    \ENDFOR

    \FORALL{$(k, (f, i), m) \in E$}
        \STATE $E = E \setminus \{(k,(f,i),m)\}$
        \IF{$(k, (f, i), n) \not\in E$}
            \STATE $E = E \cup \{(k,(f,i),n)\}$
        \ENDIF
    \ENDFOR

    \FORALL{${l}, {l}' \in N$}
        \IF{$(n, (g, j), {l}) \in E \wedge (n, (g, j), {l}') \in E
        \wedge {l} \neq {l}'$}
            \STATE $\mathit{unify}({l}, {l}')$
        \ENDIF
    \ENDFOR

\end{algorithmic}
\normalsize
\end{algorithm}

We will describe the analyses in turn
with the assumption that we are analyzing a procedure $p$.

Recall that, when describing the static region analysis and transformation,
for convenience, we make the assumption that
\emph{all} terms are stored on the heap and therefore we need regions for them.
In a concrete implementation, such as ours inside the MMC
(Sections~\ref{seCsupportdet} and \ref{seCsupportnondet}),
if certain terms do not need heap storage,
their corresponding regions can just be ignored.

\subsection{Intraprocedural Analysis of a Procedure}

The intraprocedural analysis initializes $G_p$
and then captures the sharing created by the explicit unifications.
Its definition is in Algorithm~\ref{algo:intraproc}.
(See section \ref{seCbgCinternal} for the definition of superhomogeneous form.)
\begin{algorithm}
    \small
    \caption{$\mathit{intraproc}(p)$:
        intraprocedural analysis of a procedure $p$}
    \label{algo:intraproc}
\begin{algorithmic}
    \REQUIRE{$p$ is in superhomogeneous form.}
    \ENSURE{The sharing created by explicit unifications
        is represented in $G_p$.}

    \STATE $G_p = (\emptyset, \emptyset)$
    \FORALL{$\code{X} \in p$}
        \STATE $G_p = G_p \uplus \mathit{init\_rptg}(\code{X})$
    \ENDFOR

    \FORALL{$\mathit{unif} \in p$}
        \IF{$\mathit{unif} \equiv (\code{X:=Y})$}
            \STATE $\mathit{unify}(n_X, n_Y)$
        \ELSIF{$\mathit{unif} \equiv (\code{X = > f(\args[Y]{n})}\hspace{1mm}
        \mathit{or}\hspace{1mm}\code{X < = f(\args[Y]{n})})$}
            \FOR{$i=1$ to $n$}
                \STATE $\mathit{unify}(\mathit{node}(n_X, (f, i)), n_{Y_i}$)
            \ENDFOR
        \ENDIF
    \ENDFOR

\end{algorithmic}
\normalsize
\end{algorithm}

As we know the type of each variable in $p$,
we initialize $G_p$ by using
the $\TG$ graphs of the variables.
In Algorithm~\ref{algo:intraproc},
we use a function $\mathit{init\_rptg}(\code{X})$
that
\begin{itemize}
\item
    generates a region points-to graph for \code{X}
    from the type-based region graph of the type of \code{X},
    $\TG_{\mathit{type}(\code{X})}$,
\item
    sets the $\varset$ set of the node
    corresponding to the principal node in $\TG_{\mathit{type}(\code{X})}$
    to $\{\code{X}\}$
    and the $\varset$ set of all others nodes to the empty set,
\item
    and generates a fresh region variable
    for each node in the region points-to graph.
\end{itemize}

The intraprocedural analysis then adds to $G_p$
all the sharing created by the unifications in the procedure.
For assignment, construction and deconstruction unifications
we unify the nodes corresponding with the sharing created by them.
We ignore test unifications because they do not create any sharing.

\subsection{Interprocedural Analysis}

The interprocedural analysis, Algorithm~\ref{algo:interproc},
updates $G_p$ by integrating into it
the \emph{relevant} region-sharing information
from the region points-to graphs of the called procedures.

\begin{algorithm}
    \small
    \caption{$\mathit{interproc}(p)$:
    interprocedural analysis of a procedure $p$}
    \label{algo:interproc}
\begin{algorithmic}
    \REQUIRE{$p$ is in superhomogeneous form.}
    \ENSURE{The sharing created by procedure calls is represented in
    $G_p(N_p,E_p)$.}

    \REPEAT
        \FORALL{call sites in $p$}
            \STATE Assume that the call is $q(\args[Y]{n})$,
            with \args[X]{n} being the corresponding formal arguments,
            and that $G_q$ is available.

            \STATE
            \STATE \emph{\% Build an $\alpha$ relation.}
            \FOR{$k=1$ to $n$}
                \STATE $\alpha(n_{X_k}) = n_{Y_k}$
            \ENDFOR

            \STATE
            \STATE \emph{\% Ensure $\alpha$ is a function.}
            \FORALL{$X_i, X_j$}
                \IF{$\alpha(n_{X_i}) = n_{Y_i} \wedge
                    \alpha(n_{X_j}) = n_{Y_j}
                    \wedge n_{X_i} = n_{X_j}
                    \wedge n_{Y_i} \not= n_{Y_j}$}
                    \STATE  $\mathit{unify}(n_{Y_i}, n_{Y_j})$
                \ENDIF
            \ENDFOR

            \STATE
            \STATE \emph{\% Integrate sharing in $G_q$ into $G_p$.}
            \STATE In the graph $G_q$, do a depth-first traversal
                starting from each $n_{X_i}$,
                visiting each node only once
                and applying the rules P1 and P2 in
                Figure~\ref{fig:inter:rules} when applicable.
        \ENDFOR
    \UNTIL{There is no change in either $G_p$
    or in any of the $\alpha$ functions.}

\end{algorithmic}
\normalsize
\end{algorithm}

\sloppy
Consider a call $q(\args[Y]{n})$ in the body of $p$,
with the head of the called procedure being $q(\args[X]{n})$.
Any region-sharing among the $X_i$ in $G_q$
may not currently be present in $G_p$ as region-sharing among the $Y_i$.
The interprocedural analysis makes sure that
any such sharing in $G_q$ will be copied to $G_p$.
First, it builds the function $\alpha:~N_q \rightarrow N_p$ that
maps the nodes of the formal arguments ($X_i$'s)
to the nodes of the corresponding actual arguments ($Y_i$'s).
Then these nodes are the starting points for the
integration of the remaining region-sharing.
This is done by following the relevant edges in $G_q$
to extend the $\alpha$ function to all the relevant nodes in $G_q$ (rule P2)
and to unify the relevant nodes in $G_p$ (rule P1).

\begin{figure}
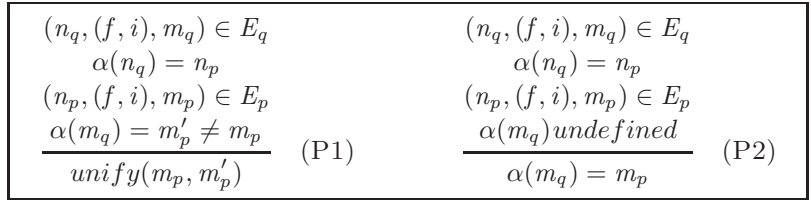

\centering
\framebox{
$
\begin{array}{c}
\inferrule
{
    (n_q, (f, i), m_q) \in E_q \\\\
    \alpha(n_q)=n_p \\\\
    (n_p, (f, i), m_p) \in E_p \\\\
    \alpha(m_q)=m'_p \neq m_p
}
{
    \mathit{unify}(m_p, m'_p)
}
\quad(\textsc{P1})
\qquad
\qquad
\inferrule
{
    (n_q, (f, i), m_q) \in E_q \\\\
    \alpha(n_q)=n_p \\\\
    (n_p, (f, i), m_p) \in E_p \\\\
    \alpha(m_q)~\mathit{undefined}
}
{
    \alpha(m_q) = m_p
}
\quad(\textsc{P2})
\end{array}
$
}
\small
\caption{Interprocedural analysis rules.}
\normalsize
\label{fig:inter:rules}
\end{figure}

For a whole program,
we start by performing the intraprocedural analysis for every procedure.
Since our interprocedural analysis propagates information only upwards,
from the graphs of callees to those of callers,
we compute the strongly connected components of the call-dependency graph
and analyze the components in bottom-up order.
Algorithm~\ref{algo:rpta} illustrates this approach.
\begin{algorithm}
    \small
    \caption{Region points-to analysis of a program}
    \label{algo:rpta}
\begin{algorithmic}
    \REQUIRE{A Mercury program $P$
    with its procedures in superhomogeneous form.}
    \ENSURE{Region points-to graphs for all procedures.}

    \FORALL{procedure $p$ in $P$}
    \STATE $\mathit{intraproc}(p)$
    \ENDFOR

    \STATE Compute the strongly connected components (SCCs)
    of $P$'s call-dependency graph.

    \FORALL{SCCs in bottom-up order}
        \REPEAT
            \FORALL{$p$ in SCC}
                \STATE $\mathit{interproc}(p)$
            \ENDFOR
        \UNTIL{we have reached a fixpoint}
    \ENDFOR
\end{algorithmic}
\normalsize
\end{algorithm}

The points-to graphs of the \code{split} and \code{qsort} procedures
in the \emph{quicksort} program in Example~\ref{example:running:normal}
are shown in Figure~\ref{fig:running:rptgraph}.
For \code{split}, the region points-to analysis detects that the two
sublists \code{L1} and \code{L2} can be in separate regions that are
different from the region of the input list \code{L}.
For \code{qsort}, the input list, the two temporary lists, and the resulting
list are all in different regions.
That the resulting list \code{S} is in the same region
as the accumulator and the temporary lists \code{S2} and \code{A1}
is reasonable because the result list is gradually built up from them.
\begin{figure}
    \centering
    \subfloat[\code{split}\label{fig:example:split}]
        {\includegraphics{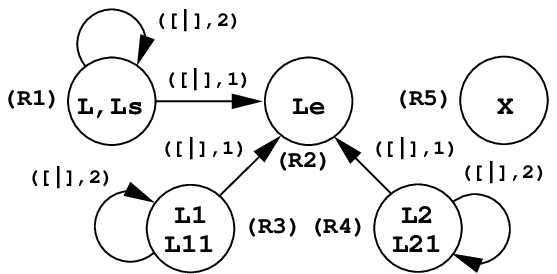}}
    \quad
    \subfloat[\code{qsort}\label{fig:example:qsort}]
        {\includegraphics{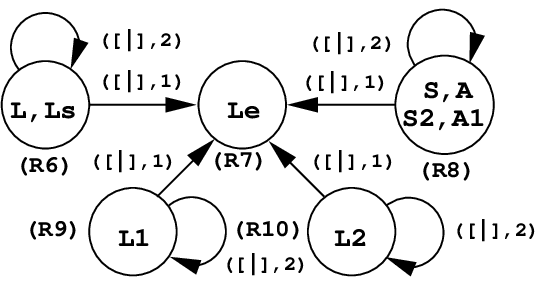}}
    \small
    \caption{The region points-to graphs of \code{split} and \code{qsort}.}
    \normalsize
    \label{fig:running:rptgraph}
\end{figure}

\subsection{Correctness of the Region Points-To Graphs}

We will prove that the region points-to analysis of a program terminates
and that the resulting region points-to graphs
for the procedures in the program are \emph{correct},
i.e.\ they represent all the locations of the terms
and the sharing among the terms.

\begin{theorem}
\label{theorem:termination}
The region points-to analysis of a program terminates.
\end{theorem}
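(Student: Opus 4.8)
The plan is to prove termination by exhibiting a well-founded measure that strictly decreases (or at least does not increase, with a bounded number of changes) across every step of Algorithm~\ref{algo:rpta}. The key observation is that every operation performed by the analysis---\texttt{unify}, the $\alpha$-construction, and the propagation rules P1 and P2---is monotone in a suitable sense: \texttt{unify} only ever merges nodes (so $|N_p|$ never increases and strictly decreases whenever a real merge happens), and the $\alpha$ function is only ever \emph{extended} (rule P2 assigns $\alpha(m_q)$ only when it was previously undefined), never retracted. So first I would fix, for a procedure $p$, the pair consisting of the number of nodes $|N_p|$ of its region points-to graph together with, for each call site in $p$ to a procedure $q$, the number of nodes of $G_q$ on which $\alpha$ is still undefined.

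Second, I would argue that the intraprocedural phase terminates: it is a single bounded loop over the variables of $p$ followed by a single bounded loop over the (finitely many) unifications in $p$, and each call to \texttt{unify} itself terminates because its recursive calls are triggered only by node merges, and there are only finitely many nodes to merge---so \texttt{unify} makes at most $|N|$ recursive calls. The initial graph is finite because each $\TG$ graph is finite (a recursive type uses only one region variable, as stated in the region-modelling section), and a procedure has finitely many variables.

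Third, for the interprocedural phase I would show each \texttt{interproc}$(p)$ call terminates: within one invocation the \texttt{repeat} loop reiterates only while $G_p$ or some $\alpha$ changes; a change to $G_p$ is always a node merge (decreasing $|N_p|$, bounded below by $1$), and a change to an $\alpha$ is always an extension of a partial function from the \emph{fixed} finite node set $N_q$ to $N_p$, so it can happen only finitely often. Hence the loop runs finitely many times; each iteration is a bounded loop over finitely many call sites doing a bounded depth-first traversal (each node of $G_q$ visited once) plus finitely many \texttt{unify} calls. Finally, the outer loop of Algorithm~\ref{algo:rpta} over SCCs in bottom-up order runs finitely many times (finitely many SCCs), and the inner fixpoint loop over an SCC terminates by the same global measure: summing $|N_p|$ over $p$ in the SCC can only decrease, and the total number of $(\text{call site}, \text{node of callee})$ pairs on which $\alpha$ is undefined can only decrease; once neither can change, the fixpoint is reached.

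The main obstacle I anticipate is being careful about the interaction between the two components of the measure: extending an $\alpha$ function is \emph{not} a decrease in $|N_p|$, and a \texttt{unify} can in principle make a previously-defined $\alpha$ value point to a node that has just been absorbed into another---so I need the measure to be a lexicographic or product order on $(\sum_p |N_p|,\ \#\{(\text{call site } c,\ v \in N_{q_c}) : \alpha_c(v)\ \text{undefined}\})$ and to check that \emph{every} atomic step strictly decreases one component without increasing the earlier one; in particular a node merge in $G_q$ for a callee $q$ does not happen during the analysis of $p$ (information propagates only upward, callee graphs are frozen when analyzing callers), which is exactly what keeps $N_q$ fixed and makes the $\alpha$-extension count well-defined. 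Once that bookkeeping is pinned down, termination follows by well-founded induction.
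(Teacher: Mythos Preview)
Your proposal is correct and takes essentially the same approach as the paper: both arguments rest on the finiteness of the initial region points-to graphs and on the fact that \texttt{unify}, the only graph-modifying operation, strictly decreases the node count and never increases the edge count. The paper's proof is much terser than yours---it forgoes the explicit lexicographic measure and the per-loop bookkeeping, simply observing that a finite starting graph plus a node-decreasing update rule forces termination, with the finiteness of each $\alpha$ following immediately from the finiteness of the node sets it maps between.
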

\begin{proof}
An $\alpha$ function at a call site is a mapping
from a subset of the nodes in the callee's region points-to graph
to a subset of the nodes in the caller's region points-to graph.
Therefore if we can show that the sets of nodes are finite
then so is the $\alpha$ function.
This then implies that the termination
of the region points-to analysis solely depends on
the finiteness of the region points-to graphs.

For any procedure, the Algorithm~\ref{algo:intraproc}
starts with a region points-to graph
having a finite number of nodes and edges.
The analysis uses only the $\mathit{unify}$ operation
(Algorithm~\ref{algo:unify}) to change the graphs.
This always decreases the number of nodes
and never increases the number of edges.
Therefore the analysis must, at some point, terminate.
In the extreme case,
the final region points-to graph of a procedure
contains only one node and maybe some self-edges.
\end{proof}

\begin{theorem}
\label{theorem:loc_and_sharing}
The graphs that result from the region points-to analysis of a program
represent all the locations of the terms
that can possibly be constructed during the execution of the program,
and the possible sharing among the terms.
\end{theorem}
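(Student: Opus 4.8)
The plan is to prove the theorem by induction on the structure of an execution trace, showing that the region points-to graphs form a sound abstraction of the runtime state. First I would fix a notion of a concrete heap state: a mapping from program variables to terms (closed over the heap cells they reach), decorated so that each heap cell is tagged with the region variable of the $\TG$ node that its principal functor corresponds to — i.e., a cell representing a principal functor of type \code{t} that is reachable as the \code{ti}-argument of an enclosing \code{f}-cell gets tagged with $R^{ti}$. Then I would define the \emph{abstraction relation} between such a concrete state at a program point inside procedure $p$ and the graph $G_p$: a cell tagged $R$ is ``located at'' the node $n$ with $R \in$ (the region variable represented by) $n$, a variable \code{X} bound to a term whose principal functor-cell is located at $n$ must have $\code{X} \in \vars{n}$, and whenever two heap cells are the same cell (aliasing) their locating nodes coincide, and whenever one cell points to another along argument position $(f,i)$ there is an edge $(n,(f,i),m)$ in $G_p$ between their locating nodes. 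The theorem is then the statement that this abstraction relation is preserved by every execution step and that it correctly captures all reachable cells and all aliasing.

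The key steps, in order, are: (i) establish that $\mathit{init\_rptg}(\code{X})$ produces a graph that correctly abstracts \emph{any} ground term of \code{type(X)} — this is essentially the observation that a term of a recursive type is covered by the finitely many regions of its $\TG$ graph, which the excerpt already notes; (ii) show that the $\mathit{unify}$ operation (Algorithm~\ref{algo:unify}) is \emph{sound merging}: if a graph $G$ abstracts a state and we have runtime evidence that two nodes' regions can hold the same cell or aliased cells, then $\mathit{unify}(n,m)$ yields a graph that still abstracts that state — the congruence closure (unifying children with matching labels) is exactly what is needed to keep the ``same cell $\Rightarrow$ same node'' and ``points-to $\Rightarrow$ edge'' invariants; (iii) case-analysis on atomic goals for the intraprocedural part: a test unification creates no cells and no aliasing; an assignment \code{Y := X} makes \code{Y} alias \code{X}'s whole term, handled by $\mathit{unify}(n_X,n_Y)$; a deconstruction/construction \code{X = f(Y_1,\dots,Y_n)} makes each \code{Y_i} alias the $(f,i)$-subterm of \code{X}'s term, handled by $\mathit{unify}(\mathit{node}(n_X,(f,i)), n_{Y_i})$, and for a construction additionally allocates one fresh cell located at $n_X$, which is already anticipated by the graph — in every case the post-state is abstracted by the updated graph; (iv) the interprocedural step: at a call $q(\args[Y]{n})$, the callee execution, by the induction hypothesis applied to $G_q$, produces exactly the aliasing among the $X_i$'s terms recorded in $G_q$; the $\alpha$ map transports node identities from $G_q$ to $G_p$, rule P1 performs the corresponding merges via $\mathit{unify}$ and rule P2 extends $\alpha$ along shared edges, so that every alias and every new edge created inside $q$ is reflected in $G_p$; (v) assemble these into the induction over traces, using that Algorithm~\ref{algo:rpta} processes SCCs bottom-up and iterates each SCC to a fixpoint (which exists and is reached by Theorem~\ref{theorem:termination}), so that when we reason about a call the callee graph is already the final, correct one — with the fixpoint iteration inside an SCC handling (mutual) recursion.

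I expect the main obstacle to be step (iv) together with the fixpoint argument for recursive (mutually recursive) SCCs: one must argue that the fixpoint reached by the P1/P2 propagation over \emph{one} level of call nesting actually captures aliasing introduced by arbitrarily deep call chains. The resolution is that the fixpoint condition of Algorithm~\ref{algo:interproc} is stated over all call sites simultaneously and iterated to stability, so that any aliasing path through the heap — however many procedure boundaries it crosses — is matched by a finite chain of edges in the (finite) graphs, and finiteness of the graphs (Theorem~\ref{theorem:termination}) bounds the length of such chains; a secondary subtlety is making precise what ``all terms that can possibly be constructed'' means, which I would handle by quantifying over all finite execution traces and all runtime nondeterministic choices, so that the union of reachable concrete states is abstracted by the single computed graph. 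A lesser obstacle is bookkeeping the region tags on heap cells consistently across deconstructions (where a subterm cell keeps the region tag it received at construction time), but this follows directly from the type-directed definition of $\TG$ and causes no real difficulty once the concrete-state tagging is set up carefully.
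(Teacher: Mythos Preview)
Your proposal is correct and covers the same technical ground as the paper, but the decomposition is genuinely different. The paper splits the theorem into two independent parts---locations and sharing---and proves each directly rather than via a single abstraction relation and trace induction. For locations, the paper simply observes that every variable gets a node from \texttt{init\_rptg} and that the \texttt{unify} operation preserves membership in some \texttt{vars} set. For sharing, the paper factors the argument into a lemma about explicit unifications (your step (iii)), two lemmas about how the $\alpha$ function and rules P1/P2 transport same-node sharing and edge structure from callee to caller (a more syntactic version of your step (iv)), and a short theorem combining these. Recursion is handled not by induction on trace depth but by the observation that any sharing must ultimately originate in some unification, so Lemma~\ref{lemma:sharing_unifs} catches it locally and Theorem~\ref{theorem:sharing_calls} propagates it upward through the fixpoint. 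Your approach buys a cleaner, more uniform soundness statement in the style of abstract interpretation, and makes the invariant being preserved explicit; the paper's approach is more modular and avoids having to formalize a concrete heap model, at the cost of a somewhat informal treatment of the recursive case (exactly the point you flagged as the main obstacle).
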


The theorem has two parts, one about locations and the other about sharing.
We prove each part separately.
\begin{proof}[Proof (Locations)]
During the execution of a program,
a variable can get bound to a compound term.
However, that compound term must be built step-by-step
using construction unifications.
In such a step, a construction unification allocates memory
to store only the principal functor that
the variable on its left-hand side is bound to.
Therefore to show that the graphs represent
all the locations of a compound term,
it suffices to show that the graphs represent the locations
of the variables in the left-hand sides of construction unifications.

Consider a procedure.
The region points-to analysis of the procedure starts with
the intraprocedural analysis (Algorithm~\ref{algo:intraproc}) that
assigns a set of nodes to each variable based on
the type-based region graph of the type of the variable.
These nodes represent the regions where
a term to which the variable is possibly bound is stored.
Moreover, the variable is assigned a location by the fact that
it is added to the $\varset$ set of the node
where the principal functor of the term it is bound to is stored.
During the analysis, this node may be removed from the graph
when it is unified with another node.
However, regardless of where this happens,
in the intraprocedural or in the interprocedural analysis,
the $\mathit{unify}$ operation ensures that
the remaining node now represents the location of the variable.
\end{proof}

Now, for the second part of Theorem~\ref{theorem:loc_and_sharing},
we will show that
all sharing between the terms is represented in the region points-to graphs.
For a procedure, the sharing among its variables is created
either by explicit unifications in the procedure,
or by unifications hidden inside the procedures it calls.
The lemma below deals with explicit unifications.

\begin{lemma}[Sharing created by explicit unifications]
\label{lemma:sharing_unifs}
    If a unification explicitly appears in a procedure,
    the sharing created by the unification
    is represented in the region points-to graph of the procedure.
\end{lemma}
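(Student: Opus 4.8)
The plan is to proceed by case analysis on the three kinds of unification that can appear explicitly in a procedure in superhomogeneous form, and to verify in each case that Algorithm~\ref{algo:intraproc} captures exactly the sharing that the unification creates, i.e.\ that after the corresponding \emph{unify} call the two variables involved region-share in the sense of Definition~\ref{definition:region-sharing}. First I would recall that a test unification \code{X == Y} creates no sharing, so there is nothing to prove for it; the algorithm correctly skips it. This leaves assignments, constructions, and deconstructions.

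For an assignment \code{Y := X}, the two variables are bound to the same term, so the sharing to be represented is total: every node reachable from $n_Y$ must coincide with (or be reachable in lock-step with) the corresponding node reachable from $n_X$. The algorithm calls $\mathit{unify}(n_X, n_Y)$, which merges the two nodes into one; since \emph{unify} recursively merges children with matching labels, the whole subgraphs rooted at the two locations are identified. Hence after the call $n_X = n_Y$, and trivially any node reachable from one is reachable from the other, so \code{X} and \code{Y} region-share. For a construction \code{X <= f(Y_1,\ldots,Y_n)} or a deconstruction \code{X => f(Y_1,\ldots,Y_n)}, the sharing created is between \code{X} and each \code{Y_i}: \code{Y_i} is bound to the $i$-th argument subterm of \code{X}'s principal functor \code{f}. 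In the region model this is exactly the edge $(n_X, (f,i), m)$ where $m = \mathit{node}(n_X,(f,i))$. The algorithm calls $\mathit{unify}(\mathit{node}(n_X,(f,i)), n_{Y_i})$ for each $i$, which identifies the target of that edge with the location of \code{Y_i}. After this, $n_{Y_i}$ is reachable from $n_X$ (along the edge labelled $(f,i)$), so \code{X} and \code{Y_i} region-share, as required.

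Two points need a little care, and I expect the second to be the main obstacle. The minor point is that $\mathit{node}(n_X,(f,i))$ must be defined when the algorithm reaches the unification — this holds because the initialization step $G_p \uplus \mathit{init\_rptg}(\code{X})$ builds, from the type-based region graph $\TG_{\mathit{type}(\code{X})}$, an out-edge $(n_X,(f,i),m)$ for \emph{every} functor \code{f} in \code{X}'s type and every argument position $i$ (this is precisely how a $\TG$ graph is constructed), and subsequent \emph{unify} operations preserve the existence of an out-edge with each label, merely possibly redirecting it. The more delicate point is that later unifications — for other unifications in the same procedure, or \emph{unify} calls triggered by the interprocedural analysis — may further merge nodes, so the node pair witnessing the sharing for a given unification may no longer be the pair that was explicitly merged. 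Here I would argue monotonicity: \emph{unify} only ever merges nodes and redirects edges onto the surviving representative, and reachability between (the representatives of) two nodes, once established, is never destroyed by further merging. Concretely, if after processing unification $u$ the node $m$ is reachable from $n_X$, and a later \emph{unify} merges $m$ into $m'$ and/or $n_X$ into $n_X'$, then $m'$ is still reachable from $n_X'$ along the image of the old path. Thus the region-sharing established for each explicit unification persists to the final graph $G_p$, which is what the lemma asserts.
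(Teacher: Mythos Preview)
Your proposal is correct and follows essentially the same case analysis as the paper: ignore tests, merge $n_X$ and $n_Y$ for assignments, and for constructions/deconstructions observe that after $\mathit{unify}(\mathit{node}(n_X,(f,i)), n_{Y_i})$ the edge $(n_X,(f,i),n_{Y_i})$ witnesses the sharing. Your additional arguments---that $\mathit{node}(n_X,(f,i))$ is always defined thanks to the type-based initialization, and that reachability is monotone under further $\mathit{unify}$ calls so the sharing persists to the final graph---are sound refinements that the paper leaves implicit.
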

\begin{proof}
Explicit unifications are handled by Algorithm~\ref{algo:intraproc},
the intraprocedural analysis.
Test unifications do not create sharing, so we can ignore them.
Consider an assignment unification.
Algorithm~\ref{algo:intraproc} unifies
the nodes of its left and right variables and keeps
these two variables in the $\varset$ set of the unified node.
This represents their sharing.

The only remaining form of unification in Mercury's superhomogeneous form
is $X=f(\ldots,X_i,\ldots)$.
When processing such a unification,
Algorithm~\ref{algo:intraproc} calls $\mathit{unify}(m, n_{X_i})$
where $m = \mathit{node}(n_X, (f, i))$.
This adds $X_i$ to $\vars{m}$.
After the unification, the edge $(n_X,(f,i),m)$,
which was already in the region points-to graph,
has become $(n_X,(f,i),n_{X_i})$.
This represents the sharing between $X$ and $X_i$.
\end{proof}

For procedure calls, we consider a procedure $p$ that invokes $q$.
As before, use $X_i$ to denote the formal parameters
and $Y_i$ to denote the actual parameters.
We call
$G^{sub}_p(N^{sub}_p,E^{sub}_p)$ the subgraph
of the region points-to graph of $p$ rooted at the nodes of the $Y_i$s and
$G^{sub}_q(N^{sub}_q,E^{sub}_q)$ the subgraph
of the region points-to graph of $q$ rooted at the nodes of the $X_i$s.

In order to prove that
all the region-sharing in $G^{sub}_q$ is also in $G^{sub}_p$,
we consider two arbitrary formal arguments $X_i$ and $X_j$ that share.
By Definition~\ref{definition:region-sharing},
this means that there exists a node in $G^{sub}_q$
that can be reached from both $n_{X_i}$ and $n_{X_j}$.
There are two cases.
Either $n_{X_i}=n_{X_j}$, which means that
the sharing between $X_i$ and $X_j$ is represented in $G^{sub}_q$
by them both being in the $\varset$ set of the same node,
or $n_{X_i} \not= n_{X_j}$,
which means that the sharing between them
is represented by some node being reachable from both of them.

The following lemma shows that
region-sharing of the first kind in $G^{sub}_q$
is also reflected in $G^{sub}_p$.
\begin{lemma}
\label{lemma:sharing_same_node}
The region-sharing between the formal arguments that are in the
$\varset$ set of a node $n_q \in N^{sub}_q$ is also in $G^{sub}_p$.
\end{lemma}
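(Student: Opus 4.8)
The plan is to trace how the $\alpha$ function built by Algorithm~\ref{algo:interproc} is forced to collapse the caller-side nodes of the actual arguments whenever the corresponding formal arguments land in the $\varset$ set of a single callee node $n_q$. Suppose $X_i$ and $X_j$ both lie in $\vars{n_q}$ for some $n_q \in N^{sub}_q$, i.e.\ $n_{X_i} = n_{X_j} = n_q$. The interprocedural analysis first sets $\alpha(n_{X_k}) = n_{Y_k}$ for every argument position $k$. In particular it sets $\alpha(n_q) = n_{Y_i}$ from position $i$ and $\alpha(n_q) = n_{Y_j}$ from position $j$. The ``Ensure $\alpha$ is a function'' loop then detects exactly this situation --- $n_{X_i} = n_{X_j}$ but (potentially) $n_{Y_i} \neq n_{Y_j}$ --- and calls $\mathit{unify}(n_{Y_i}, n_{Y_j})$, merging the two caller nodes into one. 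After this merge we have $n_{Y_i} = n_{Y_j}$ in $G_p$, so $Y_i$ and $Y_j$ reside in the $\varset$ set of the same node, which is precisely the first kind of region-sharing (Definition~\ref{definition:region-sharing}, via the trivial reachability of a node from itself). Since $n_{Y_i}, n_{Y_j} \in N^{sub}_p$ by construction of $G^{sub}_p$, this sharing is in $G^{sub}_p$.

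The remaining point is that this conclusion is not destroyed by later steps of the analysis. The only operation that changes $G_p$ is $\mathit{unify}$ (Algorithm~\ref{algo:unify}), and merging nodes can only make more variables share a node, never fewer --- the $\varset$ set of the surviving node is the union of the two original $\varset$ sets. So once $Y_i$ and $Y_j$ are in the same $\varset$ set they stay that way, both in $G_p$ and hence in its relevant subgraph. It is also worth noting that $n_q \in N^{sub}_q$ guarantees $\alpha(n_q)$ is defined as soon as the depth-first traversal (or the initial argument mapping) reaches it, so there is no gap in which position $i$ assigns $\alpha(n_q)$ but position $j$ does not; both assignments happen in the same initialization loop.

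The main obstacle is bookkeeping rather than conceptual: one has to be careful that the ``Ensure $\alpha$ is a function'' loop really does catch the case $n_{X_i} = n_{X_j}$, because $n_{X_i}$ and $n_{X_j}$ are nodes of the callee graph $G_q$ and might have been produced by earlier iterations of the outer \textbf{repeat} loop (after some $\mathit{unify}$ on $G_q$ during $q$'s own analysis, or on $G_p$ affecting which actual-argument node is current). I would handle this by observing that the loop is re-run to a fixpoint: whenever $\alpha$ would fail to be a function because two formal-argument nodes have coalesced, the loop unifies the corresponding actual-argument nodes, and the \textbf{until} condition ensures we do not stop until no such discrepancy remains. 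Hence at fixpoint, $n_{X_i} = n_{X_j}$ implies $\alpha(n_{X_i}) = \alpha(n_{X_j})$, i.e.\ $n_{Y_i} = n_{Y_j}$, giving the claim.
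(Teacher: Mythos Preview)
Your proof is correct and follows essentially the same approach as the paper's own proof: both argue that the ``Ensure $\alpha$ is a function'' step in Algorithm~\ref{algo:interproc} unifies the actual-argument nodes $n_{Y_i}$ and $n_{Y_j}$ whenever the corresponding formal arguments share a single callee node $n_q$, placing $Y_i$ and $Y_j$ in the same $\varset$ set in $G^{sub}_p$. Your version is more detailed than the paper's---in particular your observations about stability under later $\mathit{unify}$ operations and about the fixpoint iteration are sound refinements the paper leaves implicit---but the core argument is the same.
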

\begin{proof}
The interprocedural analysis (Algorithm~\ref{algo:interproc})
first builds an $\alpha$ relation
that represents the connections between $G^{sub}_q$ and $G^{sub}_p$.
The initial $\alpha$ relation connects
the nodes of the formal arguments with
the nodes of the corresponding actual arguments.
In this $\alpha$ relation, it is possible that a node in $G^{sub}_q$
whose $\varset$ set contains more than one formal argument
is connected to more than one node of the actual arguments in $G^{sub}_p$.
The region-sharing of such formal arguments
(represented by the fact that they are in the same $\varset$ set)
is brought into $G^{sub}_p$ when Algorithm~\ref{algo:interproc}
unifies all the nodes in $G^{sub}_p$
that any single node in $G^{sub}_q$ is related to.
This ensures that the actual arguments corresponding to the formal arguments
that are in the $\varset$ set of a node $n_q$ in $G^{sub}_q$
will be in the $\varset$ set of a single node $n_p$ in $G^{sub}_p$,
with $\alpha(n_q) = n_p$.
\end{proof}

For the region-sharing of the second kind,
we first introduce the following lemma.
\begin{lemma}
\label{lemma:sharing_diff_nodes}
If $n$ and $m$ are in $N^{sub}_q$ such that
$(n, (f,i), m) \in E^{sub}_q$
and $\alpha(n) \in N^{sub}_p$,
then $\alpha(m) \in N^{sub}_p$
and also $(\alpha(n), (f,i), \alpha(m)) \in E^{sub}_p$.
\end{lemma}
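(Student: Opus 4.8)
The plan is to prove Lemma~\ref{lemma:sharing_diff_nodes} by examining exactly how the $\alpha$ function gets extended during the interprocedural analysis (Algorithm~\ref{algo:interproc}), and in particular by appealing to the termination argument to reason about the state of $\alpha$ and of $G_p$ at the fixpoint. The key observation is that the hypothesis $\alpha(n)$ being defined means the depth-first traversal of $G_q$ starting from the $n_{X_i}$'s has reached $n$; since $(n,(f,i),m) \in E^{sub}_q$ is an out-edge of $n$, the traversal will also follow this edge and visit $m$. So the real content is: when the traversal is at $n$ with $\alpha(n) = n_p$ defined, and it follows the edge $(n,(f,i),m)$, then (a) there is a matching edge $(n_p, (f,i), m_p)$ in $E^{sub}_p$, and (b) after rules P1/P2 are applied, $\alpha(m)$ is defined and points to the (possibly unified) target $m_p$.

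The steps I would carry out, in order. First, establish the edge-existence claim (a): whenever $\alpha(n) = n_p$, the node $n_p$ has an out-edge with label $(f,i)$ matching every out-edge of $n$ with that label in $G_q$. This follows because $n_p$ is the image under $\alpha$ of either a formal-argument node or a node reached earlier in the traversal, and in both cases the structural correspondence between $G^{sub}_q$ and $G^{sub}_p$ is maintained — but the cleanest route is to note that the region points-to graphs are \emph{saturated} with respect to the type-based region graphs: every node has exactly one out-edge for each relevant label $(f,i)$ coming from the initialization via $\mathit{init\_rptg}$, and $\mathit{unify}$ preserves this (it merges rather than deletes edges, and re-unifies children to keep out-labels unique). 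So $n_p$ does have an out-edge $(n_p,(f,i),m_p)$. Second, with both edges in hand, I do a case split on whether $\alpha(m)$ is currently defined: if it is undefined, rule P2 fires and sets $\alpha(m) = m_p$, giving $(\alpha(n),(f,i),\alpha(m)) = (n_p,(f,i),m_p) \in E^{sub}_p$ as required; if it is already defined, say $\alpha(m) = m'_p$, then either $m'_p = m_p$ and we are immediately done, or $m'_p \neq m_p$ and rule P1 fires, calling $\mathit{unify}(m_p, m'_p)$, after which $m_p$ and $m'_p$ are the same node and the edge $(n_p,(f,i),\cdot)$ points to it, so again $(\alpha(n),(f,i),\alpha(m)) \in E^{sub}_p$. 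Third, I would note that since Theorem~\ref{theorem:termination} guarantees the \REPEAT loop reaches a fixpoint, and the statement of the lemma refers to the final graphs, it suffices that the property holds at the fixpoint — and the case analysis above shows that any violation would cause P1 or P2 to fire, contradicting fixpoint-ness. So the lemma holds of the final $G^{sub}_p$ and the final $\alpha$.

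The main obstacle I anticipate is claim (a) — arguing rigorously that $\alpha(n)$ being defined forces $\alpha(n)$ to have a matching out-edge in $E^{sub}_p$. This needs an auxiliary invariant, maintained across the whole analysis, that every node in every region points-to graph carries exactly the out-edge labels dictated by the type-based region graph of the corresponding type (equivalently, that the graphs remain "type-respecting" under $\mathit{unify}$, $\uplus$, and the interprocedural rules). Once that invariant is in place the rest is a routine unwinding of P1/P2, but stating and proving the invariant — especially checking that $\mathit{unify}$'s final loop, which re-unifies any two children reachable by the same label, genuinely restores uniqueness of out-labels without destroying any needed label — is the delicate part. A secondary subtlety is that $\mathit{unify}$ calls during P1 can rename $\alpha$'s image nodes on the fly, so I would phrase the argument in terms of the graph and $\alpha$ at the fixpoint rather than trying to track every intermediate state.
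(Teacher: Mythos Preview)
Your proposal is correct and follows essentially the same route as the paper: establish that the matching out-edge $(\alpha(n),(f,i),m_p)$ exists in $E^{\mathit{sub}}_p$, then do the three-way case split on $\alpha(m)$ (already equal to $m_p$; defined but different, so rule P1 unifies; undefined, so rule P2 extends $\alpha$). The paper dispatches your claim (a) in a single sentence by appealing directly to well-typedness---since actual and formal arguments have the same type, the corresponding $(f,i)$-edge must be present in $G^{\mathit{sub}}_p$---which is exactly your ``type-respecting'' invariant stated more directly, so the obstacle you anticipated is not as delicate as you feared.
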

\begin{proof}
In a well-typed Mercury program,
an actual argument must have the same type
as the corresponding formal parameter.
Therefore if $(n, (f,i), m)$ is in $E^{\mathit{sub}}_q$,
then there must exist a node $k \in N^{\mathit{sub}}_p$ such that
$(\alpha(n), (f,i), k)$ is in $E^{\mathit{sub}}_p$.
If $\alpha(m) = k$, our proof is done.
If $\alpha(m) = m' \not= k$,
then Algorithm~\ref{algo:interproc}
applies rule P1 to unify $k$ and $m'$,
after which again we have $\alpha(m) = k$.
If $\alpha(m)$ is undefined,
the algorithm applies rule P2 to produce $\alpha(m) = k$.
\end{proof}

Lemma~\ref{lemma:sharing_diff_nodes} essentially shows that
the $\alpha$ function extends to all the nodes in $N^{sub}_q$
reachable from the formal parameters,
and that all the edges connecting these nodes in $E^{sub}_q$
have their counterparts in $G^{sub}_p$.

\begin{theorem}[Sharing created by procedure calls]
\label{theorem:sharing_calls}
All the region-sharing in $G^{sub}_q$ is also in $G^{sub}_p$.
\end{theorem}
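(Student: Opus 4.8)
The plan is to combine the two lemmas just established, reducing the statement to the two cases of region-sharing identified in the discussion preceding it. Let $X_i$ and $X_j$ be two formal arguments of $q$ that region-share in $G^{sub}_q$; by Definition~\ref{definition:region-sharing} there is a node $t \in N^{sub}_q$ reachable from both $n_{X_i}$ and $n_{X_j}$. First I would dispatch the degenerate case $n_{X_i} = n_{X_j}$: here the sharing is witnessed by $X_i$ and $X_j$ lying in the $\varset$ set of a common node, and Lemma~\ref{lemma:sharing_same_node} gives directly that the corresponding actuals $Y_i$ and $Y_j$ end up in the $\varset$ set of a single node in $G^{sub}_p$, so they region-share there as well.

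For the principal case $n_{X_i} \neq n_{X_j}$, the witnessing node $t$ is reached from $n_{X_i}$ by a directed path $n_{X_i} = n^0 \to n^1 \to \cdots \to n^k = t$ of edges in $E^{sub}_q$, and similarly from $n_{X_j}$ by a path $m^0 \to \cdots \to m^l = t$. I would argue by induction on the length of such a path, using Lemma~\ref{lemma:sharing_diff_nodes} as the induction step: the base case is $\alpha(n_{X_i}) = n_{Y_i} \in N^{sub}_p$, which holds because $\alpha$ is initialized on the argument nodes; and if $\alpha(n^s) \in N^{sub}_p$ and $(n^s, (f,i), n^{s+1}) \in E^{sub}_q$, then the lemma yields $\alpha(n^{s+1}) \in N^{sub}_p$ together with the edge $(\alpha(n^s), (f,i), \alpha(n^{s+1})) \in E^{sub}_p$. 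Chaining these edges shows that $\alpha(t)$ is reachable in $G^{sub}_p$ from $\alpha(n_{X_i}) = n_{Y_i}$; applying the same induction along the second path shows $\alpha(t)$ is reachable from $\alpha(n_{X_j}) = n_{Y_j}$. Hence $\alpha(t)$ is a node of $G^{sub}_p$ reachable from both $n_{Y_i}$ and $n_{Y_j}$, so $Y_i$ and $Y_j$ region-share in $G^{sub}_p$.

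Two small points need care. One is well-definedness of $\alpha$ as a function on the nodes it is applied to: the ``ensure $\alpha$ is a function'' phase of Algorithm~\ref{algo:interproc}, which unifies $n_{Y_i}$ with $n_{Y_j}$ whenever $n_{X_i} = n_{X_j}$ but $n_{Y_i} \neq n_{Y_j}$, guarantees the initial $\alpha$ is single-valued, and rules P1/P2 preserve this; I would note this explicitly since the induction applies $\alpha$ to nodes, not just to argument nodes. The second is that the fixpoint loop of Algorithm~\ref{algo:interproc} runs until neither $G_p$ nor any $\alpha$ changes, so all the extensions of $\alpha$ and all the unifications forced by Lemmas~\ref{lemma:sharing_same_node} and~\ref{lemma:sharing_diff_nodes} have actually been carried out in the final graph; the induction above should really be read as a statement about that final, stabilized $G^{sub}_p$ and its $\alpha$. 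I do not anticipate a genuine obstacle here — the real work was done in proving Lemma~\ref{lemma:sharing_diff_nodes}, which already packages the well-typedness argument that makes the edges line up — so the main thing to get right is simply the clean induction on path length and the remark that $\alpha$ remains a function throughout.
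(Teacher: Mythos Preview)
Your proposal is correct and follows exactly the approach the paper intends: the paper's own proof is a single sentence stating that the result follows from Lemmas~\ref{lemma:sharing_same_node} and~\ref{lemma:sharing_diff_nodes}, and you have simply spelled out the case split (same node versus different nodes) and the induction on path length that are implicit in the discussion the paper gives just before the theorem. Your extra remarks about $\alpha$ being kept single-valued and about working in the fixpoint-stabilized graph are sound and, if anything, make the argument more careful than the paper's terse version.
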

\begin{proof}[Proof (Sharing created by procedure calls)]
The proof of Theorem~\ref{theorem:sharing_calls} follows from
Lemmas~\ref{lemma:sharing_same_node} and~\ref{lemma:sharing_diff_nodes}.
\end{proof}

Note that
in recursive procedures, where the caller and the callee are the same,
one invocation of interprocedural analysis (Algorithm~\ref{algo:interproc})
will not necessarily be sufficient to reflect all sharing
from $G^{sub}_q$ to $G^{sub}_p$,
since in that case the very act of updating $G^{sub}_p$
updates $G^{sub}_q$ as well.
This is why Algorithm~\ref{algo:rpta} does a fixpoint iteration.

Now we can continue with the proof of the sharing-among-terms part of
Theorem~\ref{theorem:loc_and_sharing}.
\begin{proof}[Proof (Sharing among terms)]
The proof of the second part of
Theorem~\ref{theorem:loc_and_sharing}
follows from Lemma~\ref{lemma:sharing_unifs} and
Theorem~\ref{theorem:sharing_calls},
which show that the sharing created by explicit unifications
as well as by procedure calls in a procedure is all represented in
the region points-to graph of the procedure.

When a procedure is recursive or mutually recursive,
it is possible that the region points-to graph of a called procedure
(recursive or mutually recursive) has not fully represented the sharing
among its formal arguments.
However, if a program ever creates sharing,
ultimately this creation must involve a unification.
Lemma~\ref{lemma:sharing_unifs} shows that
this sharing is represented in the region points-to graph of
the procedure containing the unification,
and Theorem~\ref{theorem:sharing_calls} shows that
the sharing will also be represented in
the region points-to graphs of any procedures that invoke the procedure.
\end{proof}

In the rest of the paper,
when we mention region points-to graphs,
we mean the ones obtained by the region points-to analysis of the program.

\subsection{Regions that a Procedure Allocates Into}
\label{seCallocCregions}

During the region points-to analysis of a procedure,
we can track the regions
that are \emph{possibly} allocated into in the procedure.
A construction unification is the only construct in Mercury
that allocates memory.
When processing a construction unification \code{X <= f(\ldots)}
we mark the node $n_X$ as \emph{allocated}.
When two nodes are unified, if one node is marked as allocated
then the unified node is also marked as allocated.
At a call site,
if a node $n$ reachable from a formal parameter in the callee
is marked as allocated,
and $\alpha(n) = m$,
then we mark $m$ in the caller as allocated as well.
We call the set of nodes in procedure $p$ marked in this way \allocation{p}.
In the \emph{quicksort} example of
Figure~\ref{fig:running:superhomogeneous}
and Figure~\ref{fig:running:rptgraph},
$\allocation{\code{split}} = \{\code{R3}, \code{R4}\}$, and
$\allocation{\code{qsort}} = \{\code{R8}, \code{R9}, \code{R10}\}$,

\section{Region Liveness Analysis}
\label{seClra}

After the region points-to analysis,
we know the region variables of each procedure
and how the program variables are distributed
over the regions to which these region variables are bound.

In this section, we construct a region liveness analysis
that approximates the lifetimes of the region variables,
i.e.\ their liveness,
to decide when a region needs to be created
and when it can safely be reclaimed.
We make a distinction between local liveness and global liveness.
Local liveness concerns the lifetime of the region variable
inside the procedure itself, namely when we consider the procedure alone.
Global liveness concerns liveness with respect to the whole program,
namely when we take into account the call sites that call the procedure.
We show how we compute local liveness in Section~\ref{seClivenessClocal},
while Section~\ref{seClivenessCglobal} shows how we compute global liveness.

\subsection{Technical Background}

A region variable being live means that
(a) it should be bound to a region, and
(b) that region may possibly be used in future (forward) execution.
During its lifetime, the region bound to a region variable
may be allocated into by procedures other than the one that created the region,
so we often need to pass region variables as arguments of procedures.

Consider a procedure $p$.
We associate a \emph{program point} with every atomic goal in the body of $p$.
An \emph{execution path} in $p$ is a sequence of program points,
such that at runtime the atomic goals associated with these program points
are executed in sequence.
We denote an execution path by \ep,
in which the ${atom}_i$'s are the atomic goals involved, and
the indexes $i$'s are a dense sequence
giving the order among the atomic goals in this execution path.
The function \pp{{atom}} returns the program point associated with ${atom}$.
We use the notions
\emph{before} and \emph{after} a program point.
Before a program point means
right before the associated atomic goal is going to be executed;
while after a program point means its atomic goal has just been completed.
The set of live region variables at a program point is computed via the set
of live variables at the program point.
We also use two functions,
$\inargs{\mathit{atom}}$ and $\outargs{\mathit{atom}}$,
that respectively return
the sets of input and output arguments of $\mathit{atom}$.
For specialized unifications they are defined in Table~\ref{table:lva:unif}.
If $\mathit{atom}$ is a procedure's head, they return formal parameters,
whereas if $\mathit{atom}$ is a call they return actual parameters.
Those sets can be computed from the mode information of Mercury procedures.
\begin{table}[htb]
  \caption{Input and output arguments of unifications.}
  \begin{tabular}{l|l|*{2}{c|}}
    \hline
    \hline
    & & $\mathit{in\_args}$ & $\mathit{out\_args}$\\
    \hline
    construction   & $X$ \code{<=} $f(X_1,\dots, X_n)$ &  $\{X_1,\dots,X_n\}$  & $\{X\}$             \\
    deconstruction & $X$ \code{=>} $f(X_1,\dots, X_n)$ &  $\{X\}$               & $\{X_1,\dots,X_n\}$ \\
    test           & $X$ \code{==} $Y$                 &  $\{X,Y\}$             & $\emptyset$         \\
    assign         & $X$ \code{:=} $Y$                 &  $\{Y\}$           & $\{X\}$             \\
    \hline
    \hline
  \end{tabular}
  \label{table:lva:unif}
\end{table}

\subsection{Live Region Variables at a Program Point}
\label{seClivenessClocal}

In this subsection we specify the analysis
that computes the local liveness of region variables in a procedure.
We express local liveness by the sets of region variables
that are live before and after every program point in a procedure.
The liveness of a region variable at a program point is determined
by the liveness of the variables that are stored in the corresponding region.

\noindent\textbf{Live variables.}
A variable is live \emph{before} a program point
if it has been instantiated before the point
and may be used in the goal associated with the program point or after it.
A variable is live \emph{after} a program point
if it has been instantiated before or at the point
and may be used after the point.

\begin{algorithm}
    \small
    \caption{$\mathit{lva}(p)$: live variable analysis of a procedure $p$.}
    \label{algo:lva}
\begin{algorithmic}
    \REQUIRE{$p$ in superhomogeneous form.}
    \ENSURE{The sets of live variables
    before ($\LVbefore$) and after ($\LVafter$) all program points in $p$.}

    \FORALL{program points $i$ in $p$}
        \STATE $\LVb{i} = \LVa{i} = \emptyset$
    \ENDFOR
    \FORALL{$\mathit{ep} \equiv \ep$ in $p$}
        \FOR{$j = n$ downto 1}
            \STATE $i = \pp{{atom}_j}$
            \IF{$j = n$}
                \STATE $\LVa{i} = \outargs{p}$
            \ELSE
                \STATE $\LVa{i} = \LVa{i} \cup \LVb{\pp{{atom}_{j+1}}}$
            \ENDIF
            \IF{$j = 1$}
                \STATE $\LVb{i} = \inargs{p}$
            \ELSE
                \STATE $\LVb{i} = (\LVa{i} \setminus \outargs{{atom}_j})
                \cup \inargs{{atom}_j}$
            \ENDIF
        \ENDFOR
    \ENDFOR

\end{algorithmic}
\normalsize
\end{algorithm}

The live variable analysis for a procedure $p$
is defined in Algorithm~\ref{algo:lva}.
It traverses each execution path ($\mathit{ep}$) backwards,
starting with the last program point,
computing sets of live variables along the way.
At each program point, we update its $\LVafter$ and $\LVbefore$ sets.
The $\LVafter$ of the last program point(s) is defined to be \outargs{p},
while the $\LVbefore$ of the first program point(s) will be \inargs{p}.
This assumes that every procedure uses all its arguments,
but since we run this analysis \emph{after} a Mercury compiler pass
that removes unused arguments, this is a justified assumption.

\noindent\textbf{Live region variables.}
A region variable is live before (after) a program point
if its node is reachable from a variable that is live before (after) the
program point.

The set of nodes that are reachable from a variable $X$ is defined as follows:
\begin{center}
    \small
    $\Reach{X} = \{n_X\} \cup \{m \mid
    \exists (n_X, \mathit{label}_0, n_1), \ldots,
    (n_{i-1}, \mathit{label}_{i-1}, n_i) \in E \land m = n_i$\}.
    \normalsize
\end{center}

The live region variable analysis of a procedure
is specified in Algorithm~\ref{algo:lra}.
This algorithm computes the sets of live region variables
before ($\LRbefore$) and after ($\LRafter$) each program point
as the unions of the $\mathit{Reach}$ sets of all variables
in the $\LVbefore$ and in $\LVafter$ sets of the program point, respectively.

\begin{algorithm}
    \small
    \caption{$\mathit{lra}(p)$:
        live region variable analysis of a procedure $p$ }
    \label{algo:lra}
\begin{algorithmic}
    \REQUIRE{$\LVbefore$ and $\LVafter$ of all program points in $p$.}
    \ENSURE{The sets of live region variables
    before ($\LRbefore$) and after ($\LRafter$) all program points in $p$.}

    \FORALL{program points $i$ in $p$}
        \STATE $\LRb{i} = \LRa{i} = \emptyset$
        \FORALL{$X \in \LVb{i}$}
            \STATE $\LRb{i} = \LRb{i} \cup \Reach{X}$
        \ENDFOR

        \FORALL{$X \in \LVa{i}$}
            \STATE $\LRa{i} = \LRa{i} \cup \Reach{X}$
        \ENDFOR
    \ENDFOR
\end{algorithmic}
\normalsize
\end{algorithm}

\subsection{Lifetime of Regions across Procedure Boundary}
\label{seClivenessCglobal}

Sometimes we have to pass region variables between procedures.
For a procedure, the region variables reachable from its arguments
are all candidates to be region arguments.
But as we will see later,
not all of them may actually \emph{need} to be arguments.
This subsection introduces an analysis that,
by looking at the calling contexts of a procedure in the whole program,
decides which region variables become live or become dead inside the procedure.
With this global liveness information,
we can give regions shorter lifetimes, achieving better memory reuse.

Consider a procedure $q$ that is called by some procedure $p$.
We define:
\begin{itemize}
    \item \bornR{q} is the set of region variables of $q$
    that are mapped (by the $\alpha$ function at the call site) to
    region variables of $p$ that definitely become live inside $q$,
    i.e.\ in the code of $q$ or in one of the procedures $q$ calls.
    \item \deadR{q} is the set of region variables of $q$
    that are mapped to region variables of $p$
    that definitely cease to be live (i.e.\ they become dead) inside $q$.
    \item \outlivedR{q} is the set of region variables of $q$
    that are mapped to region variables of $p$ that outlive the call to $q$.
    They are live before the call and are still live after the call.
\end{itemize}
The idea is that, in the transformed program,
the region variables in \bornR{q}
will get bound to a region inside $q$
and $q$ will return the bound region variable to $p$, while
the region variables corresponding to \deadR{q} are passed by $p$ to $q$
and have their regions safely removed during the call to $q$.
The alternative would be that
$p$ creates the regions corresponding to \bornR{q} just before the call to $q$,
and removes the regions corresponding to \deadR{q} right after the call.
With that approach, many regions would have a longer lifetime,
which is why we prefer to create regions as late as possible
and remove them as soon as possible.

For a procedure $q$, we initially set
$\bornR{q} = \outputR{q} \setminus \inputR{q}$ and
$\deadR{q} = \inputR{q} \setminus \outputR{q}$,
where \inputR{q} and \outputR{q} are
the sets of region variables reachable from the variables in
\inargs{q} and \outargs{q}, respectively.
This is an overestimate in which
all the region variables that contain input terms
but are not involved with output terms
are assumed to become dead in $q$,
while all the region variables where output terms are stored
but are not yet bound at the entry of $q$
are assumed to become live in $q$.
We use \localR{q} to denote
the set of the region variables that are local to $q$
(not reachable from input or output variables);
it is computed by $N_q \setminus (\inputR{q} \cup \outputR{q})$.
Initially, $\outlivedR{q} = \inputR{q} \cap \outputR{q}$.
It is clear that \localR{q}, \bornR{q}, \deadR{q}, and \outlivedR{q}
form a partition of $N_q$.

The calling contexts of a procedure influence
what it can do to its non-local region variables.
Therefore when analyzing a procedure $p$, the analysis
applies the rules in Figure~\ref{fig:lra:rules}
to any ${atom}$ in $p$ that is a call to $q$.
These rules update the $\deadRegs$ and $\bornRegs$ sets of $q$
according to the calling context.
Rule L1 requires
a region variable to be moved from \deadR{q} to \outlivedR{q}
if its region needs to be live in $p$ after the call to $q$.
Rule L2 is there to avoid the problems that would arise
if we let a region that is referred to by more than one region variable in $q$
be removed when one of those region variables becomes dead.
Either that region can still be referred to through the other region variables,
in which case we would have removed it too early,
or the other region variables are also in \deadR{q},
in which case the region would be removed again.
Repeated application of L2 will ensure that our system
never removes aliased regions during the call to $q$
through \emph{any} of the region variables referring to them.
Rule L3 is analogous to L1;
it moves a region variable from \bornR{q} to \outlivedR{q}
if it is already live before the call to $q$.
Rule L4 is analogous to L2 in the same way;
just as we do not want to remove a region twice,
we do not want to create it twice.
Rules L2 and L4 together ensure that
region variables that are involved in a region alias
never belong to either $\bornRegs$ or $\deadRegs$ sets.

\begin{figure}[tb]
\centering
\scriptsize
\framebox{
$
\begin{array}{c}
\inferrule
{
    r \in \LRb{\pp{atom}} \\\\
    r \in \LRa{\pp{atom}} \\\\
    r = \alpha(r') \\
    r' \in \deadR{q}
}
{
    \deadR{q} = \deadR{q} \setminus \{r'\} \\\\
    \outlivedR{q} = \outlivedR{q} \cup \{r'\}
}
\quad(\textsc{L1})
\qquad
\qquad
\inferrule
{
    \alpha(r') = r \\
    \alpha(r'') = r \\\\
    r' \not= r'' \\
    r' \in \deadR{q}
}
{
    \deadR{q} = \deadR{q} \setminus \{r'\} \\\\
    \outlivedR{q} = \outlivedR{q} \cup \{r'\}
}
\quad(\textsc{L2})
\\\\
\inferrule
{
    r \in \LRb{\pp{atom}} \\\\
    r = \alpha(r') \\
    r' \in \bornR{q}
}
{
    \bornR{q} = \bornR{q} \setminus \{r'\} \\\\
    \outlivedR{q} = \outlivedR{q} \cup \{r'\}
}
\quad(\textsc{L3})
\qquad
\qquad
\inferrule
{
    \alpha(r') = r \\
    \alpha(r'') = r \\\\
    r' \not= r'' \\
    r' \in \bornR{q}
}
{
    \bornR{q} = \bornR{q} \setminus \{r'\} \\\\
    \outlivedR{q} = \outlivedR{q} \cup \{r'\}
}
\quad(\textsc{L4})
\end{array}
$
}
\\
\small
The atomic goal $atom$ is a call to $q(\dots)$ at a program point.
\caption{Region liveness analysis rules.}
\normalsize
\label{fig:lra:rules}
\end{figure}

When there is a change to any of the sets of $q$,
$q$ must be analyzed to propagate the change to the procedures it calls.
Therefore, this analysis requires a fixpoint computation.
After a fixpoint is reached,
each procedure has exactly one $\bornRegs$ set and one $\deadRegs$ set,
and these will be suited for its \emph{most restrictive} calling context.
For calls in a less restrictive context,
some regions will be created or removed outside the call,
which will mean that
some regions will be created earlier than needed and/or
some other regions will be removed later than needed.
For call sites that are sufficiently heavily used,
we could avoid the inefficiency inherent in that
by creating a specialized copy of the callee
that exactly matches the caller's context,
but this could be fairly expensive,
since it may (and generally will) require
specialized copies of many of the specialized callee's descendants as well.

In the \emph{quicksort} program from Figure~\ref{fig:running:normal},
\code{split} has three execution paths:
$\langle(1),(2),(3)\rangle$, $\langle(4),(5)$, $(6),(7)\rangle$, and
$\langle(4),(8),(9)\rangle$,
while \code{qsort} has two paths:
$\langle(1),(2)\rangle$ and $\langle(3),(4),(5),(6),(7)\rangle$.
\footnote{For convenience, we use program points to describe execution paths.}
Note that the third execution path of \code{split} does not contain
the test at (5) because of the semantics of if-then-else.
The $\mathit{LV}$ and $\mathit{LR}$ sets of \code{split} are
in Table~\ref{table:lra:result}\subref{table:lra:split},
while the sets of \code{qsort}
are in Table~\ref{table:lra:result}\subref{table:lra:qsort}
(see also Figure~\ref{fig:running:superhomogeneous}
and Figure~\ref{fig:running:rptgraph}).
\begin{table}
\centering
\small
\caption{Live variable and live region variable sets
in the \emph{quicksort} program.}
\subfloat[\code{split}\label{table:lra:split}]{
\begin{oldtabular}{c|c|c}
    \hline
    \hline
    pp          & $\mathit{LV}$                           & $\mathit{LR}$                       \\
    \hline
    ($1_b$)     & $\{\code{X},\code{L}\}$                 & $\{\code{R5},\code{R1},\code{R2}\}$ \\
    ($1_a,2_b$) & $\{\}$                                  & $\{\}$                              \\
    ($2_a,3_b$) & $\{\code{L1}\}$                         & $\{\code{R3},\code{R2}\}$           \\
    ($3_a$)     & $\{\code{L1},\code{L2}\}$               & $\{\code{R3},\code{R2},\code{R4}\}$ \\
    ($4_b$)     & $\{\code{X},\code{L}\}$                 & $\{\code{R5},\code{R1},\code{R2}\}$ \\
    ($4_a,5_b$) & $\{\code{X},\code{Le},\code{Ls}\}$      & $\{\code{R5},\code{R2},\code{R1}\}$ \\
    ($5_a,6_b$) & $\{\code{X},\code{Le},\code{Ls}\}$      & $\{\code{R5},\code{R2},\code{R1}\}$ \\
    ($6_a,7_b$) & $\{\code{L2},\code{Le}, \code{L11}\}$   & $\{\code{R4},\code{R2},\code{R3}\}$ \\
    ($7_a$)     & $\{\code{L1},\code{L2}\}$               & $\{\code{R3},\code{R2},\code{R4}\}$ \\
    ($4_a,8_b$) & $\{\code{X},\code{Le},\code{Ls}\}$      & $\{\code{R5},\code{R2},\code{R1}\}$ \\
    ($8_a,9_b$) & $\{\code{L1},\code{Le},\code{L21}\}$    & $\{\code{R3},\code{R2},\code{R4}\}$ \\
    ($9_a$)     & $\{\code{L1},\code{L2}\}$               & $\{\code{R3},\code{R2},\code{R4}\}$ \\
    \hline
    \hline
\end{oldtabular}
}
\vspace{0.5cm}
\subfloat[\code{qsort}\label{table:lra:qsort}]{
\centering
\begin{oldtabular}{c|c|c}
    \hline
    \hline
    pp          & $\mathit{LV}$                                 & $\mathit{LR}$                                     \\
    \hline
    ($1_b$)     & $\{\code{L},\code{A}\}$                       & $\{\code{R6},\code{R7},\code{R8}\}$               \\
    ($1_a,2_b$) & $\{\code{A}\}$                                & $\{\code{R8},\code{R7}\}$                         \\
    ($2_a$)     & $\{\code{S}\}$                                & $\{\code{R8},\code{R7}\}$                         \\
    ($3_b$)     & $\{\code{L},\code{A}\}$                       & $\{\code{R6},\code{R7},\code{R8}\}$               \\
    ($3_a,4_b$) & $\{\code{A},\code{Le},\code{Ls}\}$            & $\{\code{R8},\code{R7},\code{R6}\}$               \\
    ($4_a,5_b$) & $\{\code{A},\code{Le},\code{L1},\code{L2}\}$  & $\{\code{R8},\code{R7},\code{R9}, \code{R10}\}$   \\
    ($5_a,6_b$) & $\{\code{Le},\code{L1},\code{S2}\}$           & $\{\code{R9},\code{R7},\code{R8}\}$               \\
    ($6_a,7_b$) & $\{\code{L1},\code{A1}\}$                     & $\{\code{R9},\code{R7},\code{R8}\}$               \\
    ($7_a$)     & $\{\code{S}\}$                                & $\{\code{R8},\code{R7}\}$                         \\
    \hline
    \hline
\end{oldtabular}
}
\normalsize
\label{table:lra:result}
\end{table}
In this example, the sets after one program point are always equal
to the corresponding sets before the next point in the execution path.
However, this is not true in all cases.
Consider the last program point before a disjunction.
The set of live variables after this point contains
the region variables that are live in any of the disjuncts;
in general, some of these variables
will be live in only some of the disjuncts, not all.

When computing the $\deadRegs$ and $\bornRegs$ sets of these procedures,
the initial partition is changed only once,
when \code{R5} is removed from $\deadR{\code{split}}$
by an application of rule L1 to the call to \code{split} inside \code{qsort}.
The final result is as in Table~\ref{table:lra:across}.
\begin{table}
    \centering
    \small
    \caption{Partition of the set of region variables.}
    \begin{tabular}{l|*{4}{c|}}
        \hline
        \hline
                          & $\localRegs$                  & $\bornRegs$               & $\deadRegs$     & $\outlivedRegs$     \\
        \hline
        $\code{split}$    & $\emptyset$                   & $\{\code{R3},\code{R4}\}$ & $\{\code{R1}\}$ & $\{\code{R2,R5}\}$  \\
        $\code{qsort}$    & $\{\code{R9},\code{R10}\} $   & $\emptyset$               & $\{\code{R6}\}$ & $\{\code{R7,R8}\}$  \\
        \hline
        \hline
    \end{tabular}
    \label{table:lra:across}
\end{table}

\subsection{Correctness}

Algorithm~\ref{algo:lra}, the algorithm that detects
live region variables locally at each program point
is an extension of live variable analysis,
which is a standard, well-known program analysis \cite{POPA}.
Theorem~\ref{theorem:loc_and_sharing} guarantees that
the locations of variables and their possible sharing
are represented in the region points-to graphs.
Therefore Algorithm~\ref{algo:lra} computes all the live region variables
by starting from the live variables
and collecting all the reachable region variables
using the region points-to graphs.

The analysis in Section \ref{seClivenessCglobal}
aims to compute a shortest possible lifetime for a region.
Its termination follows from the facts that
each procedure uses a finite set of region variables
(which guarantees that
the initial $\bornRegs$ and $\deadRegs$ sets are finite),
and that the analysis only ever reduces the sizes of these sets.
The rules in Figure~\ref{fig:lra:rules} enforce all the cases
where a caller of a procedure needs to restrict
what the callee can do to its region variables.
The eager application of the rules therefore ensures that
after a fixpoint has been reached,
the $\bornRegs$ and $\deadRegs$ sets obtained for a procedure
will respectively contain exactly the region variables
that the procedure will safely create and remove.

\section{Program Transformation}
\label{seCpt}

The purpose of the program transformation is
to annotate all the procedures in the program
with the information the code generator needs about regions.
For each procedure, the tasks of the transformation are:
\begin{itemize}
    \item extend the procedure definition with the formal region arguments;
    \item extend its procedure calls
        with the corresponding actual region arguments;
    \item annotate each construction unification with the region variable
        representing the region into which the new memory cell should be put;
    \item insert instructions to \code{create} and \code{remove} regions
        at suitable points.
\end{itemize}
The third task is straightforward
because the new cell is always put into the region associated with
the variable on the left hand side of the construction unification,
and the map from variables to the region variables representing their regions
is available after the region points-to analysis.

We elaborate the other tasks in the next three subsections.

\subsection{Region Arguments}

The region variables in $\bornRegs$ and $\deadRegs$ must be arguments
because their regions will be created and removed inside the procedure.
Besides these region variables,
we also need to pass as arguments
the region variables that are reachable from the input and output variables
\emph{and} are allocated into in the procedure.
This set of arguments, which we call $\allocRegs$,
is therefore computed by
$\allocRegs = (\inputRegs \cup \outputRegs) \cap \allocationRegs$
(Section~\ref{seCallocCregions}).
Note that $\allocRegs$ is not necessarily disjoint
with any of $\bornRegs$, $\deadRegs$ and $\outlivedRegs$.

So all in all, the set of formal region arguments of a procedure
is $\deadRegs \cup \bornRegs \cup \allocRegs$.
In the \emph{quicksort} program,
$\allocR{\code{split}} = \{\code{R1},\code{R2},\code{R3},\code{R4}\}
\cap \{\code{R3}, \code{R4}\} = \{\code{R3}, \code{R4}\}$,
$\allocR{\code{qsort}} = \{\code{R6}, \code{R8}\} \cap \{\code{R8}\} =
\{\code{R8}\}$,
and the region arguments are
$\{\code{R1}\} \cup \{\code{R3}, \code{R4}\} \cup \{\code{R3},
\code{R4}\} = \{\code{R1}, \code{R3}, \code{R4}\}$ for \code{split}
and $\{\code{R6}\} \cup \emptyset \cup \{\code{R8}\} =
\{\code{R6}, \code{R8}\}$ for \code{qsort}.

The actual region arguments of a procedure call are computed simply
by looking up the formal region arguments of the called procedure
and applying the $\alpha$ function of the call site.

\subsection{Insertion of \code{create} and \code{remove} Instructions}

Regions are created and removed
only by the \code{create} and \code{remove} instructions respectively.
When a region is created,
the region variable in the \code{create} instruction is bound to it.
Removing a region consists of
calling \code{remove} on the region variable bound to the region.
We implement \code{create} and \code{remove}
as builtin Mercury procedures.
Calls to other procedures may also create and remove regions,
but only if those procedures directly or indirectly invoke
\code{create} or \code{remove}.
Unifications can never either create or remove regions.

\subsubsection{Transformation Rules}

The transformation rules in Figure~\ref{fig:transformationrules}
make use of the local and global liveness of region variables
to introduce \code{create} and \code{remove} instructions
when necessary.

\begin{figure}
    \centering
    \scriptsize
    \framebox{
    $ \begin{array}{c}
    \inferrule
    {
        {atom} \equiv q(\ldots) \\\\
        r \in \LRa{\pp{{atom}}} \setminus \LRb{\pp{{atom}}} \\\\
        r \in \localR{p} \cup \bornR{p} \cup \deadR{p} \\\\
        r = \alpha(r') \rightarrow r' \not\in \bornR{q}
    }
    {
        {add}\hspace{1mm}``\code{create(r)}"\hspace{1mm}
        {before}\hspace{1mm}{{atom}}
    }
    \quad(\textsc{T1})
    \qquad
    \inferrule
    {
        {atom} \equiv X< =f(\ldots) \\\\
        r \in \LRa{\pp{{atom}}} \setminus \LRb{\pp{{atom}}} \\\\
        r \in \localR{p} \cup \bornR{p} \cup \deadR{p}
    }
    {
        {add}\hspace{1mm}``\code{create(r)}"\hspace{1mm}
        {before}\hspace{1mm}{{atom}}
    }
    \quad(\textsc{T2}) \\\\
    \inferrule
    {
        {atom} \equiv q(\ldots) \\\\
        r \in \LRb{\pp{{atom}}} \setminus \LRa{\pp{{atom}}} \\\\
        r \in \localR{p} \cup \deadR{p} \cup \bornR{p} \\\\
        r = \alpha(r') \rightarrow r' \not\in \deadR{q}
    }
    {
        {add}\hspace{1mm}``\code{remove(r)}"\hspace{1mm}
        {after}\hspace{1mm}{{atom}}
    }
    \quad(\textsc{T3})
    \qquad
    \inferrule
    {
        {atom} \equiv \mathit{unif} \\\\
        r \in \LRb{\pp{{atom}}} \setminus \LRa{\pp{{atom}}} \\\\
        r \in \localR{p} \cup \deadR{p} \cup \bornR{p}
    }
    {
        {add}\hspace{1mm}``\code{remove(r)}"\hspace{1mm}
        {after}\hspace{1mm}{{atom}}
    }
    \quad(\textsc{T4}) \\\\
    \inferrule
    {
        {atom}'\hspace{1mm}{is}\hspace{1mm}{next}\hspace{1mm}{to}\hspace{1mm}
        {atom}\hspace{1mm}{in}\hspace{1mm}{an}\hspace{1mm}
        {execution}\hspace{1mm}{path} \\\\
        r \in \LRa{\pp{{atom}}} \setminus \LRb{\pp{{atom}'}} \\\\
        r \in \localR{p} \cup \deadR{p} \cup \bornR{p}
    }
    {
        {add}\hspace{1mm}``\code{remove(r)}"\hspace{1mm}
        {before}\hspace{1mm}{{atom'}}
    }
    \quad(\textsc{T5})
    \qquad
    \inferrule
    {
        r \in \VR{\pp{{atom}}} \setminus \LRa{\pp{{atom}}} \\\\
        r \in \localR{p} \cup \deadR{p} \cup \bornR{p}
    }
    {
        {add}\hspace{1mm}``\code{remove(r)}"\hspace{1mm}
        {after}\hspace{1mm}{{atom}}
    }
    \quad(\textsc{T6})
\end{array} $ }
\small
\caption{Transformation rules.}
\normalsize
\label{fig:transformationrules}
\end{figure}

\noindent\textbf{Creation rules T1 and T2.}
As we will show in Section~\ref{seCptCcorrectness}
(Proposition~\ref{propo:one}),
a region variable will never become locally live \emph{between} atomic goals;
a region cannot be not live after a program point
but live before the immediately next program point in some execution path.
A region variable can become locally live only \emph{within} atomic goals.
Let this be the atomic goal ${atom}$ at program point $i$ in procedure $p$.
T1's first condition says that this rule covers the case
where ${atom}$ is a call, for example to $q$.
The second condition is true for a region $r$
that is not live before ${atom}$ but is live after ${atom}$.
The third condition checks whether $p$ itself is allowed to create the region.
It is intuitively clear that $p$ needs to create regions
bound to region variables in \bornR{p} and \localR{p}.
The reason why we also allow $p$ to create regions in $\deadR{p}$
is that it is OK for $p$ to remove the region bound to $r$
at some point before ${atom}$, if that is safe,
and then recreate $r$ right before ${atom}$.
The new region will be removed later because $r$ is in $\deadR{p}$.
Such deletion-followed-by-recreation
is not allowed for regions in $\outlivedR{p}$
because the caller needs their contents.
The fourth condition checks whether the call will create the region;
if it will, then $p$ itself need not do so.
Overall, if the third condition is false,
then $p$'s caller will have created the region;
if the third condition is true,
but the fourth condition is false,
then $q$ will create the region;
if both the third and fourth conditions are true,
then the instruction that T1 inserts before the call will create the region.

Rule T2 covers the case where a region becomes live in a unification.
The first condition looks only for construction unifications
because for all other kinds of unifications,
the second condition always fails
(see Proposition~\ref{propo:two}, Section~\ref{seCptCcorrectness}).
T2 is analogous to T1, the main difference being that
unifications can never create regions.

\noindent\textbf{Removal Rules T3, T4, and T5.}
Removal rule T3 is analogous to creation rule T1.
If a region variable locally ceases to be live during a call,
the situation described by the first and second conditions,
what happens is governed by the third and fourth conditions.
If the third condition is false,
then $p$'s caller or one of its ancestors will (eventually) remove the region;
if the third condition is true,
but the fourth condition is false,
then $q$ will remove the region;
if both the third and fourth conditions are true,
then the instruction that T3 inserts after the call will remove the region.
Note that it is OK for $p$ to remove a region in $\bornR{p}$,
a region it must have previously created;
since the region will be live at the end of $p$,
$p$ will later create it again,
and that is all that $p$'s caller expects.

Removal rule T4 is likewise analogous to creation rule T2,
but a region can become dead in any kind of unification,
not just constructions.

While a region cannot be not live after one program point
and then magically become live before an immediately following program point,
it \emph{is} possible for a region
to be live after one program point (${atom}$ in T5)
and dead before an immediately following program point (${atom'}$).
This can happen e.g.\ when the following program point
is the first goal of a disjunct in a disjunction or switch,
and the region is live in \emph{other} disjuncts of the disjunction or switch.
In that case, the region is live after ${atom}$
because it is live in some execution paths that do not include ${atom'}$.
In such cases, rule T5 removes the region before ${atom'}$,
provided as usual that $p$ is allowed to do so.

\noindent\textbf{Handling instantly-dead variables: rule T6.}
In some cases,
a variable may be instantiated at some point but then never used after that.
We call them instantly-dead variables.
In logic programming in general and in Mercury in particular,
they can be void or singleton variables.
A void variable's name starts with the underscore
(see e.g.\ the first clause of \code{split} in Figure~\ref{fig:running:normal})
to explicitly tell the compiler that we do not care about its value.
A singleton variable
is a variable that occurs exactly once in a clause
whose name does \emph{not} start with an underscore.
Singleton variables often represent mistakes,
so the Mercury compiler issues a warning for them;
programmers who believe the code to be correct
can avoid the warning by adding a leading underscore,
turning the singleton into a void variable.

Because it is useless to do a construction unification
that binds the new term to an instantly-dead variable,
we assume that such unifications are eliminated
before our region analysis and transformation;
the Mercury compiler has an optimization that does this.
However, this is not a full solution.
A procedure can return several output arguments,
and it may be that the caller ignores some
and pays attention only to the others.
The ignored arguments pose a problem for our analysis.
Being instantiated means that we need regions to store their terms,
and of course we want those regions to eventually be removed.
However, the fact that the ignored arguments are not used in the future
makes them \emph{never} live
according to our concept of live variables (Section~\ref{seClra}).
Therefore we may not rely on the change of their liveness from live to dead
(the basis of rules T3-T5)
to remove the regions storing their terms.
That is why we have rule T6,
which tries to remove regions reachable from void variables
right after the point where the void variables get instantiated.
We assume that at each program point $i$,
we have available the set of such instantly-dead variables, \VV{i}
($i$ is the point at which they get instantiated).
We then compute \VR{i}, the set of region variables that
are reachable from the variables, by
\hbox{\raise-1mm\hbox{$\textstyle
\bigcup \atop \scriptstyle {V \in \VV{i}}$}} \Reach{V}.
The basic idea of T6 is to remove the region
of a region variable reachable from an instantly-dead variable
right after the point where the variable gets instantiated,
provided of course that the region variable
is not reachable from any of the live variables after the point.

\noindent\textbf{Example of re-creation and re-removal.}
We illustrate (a) creating, removing and recreating a region on the one hand
and (b) removing, creating, reremoving a region on the other hand
using the two procedures in Figure~\ref{fig:recreate:normal}
and their region-annotated counterparts in Figure~\ref{fig:recreate:annotated}.
\begin{figure}[tb]
\scriptsize
\vspace{2mm}
\begin{Verbatim}[frame=single,framerule=0.2pt,framesep=3pt]
         % p(in, out).          % q(in, out).
         p(A, B) :-             q(X, Y) :-                length(L) = N :-
         (1) C <= [1],          (1) Z := length(X),       (
             ( if                   ( if                      L == [],
         (2)     A == 1         (2)     Z == 1                N := 0
               then                   then                ;
         (3)     B := C         (3)     V := X                L => [_ | T],
               else                   else                    N := length(T) + 1
         (4)     B <= [2]       (4)     V <= [1]          ).
             ).                     ),
                                (5) Y := Z + length(V).
\end{Verbatim}
\small
\caption{Effect of re-creation of regions.}
\label{fig:recreate:normal}
\normalsize
\end{figure}
For completeness,
we include the definition of the function \code{length},
which returns the number of elements of the input list,
though its code is not important in this case.
\begin{figure}[tb]
\scriptsize
\vspace{2mm}
\begin{Verbatim}[frame=single,framerule=0.2pt,framesep=3pt]
             p(A, B@R1) :-                       q(X@R2, Y) :-
                 create(R1),                     (1) Z := length(X),
             (1) C <= [1] in R1,                     ( if
                 ( if                            (2)     Z == 1
             (2)     A == 1                            then
                   then                          (3)     V := X
             (3)     B := C                            else
                   else                                  remove(R2),
                     remove(R1),                         create(R2),
                     create(R1),                 (4)     V <= [1] in R2
             (4)     B <= [2] in R1                  ),
                 ).                              (5) Y := Z + length(V),
                                                     remove(R2).
\end{Verbatim}
\small
\caption{Effect of re-creation of regions: region-annotated version.}
\label{fig:recreate:annotated}
\normalsize
\end{figure}
We also assume that there is no region for integers.
Therefore the focus is only on
the variables \code{B} and \code{C} in the procedure \code{p} and
\code{V} and \code{X} in \code{q},
which are of the type \code{list\_int}
(see Example~\ref{example:typedeclaration}).
Each pair of them is assigned to the same region variables,
\code{R1} in \code{p} and \code{R2} in \code{q}
due to the assignments at the program points (3) in both procedures.
\code{p} and \code{q} are unrelated;
we use them to demonstrate different situations.

Assume that \code{p} can create \code{R1},
i.e.\ no calling context forces it otherwise.
So \code{R1} is in \bornR{\code{p}}.
In Figure~\ref{fig:recreate:annotated},
the \code{create} instructions added for it before (1) and (4)
are due to the rule T2.
The \code{remove} instruction added before (4) is due to rule T5.
If execution reaches the else branch,
the \code{R1} that was live after (1) is no longer live before (4),
and we can reclaim the memory occupied by \code{[1]}
by removing this incarnation of \code{R1},
before creating a new incarnation of it
and putting \code{[2]} into it.

For \code{q}, assume that \code{R2} is in \deadR{\code{q}}.
\code{R2} is not live before the program point (4),
and the \code{remove} instruction there is added by rule T5.
As \code{R2} is live after (4),
T2 adds the \code{create} instruction there as well.
The \code{remove} instruction after (5) is added by rule T4.
If execution reaches the else branch,
we reclaim the memory of the input list \code{X} by removing \code{R2}
before recreating it to construct \code{V}.

In both cases, we need to make sure that
the two operations before program point (4) are done in the right order.
This is ensured by the following algorithm.

\subsection{Insertion Algorithm}

The insertion of the instructions
is specified by Algorithm~\ref{algo:insert_region_instr},
which says how the transformation rules in Figure~\ref{fig:transformationrules}
should be applied to the atomic goal at each program point.

\begin{algorithm}
    \small
    \caption{Insertion of region instructions in a procedure $p$.}
    \label{algo:insert_region_instr}
\begin{algorithmic}
    \REQUIRE{$p$ in superhomogeneous form;
    all points-to graphs and region liveness sets are available.}

    \FORALL{program points $i$ in $p$}
    \STATE ${atom} = \atomat{i}$
    \STATE apply rule T6 to ${atom}$
        \IF{${atom} \equiv \mathit{unif}$}
            \STATE apply rule T4 to ${atom}$
            \IF{${atom} \equiv X < = f(\dots)$}
                \STATE apply rule T2 to ${atom}$
            \ENDIF
        \ELSE
            \STATE apply rules T1 and T3 to ${atom}$
        \ENDIF
    \ENDFOR

    \FORALL{$\mathit{ep} \equiv \ep$ in $p$}
        \FOR{$j = 1$ to $n - 1$}
            \STATE apply rule T5 to ${atom}_j$, with ${atom}' \equiv {atom}_{j+1}$
        \ENDFOR
    \ENDFOR

\end{algorithmic}
\normalsize
\end{algorithm}

Each program point is associated with three sets of region instructions:
a set of \code{remove} instructions added before it,
a set of \code{create} instructions added before it, and
a set of \code{remove} instructions added after it.
The instructions in the first set will be executed
before the instructions in the second set.
In Section~\ref{seCptCcorrectness},
we will prove the correctness of this choice
not just in our examples but also in the general case.

The first loop in Algorithm~\ref{algo:insert_region_instr}
applies all the transformation rules except T5
to the atomic goals at all the program points in a procedure.
We use the function \atomat{i} to refer to
the atomic goal at program point $i$.
While rule T6 can be applied to any atomic goal,
T4 needs to be tried only when the atom at a program point is a unification,
T2 only when the atom is a construction unification, and
T1 and T3 only when the atom is a procedure call.
The second loop follows every execution path to try rule T5,
which needs to consult information at two consecutive program points
at the same time.

The result of the program transformation
of the \emph{quicksort} program in Example~\ref{example:running:normal}
was shown in Figure~\ref{fig:running:annotated}.
The additions of the \code{remove} instructions
after the first program points in both \code{qsort} and \code{split}
result from the applications of T4.
The two \code{create} instructions in \code{split} were added by T2.

\subsection{Correctness of Region-Annotated Programs}
\label{seCptCcorrectness}

Region-annotating a program does not change its computational behavior;
it changes only the locations of terms in memory.
We therefore restrict our attention to the correctness of memory accesses,
i.e.\ the safety of read and write accesses to terms.
Before arguing about this safety,
we prove a theorem about the bindings of live region variables.

\begin{theorem}
\label{theorem:boundness}
    Consider a procedure $p$ in a program $P$.
    We call $P'$ the region-annotated program
    that is produced by applying the analyses and transformation in
    Sections~\ref{seCrpta}, \ref{seClra}, and \ref{seCpt} to $P$,
    in which $p'$ is the region-annotated version of $p$.
    If a region variable is live before (after) a program point $i$ in $p'$,
    then in $p'$ it is bound to a region before (after) $i$.
\end{theorem}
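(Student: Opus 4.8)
The plan is to strengthen the statement to a global invariant and prove it by induction over the execution of $P'$. Concretely, I would show that at \emph{every} program point of \emph{every} region-annotated procedure of $P'$, both ``before'' and ``after'', the set of region variables currently bound to a region \emph{includes} the locally live region variables computed by Algorithm~\ref{algo:lra}. The induction is over execution order: structurally along execution paths inside a procedure, with a nested induction on call depth for the boundary condition at procedure entry. For that boundary condition I would show that when $p'$ is entered, every region variable in $\inputR{p}$ is already bound. These region variables hold parts of $p$'s (ground) input terms; from the fact that $\localR{p},\bornR{p},\deadR{p},\outlivedR{p}$ partition $N_p$, together with $\bornR{p}\cap\inputR{p}=\emptyset$ and $\localR{p}\cap\inputR{p}=\emptyset$, one gets $\inputR{p}\subseteq\deadR{p}\cup\outlivedR{p}$; the $\deadRegs$ ones (and any that are additionally in $\allocRegs$) are formal region arguments and are bound by the caller immediately before the call --- this is where the call-depth induction and the actual-region-argument computation via $\alpha$ are used --- while the remaining $\outlivedRegs$ ones denote regions the caller deliberately keeps bound across the whole call. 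For the initial call, from \code{main}, there are no region arguments and the condition holds vacuously.

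For the inductive step I would case-split on how the live-region-variable set changes between two consecutive points on an execution path, matching each case against the rules of Figure~\ref{fig:transformationrules} and the instruction ordering enforced by Algorithm~\ref{algo:insert_region_instr} (at a single program point the \code{remove}s inserted before the atom run before the \code{create}s inserted before it). \emph{Within} an atomic goal (from ``before $i$'' to ``after $i$''): a region variable live before $i$ is bound by the induction hypothesis, and the only instruction executed in between is the atom itself, which is not a \code{create}/\code{remove} and hence changes no binding, so it is still bound after $i$; for $r$ dead before $i$ but live after $i$, Proposition~\ref{propo:two} (immediate from Algorithms~\ref{algo:lva} and~\ref{algo:lra} and Lemma~\ref{lemma:sharing_unifs}) forces ${atom}$ to be a call or a construction unification, and then either $r\in\localR{p}\cup\bornR{p}\cup\deadR{p}$ and (in the call case, if the callee does not itself create $r$) rule T1 or T2 has inserted \code{create(r)} before ${atom}$; or $r\in\outlivedR{p}$, in which case --- since no removal rule (T3--T6) ever targets an $\outlivedRegs$ variable --- $r$ has been bound continuously since entry; or $r$ is created by the callee $q$ and handed back bound, which is discharged by applying the induction hypothesis to $q$ at its exit point, where $\LRafter$ equals $\outputR{q}\subseteq\bornR{q}\cup\outlivedR{q}$, each member of which is bound at exit. \emph{Between} atomic goals (from ``after ${atom}$'' to ``before ${atom}'$''): by Proposition~\ref{propo:one} the live-region set cannot grow there, so nothing newly becomes live; whatever becomes dead there and that $p$ is permitted to reclaim is removed by rule T5 (applied in the second loop of Algorithm~\ref{algo:insert_region_instr}), which only drops an already-dead region variable. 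Finally, the \code{remove}s inserted after an atom by T3, T4 and T6, like those of T5, act only on region variables dead at that point, so they never unbind a live one; and at a point carrying both a T5 \code{remove} and a T1/T2 \code{create} for the same $r$ (the re-creation situation), the ordering guarantees the \code{remove} first, after which the \code{create} leaves $r$ bound.

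The real obstacle is showing that the invariant is genuinely \emph{inductive across call boundaries} in the presence of re-creation and recursion. Inside one non-recursive procedure the argument is routine liveness bookkeeping, but a region variable in $\deadR{p}$ (symmetrically, $\bornR{p}$) may be removed and re-created by T1/T2 several times along a path, and a region variable created inside a callee is returned to the caller, so ``live $\Rightarrow$ bound'' must be maintained for all procedures simultaneously, with careful pairing of the side condition ``$r=\alpha(r')\rightarrow r'\notin\bornR{q}$'' in T1 (resp.\ ``$r'\notin\deadR{q}$'' in T3) against what $q$'s own transformation does to $r'$, so that on every execution path each region is created exactly once per dead-to-live transition and removed exactly once per live-to-dead transition. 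I would therefore isolate an auxiliary lemma --- informally: for every procedure $q$ and region variable $r'$ of $q$, $q$'s body creates $r'$ at each local dead-to-live transition and removes it at each local live-to-dead transition when $r'\in\localR{q}\cup\bornR{q}\cup\deadR{q}$, and keeps $r'$ bound throughout when $r'\in\outlivedR{q}$ --- prove it by the same execution-order induction, and then read Theorem~\ref{theorem:boundness} off it.
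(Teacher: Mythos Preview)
Your overall approach coincides with the paper's: both set up a global induction hypothesis over execution order, separate the case $r\in\outlivedR{p}$ from $r\in\localR{p}\cup\bornR{p}\cup\deadR{p}$, and use Propositions~\ref{propo:one}--\ref{propo:three} together with the transformation rules of Figure~\ref{fig:transformationrules} to discharge the ``before'' and ``after'' cases. Your proposed auxiliary lemma at the end is just a repackaging of that same induction, not a different argument.

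There is, however, one genuine gap in your inductive step. In the ``within an atomic goal'' case, for a region $r$ live both before and after $i$, you write that ``the only instruction executed in between is the atom itself, which is not a \code{create}/\code{remove} and hence changes no binding''. This is false when ${atom}$ is a procedure call $q(\ldots)$: the body of $q$ may well execute \code{remove} (or \code{create}) instructions on the region that $r$ is mapped to, and nothing you have said rules that out. The missing ingredient is the liveness rules of Figure~\ref{fig:lra:rules}: because $r\in\LRb{\pp{atom}}$ and $r\in\LRa{\pp{atom}}$, rules L1 and L3 guarantee that if $r=\alpha(r')$ then $r'\notin\deadR{q}$ and $r'\notin\bornR{q}$, so $r'\in\outlivedR{q}$ and $q$ neither removes nor (re)creates that region. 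The paper's proof invokes precisely these two rules at precisely this point (its Case~2, second subcase). Once you add that step, your argument and the paper's are essentially the same.
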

To prove Theorem~\ref{theorem:boundness}, we formulate several propositions.
\begin{proposition}
\label{propo:one}
    If program point $i$ is right before program point
    $j$ in some execution path of a procedure, then
    (i) $\LVb{j} \subseteq \LVa{i}$ and (ii) $\LRb{j} \subseteq \LRa{i}$.
\end{proposition}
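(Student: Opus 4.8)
The plan is to establish part~(i) directly from the code of Algorithm~\ref{algo:lva}, and then obtain part~(ii) from part~(i) using the fact that Algorithm~\ref{algo:lra} builds $\LRbefore$ and $\LRafter$ monotonically out of $\LVbefore$ and $\LVafter$.

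For part~(i), I would fix an execution path $\mathit{ep} \equiv \ep$ of the procedure in which program point $i$ lies immediately before program point $j$; that is, there is an index $k$ with $1 \le k < n$, $i = \pp{{atom}_k}$, and $j = \pp{{atom}_{k+1}}$. When the outer loop of Algorithm~\ref{algo:lva} reaches this $\mathit{ep}$ and its inner loop reaches the iteration for $k$, the algorithm, since $k < n$, performs $\LVa{i} = \LVa{i} \cup \LVb{\pp{{atom}_{k+1}}}$, i.e.\ it adds $\LVb{j}$ to $\LVa{i}$, so at that moment $\LVb{j} \subseteq \LVa{i}$. To see that this inclusion survives to the end of the analysis, I would observe that a program point having a successor in some execution path, as $i$ does, can never be the last program point of any execution path; this is a static property of the position of its atom in the procedure body, provable by an easy induction on the goal structure. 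Consequently the branch of Algorithm~\ref{algo:lva} that assigns $\LVa{i} = \outargs{p}$ is never taken for $i$, so every update the algorithm makes to $\LVa{i}$ is a union; hence once $\LVb{j} \subseteq \LVa{i}$ holds it keeps holding. (That $i$ may occur in several execution paths with different successors is harmless: it only enlarges $\LVa{i}$.)

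For part~(ii), I would unfold the definitions in Algorithm~\ref{algo:lra}: $\LRb{j} = \bigcup_{X \in \LVb{j}} \Reach{X}$ and $\LRa{i} = \bigcup_{X \in \LVa{i}} \Reach{X}$. Since the operation $S \mapsto \bigcup_{X \in S} \Reach{X}$ is monotone with respect to set inclusion, part~(i) immediately gives $\LRb{j} \subseteq \LRa{i}$, which is~(ii).

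I do not anticipate a real obstacle: the proposition is essentially a bookkeeping fact about the two analyses. The only point that needs care is the one noted above, namely checking that for a program point $i$ that has a successor the $\LVafter$ set is built purely by unions, so that the inclusion obtained from one execution path is not later overwritten when the algorithm processes other execution paths in which $i$ appears.
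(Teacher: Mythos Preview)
Your approach matches the paper's---both parts are read off Algorithms~\ref{algo:lva} and~\ref{algo:lra}---and your derivation of (ii) from (i) via the monotonicity of $S \mapsto \bigcup_{X\in S}\Reach{X}$ is exactly right.

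For (i), however, there is a gap in the survival argument. You correctly note that $\LVa{i}$ is built only by unions (since $i$, having a successor, is never a last program point), so nothing ever leaves $\LVa{i}$. But you treat $\LVb{j}$ as fixed from the moment you establish the inclusion. It is not: in Algorithm~\ref{algo:lva} the update to $\LVb{j}$ is an \emph{assignment} $(\LVa{j}\setminus\outargs{{atom}_j})\cup\inargs{{atom}_j}$, and $\LVa{j}$ may keep growing as further execution paths are processed, so $\LVb{j}$ can strictly grow after the path you fixed. Concretely, with a body of the shape $(\mathit{atom}_i;\mathit{atom}_{i'}),\ \mathit{atom}_j,\ (\mathit{atom}_a;\mathit{atom}_b)$, processing $\langle i,j,a\rangle$ first and then $\langle i',j,b\rangle$ enlarges $\LVb{j}$ without touching $\LVa{i}$, so your inclusion fails at that intermediate stage. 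The \emph{final} inclusion does hold, but only because the algorithm will also process $\langle i,j,b\rangle$, letting $\LVa{i}$ catch up. In general one needs the extra structural observation that every suffix following $j$ in some execution path also occurs in a path in which the predecessor of $j$ is $i$ (the branching choices before $j$ and after $j$ are independent in the goal tree), so every increment to $\LVb{j}$ is eventually propagated into $\LVa{i}$ via one of those paths. That is what your argument is missing; the paper's one-line proof does not spell it out either, so this is a gap of detail rather than of strategy.
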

\begin{proof}
    (i) follows directly from Algorithm~\ref{algo:lva}.
    (ii) follows from (i) and Algorithm~\ref{algo:lra}.
\end{proof}

\begin{proposition}
\label{propo:two}
    When the atomic goal at program point $i$ is a unification,
    we have the following two properties.
    If it is a construction unification, then $\LRb{i} \subseteq \LRa{i}$.
    If $\LRb{i} \subset \LRa{i}$ (strict subset),
    then the unification is a construction.
\end{proposition}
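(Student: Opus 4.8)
\noindent The plan is to derive both statements from the definitions $\LRb{i}=\bigcup_{X\in\LVb{i}}\Reach{X}$ and $\LRa{i}=\bigcup_{X\in\LVa{i}}\Reach{X}$ (Algorithm~\ref{algo:lra}) by a case analysis on the form of the unification $\mathit{atom}$ at program point $i$, using two pieces of infrastructure. The first is the input/output accounting of Algorithm~\ref{algo:lva}, from which I extract three facts: (A) $\LVb{i}\setminus\LVa{i}\subseteq\inargs{\mathit{atom}}$, i.e.\ a variable that is live before $i$ but not after it must be consumed by $\mathit{atom}$; (B) $\LVa{i}\setminus\LVb{i}\subseteq\outargs{\mathit{atom}}$, the symmetric statement; and (C) $\inargs{\mathit{atom}}\subseteq\LVb{i}$. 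For an interior program point all three are immediate from the update rule $\LVb{i}=(\LVa{i}\setminus\outargs{\mathit{atom}})\cup\inargs{\mathit{atom}}$; at the first or last atom of an execution path, where Algorithm~\ref{algo:lva} substitutes $\inargs{p}$ or $\outargs{p}$, they still hold after the Mercury compiler's unused-argument pass (an input of the first atom is an input of $p$, an output of $p$ produced by the last atom is one of its outputs). The second piece of infrastructure is that, by Lemma~\ref{lemma:sharing_unifs} and the monotonicity of the points-to analysis (the $\mathit{unify}$ operation of Algorithm~\ref{algo:unify} only ever merges nodes, so edges and reachability grow monotonically), the final region points-to graph contains, for a construction $X\Leftarrow f(X_1,\dots,X_n)$ or a deconstruction $X\Rightarrow f(X_1,\dots,X_n)$, an edge $(n_X,(f,j),n_{X_j})$ for every $j$, whence $n_{X_j}\in\Reach{X}$ and so $\Reach{X_j}\subseteq\Reach{X}$; and for an assignment $X:=W$ the intraprocedural analysis merges $n_X$ with $n_W$, so $\Reach{X}=\Reach{W}$.

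\emph{First statement.} Assume $\mathit{atom}$ is a construction $X\Leftarrow f(X_1,\dots,X_n)$ and let $r\in\LRb{i}$, so $r\in\Reach{Z}$ for some $Z\in\LVb{i}$. If $Z\in\LVa{i}$ then $\Reach{Z}\subseteq\LRa{i}$ and $r\in\LRa{i}$. Otherwise Fact~(A) gives $Z\in\inargs{\mathit{atom}}=\{X_1,\dots,X_n\}$, say $Z=X_j$. Since the Mercury compiler has already eliminated useless construction unifications (those whose result is never used afterwards), $X$ is used after $i$ along some execution path, hence $X\in\LVa{i}$ and $\Reach{X}\subseteq\LRa{i}$; combined with $\Reach{X_j}\subseteq\Reach{X}$ this yields $r\in\Reach{Z}=\Reach{X_j}\subseteq\Reach{X}\subseteq\LRa{i}$. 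Therefore $\LRb{i}\subseteq\LRa{i}$.

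\emph{Second statement.} I would argue the contrapositive: if $\mathit{atom}$ is \emph{not} a construction, then $\LRa{i}\subseteq\LRb{i}$, which makes $\LRb{i}\subset\LRa{i}$ impossible. Let $r\in\LRa{i}$, so $r\in\Reach{Z}$ for some $Z\in\LVa{i}$; if $Z\in\LVb{i}$ then $r\in\LRb{i}$, so assume $Z\notin\LVb{i}$, whence $Z\in\outargs{\mathit{atom}}$ by Fact~(B). If $\mathit{atom}$ is a test $X==W$, then $\outargs{\mathit{atom}}=\emptyset$, contradicting $Z\in\outargs{\mathit{atom}}$, so this case cannot occur. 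If $\mathit{atom}$ is an assignment $X:=W$, then $\outargs{\mathit{atom}}=\{X\}$ and $Z=X$; since $\Reach{X}=\Reach{W}$ and $W\in\inargs{\mathit{atom}}\subseteq\LVb{i}$ by Fact~(C), we get $r\in\Reach{X}=\Reach{W}\subseteq\LRb{i}$. If $\mathit{atom}$ is a deconstruction $X\Rightarrow f(Y_1,\dots,Y_n)$, then $Z=Y_j$ for some $j$; since $\Reach{Y_j}\subseteq\Reach{X}$ and $X\in\inargs{\mathit{atom}}\subseteq\LVb{i}$, we get $r\in\Reach{Y_j}\subseteq\Reach{X}\subseteq\LRb{i}$. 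In every case $r\in\LRb{i}$, so $\LRa{i}\subseteq\LRb{i}$, as required.

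\emph{Main obstacle.} The edge-and-reachability bookkeeping and the case analyses above are routine once the framework is in place; the two steps that need real care are the following. First, the first statement relies on the result variable of every surviving construction being live afterwards, which depends on the Mercury compiler pass that removes useless constructions, \emph{and} on the observation that the edge $(n_X,(f,j),n_{X_j})$ created during the intraprocedural phase is never destroyed by the node merges of the interprocedural fixpoint of Algorithm~\ref{algo:rpta} --- this is exactly where monotonicity of $\mathit{unify}$ is essential. Second, one must double-check that Facts~(A)--(C) survive at the first and last program points of an execution path, where Algorithm~\ref{algo:lva} replaces the live-variable recurrence by $\inargs{p}$ / $\outargs{p}$; here the unused-argument assumption does the work.
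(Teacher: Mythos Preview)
Your proof is correct and follows essentially the same approach as the paper. Both arguments hinge on the inclusion $\Reach{X_j}\subseteq\Reach{X}$ (from the edges $(n_X,(f,j),n_{X_j})$ guaranteed by the intraprocedural analysis), and both establish the second statement via the contrapositive by a case analysis over assignments, tests, and deconstructions showing $\LRa{i}\subseteq\LRb{i}$. The paper works at the level of set identities, directly expanding $\LVb{i}=(\LVa{i}\setminus\{X\})\cup\{X_1,\dots,X_n\}$ and comparing the resulting unions of $\mathit{Reach}$ sets, whereas you do element-chasing through your Facts~(A)--(C); this is a presentational rather than substantive difference. If anything, you are more explicit than the paper about two points that the paper leaves implicit: that the decomposition of $\LRa{i}$ in the construction case requires $X\in\LVa{i}$ (hence the useless-construction-elimination assumption), and that the boundary behaviour of Algorithm~\ref{algo:lva} at first/last program points needs attention.
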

\begin{proof}
    Consider a construction unification
    of the form $X$ \code{<=} $f(X_1, \dots, X_n)$.
    By definition (Algorithm~\ref{algo:lva})
    $\LVb{i} = \LVa{i} \setminus \{X\} \cup \{X_1, \dots, X_n\}$.
    So we can compute $\LRb{i} =
    \bigcup_{{  V \in \LVa{i}, V \not= X}}    \Reach{V} ~ \cup~
    \bigcup_{j=1}^{n} \Reach{X_j}$.
     We can also write $\LRa{i} =
    \bigcup_{{  V \in \LVa{i}, V \not= X}} \Reach{V} ~ \cup ~
    \Reach{X}$.
    Algorithm~\ref{algo:intraproc} ensures that
    the edges from $n_X$ to $n_{X_j}$ are in the region points-to graph,
    therefore
    $\Reach{X} \supseteq \bigcup_{j=1}^{n}  \Reach{X_j}$.
    So $\LRb{i} \subseteq \LRa{i}$.

    To prove the second property we will show that
    if the unification is not a construction unification,
    then $\LRb{i} \supseteq \LRa{i}$.

    Consider an assignment unification of the form $X := Y$.
    From Algorithm~\ref{algo:intraproc} we have that
    $X$ and $Y$ are in the same node in the region points-to graph,
    therefore $\mathit{Reach}(X) = \mathit{Reach}(Y)$.
    By definition $\LVb{i} = (\LVa{i} \setminus \{X\}) \cup \{Y\}$, so
    $\LRb{i} = \bigcup_{{  V \in \LVa{i}, V \not= X}} \Reach{V}
   ~ \cup ~ \Reach{Y}$.
    We can write
    $\LRa{i} = \bigcup_{ V \in \LVa{i},V \not= X} \Reach{V}
    ~ \cup ~ \Reach{X}$
    and therefore $\LRb{i} = \LRa{i}$.

    Consider a test unification of the form $X == Y$.
    In this case, $\LVb{i} = \LVa{i} \cup \{X, Y\}$
    so obviously $\LRb{i} \supseteq \LRa{i}$.

    Consider a deconstruction unification of the form
    $X = > f(X_1, \dots, X_n)$.
    Here $\LVb{i} = (\LVa{i} \setminus \{X_1,\dots, X_n\}) \cup \{X\}$,
    and we have $\LRb{i} =
    \bigcup_{ V \in \LVa{i} \setminus \{X_1,\dots,X_n\}} \Reach{V}
    ~ \cup  ~ \Reach{X}$.
    We can write $\LRa{i} =
    \bigcup_{V \in \LVa{i} \setminus \{X_1,\dots,X_n\}} \Reach{V}
    ~ \cup ~ \bigcup_{j=1}^n \Reach{X_j}$.
    We have shown that
    $\mathit{Reach}(X) \supseteq (\bigcup_{j=1}^n \Reach{X_j}$).
    Therefore $\LRb{i} \supseteq \LRa{i}$.
\end{proof}

\begin{proposition}
\label{propo:three}
    If the atomic goal at program point $i$ is a unification
    and there exists a region variable $R$ such that
    $R \not\in \LRb{i}$ and $R \in \LRa{i}$,
    then $\LRb{i} \subset \LRa{i}$ (strict subset).
\end{proposition}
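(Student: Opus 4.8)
The plan is to derive Proposition~\ref{propo:three} almost entirely from Proposition~\ref{propo:two}, exploiting the dichotomy that proposition (and its proof) establishes for the live-region sets surrounding a unification. First I would observe that the hypothesis supplies a region variable $R$ with $R \in \LRa{i}$ and $R \notin \LRb{i}$, so in particular $\LRa{i} \not\subseteq \LRb{i}$; equivalently, $\LRb{i}$ fails to be a superset of $\LRa{i}$.

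Next I would recall that the proof of Proposition~\ref{propo:two} in fact shows more than the bare statement: for \emph{every} non-construction unification it establishes $\LRb{i} \supseteq \LRa{i}$ — namely $\LRb{i} = \LRa{i}$ for an assignment $X := Y$ (since $\Reach{X} = \Reach{Y}$), $\LRb{i} \supseteq \LRa{i}$ for a test $X == Y$ (since $\LVb{i} = \LVa{i} \cup \{X,Y\}$), and $\LRb{i} \supseteq \LRa{i}$ for a deconstruction $X \mathbin{\code{=>}} f(X_1,\dots,X_n)$ (since $\Reach{X} \supseteq \bigcup_{j} \Reach{X_j}$). Taking the contrapositive, the fact that $\LRb{i} \not\supseteq \LRa{i}$ forces the unification at program point $i$ to be none of these three; it must be a construction unification.

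Then I would invoke the first half of Proposition~\ref{propo:two} for this construction unification to obtain $\LRb{i} \subseteq \LRa{i}$. Combining this inclusion with the witness $R \in \LRa{i} \setminus \LRb{i}$ from the hypothesis shows the inclusion is proper, so $\LRb{i} \subset \LRa{i}$, which is exactly the claim.

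I do not expect any real obstacle; the only delicate point is that the argument relies on the intermediate inequalities inside the \emph{proof} of Proposition~\ref{propo:two} (the three $\supseteq$ facts) rather than on its published statement. In the write-up I would therefore either cite those lines explicitly, or, to keep the argument self-contained, re-derive the one-line inequality in each non-construction case — $\Reach{X} = \Reach{Y}$ for assignments, the containment of live-variable sets for tests, and $\Reach{X} \supseteq \bigcup_j \Reach{X_j}$ for deconstructions — each of which immediately yields $\LRb{i} \supseteq \LRa{i}$ and hence excludes that case.
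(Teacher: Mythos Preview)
Your proposal is correct and follows essentially the same approach as the paper: use the $\supseteq$ facts established in the proof of Proposition~\ref{propo:two} for assignments, tests, and deconstructions to rule out those cases, then apply the $\subseteq$ direction of Proposition~\ref{propo:two} for constructions and combine with the witness $R$ to get strict inclusion. Your observation that the argument leans on the \emph{proof} of Proposition~\ref{propo:two} rather than its bare statement is exactly right, and the paper handles this the same way, citing ``proof of Proposition~\ref{propo:two}'' at that step.
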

\begin{proof}
    The existence of a region variable $R$ such that
    $R \not\in \LRb{i}$ and $R \in \LRa{i}$ means that the unification
    cannot be an assignment, a test, or a deconstruction,
    because in each of those cases $\LRb{i} \supseteq \LRa{i}$
    (proof of Proposition~\ref{propo:two}).

    If the unification is a construction,
    then $\LRb{i} \subseteq \LRa{i}$ (Proposition~\ref{propo:two}).
    This implies that if there exists an $R$
    such that $R \not\in \LRb{i}$ and $R \in \LRa{i}$,
    then $\LRb{i} \subset \LRa{i}$.
\end{proof}

Now we can give the proof for Theorem~\ref{theorem:boundness}.

\begin{proof}[Proof of Theorem~\ref{theorem:boundness}]
\textbf{Hypothesis}:
Assume that Theorem~\ref{theorem:boundness} is true globally at all the
points that are reached before the (local) program point $i$ in $p$
in an execution of the program.

Consider a region variable $R$.

If $R$ belongs to \outlivedR{p},
then according to the Hypothesis
it is bound to a region at the call to $p$.
Since our transformation does not add \create{R} or \remove{R} to $p$
and none of the procedures called by $p$ creates or removes $R$,
it is bound to the same region at all points in $p$,
certainly including the points where it is live.

Consider the other case in which $R$ belongs
to one of $\localRegs$, $\bornRegs$, or $\deadRegs$.

\begin{description}
\item[Case 1.]
    Consider a region variable $R$ that is live before $i$,
    i.e.\ $R \in \LRb{i}$.

    \begin{itemize}
    \item
        When $i$ is the first program point,
        $R$ must be reachable from a variable in \inargs{p}
        (Algorithms~\ref{algo:lva} and~\ref{algo:lra}).
        In the context of a caller of $p$,
        the region variable of the caller that $R$ is mapped to
        is live before the call.
        By the Hypothesis we have that
        it is bound to a region before the call
        and therefore $R$ is bound to the region at the entry to $p$.
        The transformation rule T5,
        which adds a ${remove}$ instruction
        before a program point,
        is not applicable to the first program point,
        since it has no predecessor.
        Therefore no ${remove}$ instruction is added before $i$,
        meaning that $R$ is bound to a region before $i$.

    \item
        If $i$ is not the first program point,
        then $R$ is in \LRa{h}
        where $h$ is the program point right before $i$
        in an execution path (Proposition~\ref{propo:one}).
        According to our hypothesis,
        $R$ is bound to a region after $h$.
        Again, the rule T5 is not applicable
        because $R$ is in both \LRa{h} and \LRb{i},
        and therefore $R$ is bound before $i$.
    \end{itemize}

\item[Case 2.]
    Consider a region variable $R$ that is live after $i$,
    i.e.\ $R \in \LRa{i}$.
    Assume that ${atom}$ is the atomic goal at $i$.

    \begin{enumerate}
    \item
        Consider the case in which $R$ is not in \LRb{i}.

        \begin{itemize}
        \item
            If ${atom}$ is a unification,
            from Proposition~\ref{propo:three}
            we have that $\LRb{i} \subset \LRa{i}$
            and then from Proposition~\ref{propo:two}
            it must be a construction unification.

            Rule T1 adds a \create{R} instruction before ${atom}$,
            which means that $R$ is bound to a region before ${atom}$.
            Recall that we assume that
            the set of ${create}$ instructions
            are executed right before ${atom}$,
            after the execution of the set of ${remove}$ instructions, if any.
            Therefore $R$ is bound before ${atom}$.
            Since construction unifications never remove regions,
            and we never insert ${remove}$ instructions after them,
            $R$ must still be bound to the region after ${atom}$.

        \item
            Consider the case in which
            ${atom}$ is a procedure call to $q$.
            If $R$ is mapped to a region variable in \bornR{q},
            the region variable is live
            after any last program point of $q$.
            By the Hypothesis we can say that
            the region variable is bound to a region
            at the exit of $q$.
            So $R$ is bound to that region after the call.

            Otherwise, rule T1 will add a \create{R} before ${atom}$,
            which means that $R$ is bound to a region before ${atom}$
            (again no ${remove}$ instruction can be executed
            in between \create{R} and ${atom}$).

            Because $R$ is not live before the call,
            it is not reachable from any actual input arguments
            of the call to $q$.
            Therefore it is not mapped to a region variable of $q$
            that belongs to \deadR{q}.
            So we have that $R$ is not mapped to any region variables of $q$
            that are in any of \bornR{q} or \deadR{q},
            and \localR{q} contains only region variables local to $q$,
            $R$ must be mapped to a region variable in \outlivedR{q},
            which means that $R$ is not removed in $q$.
        \end{itemize}

        In both subcases above,
        the rules T3, T4 and T6 will not be applicable
        because $R$ is in \LRa{i}.
        Therefore no \remove{R} is added after ${atom}$.
        So we can conclude that $R$ is bound to a region after ${atom}$.

    \item
        Consider the case in which $R$ is in \LRb{i}.
        We showed in Case 1 that $R$ is bound to a region before $i$.

        If ${atom}$ is a unification it does not remove $R$.
        If ${atom}$ is a call to $q$,
        because $R$ is in both \LRa{i} and \LRb{i},
        $R$ cannot be mapped to a region variable
        in either \deadR{q} or in \bornR{q} (Rules L1 and L3).
        So ${atom}$ does not remove it.

        Again, no \remove{R} is added after ${atom}$ because $R$ is in \LRa{i}.

        Therefore we conclude that
        $R$ is bound to the same region after ${atom}$.
    \end{enumerate}
\end{description}
\end{proof}

\begin{theorem}
In region-annotated programs,
allocations of memory, and the associated memory write accesses, are safe.
\end{theorem}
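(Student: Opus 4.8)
The plan is to reduce the safety of memory allocations and writes to two facts: first, that every construction unification in the region-annotated program has its target region bound at the point where the construction happens; and second, that a construction unification only ever \emph{writes} into that freshly-targeted region (and the argument cells it references), never into any other region. The first fact is essentially a corollary of Theorem~\ref{theorem:boundness}; the second is a consequence of how the MMC implements construction, together with the structure of the region points-to graph.

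First I would set up the statement precisely. A construction unification $X$ \code{<=} $f(X_1,\ldots,X_n)$ in $p'$ is annotated by the transformation with a region variable $r$, namely the region variable represented by the node $n_X$ in $G_p$. At runtime this construction allocates a memory block for $f$'s arguments in the region bound to $r$ and writes the (possibly tagged) pointers to the values of $X_1,\ldots,X_n$ into the fields of that block; it performs no other write. So the only thing that can go wrong is that $r$ is not bound to any region at that program point — then the allocation would have nowhere to go, or would go to a stale/removed region. Hence it suffices to show $r$ is bound to a region just before the construction executes.

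Next I would invoke the liveness and boundness machinery. Let $i = \pp{{atom}}$ for the construction ${atom} \equiv X$ \code{<=} $f(\ldots)$. By Proposition~\ref{propo:two}, $\LRb{i} \subseteq \LRa{i}$, and since $X$ is an output of ${atom}$ that is instantiated here, $n_X$ is reachable from $X$, so $r \in \LRa{i}$. Now there are two cases. If $r \in \LRb{i}$, then by Theorem~\ref{theorem:boundness} (Case 1) $r$ is bound to a region before $i$, and since a construction never removes a region and no ${remove}$ instruction is inserted after a construction, $r$ stays bound through ${atom}$; moreover no ${remove}$ for $r$ can be inserted \emph{before} ${atom}$ either, because rule T5 applies only to region variables live after the previous point but dead before $i$, which $r$ is not. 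If $r \notin \LRb{i}$, then $\LRb{i} \subset \LRa{i}$ by Proposition~\ref{propo:three}, and by Theorem~\ref{theorem:boundness} (Case 2, subcase for constructions / or the \outlivedR{p} case), $r$ is bound to a region before ${atom}$ — either it lies in \outlivedR{p} and was bound at entry to $p$, or rule T2 inserts a \create{r} immediately before ${atom}$ (executed after any preceding ${remove}$ instructions by the ordering fixed in Algorithm~\ref{algo:insert_region_instr}), so $r$ is freshly bound. In all cases $r$ is bound to a region at the moment the construction executes, so the allocation has a well-defined target and the subsequent writes land in that block.

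Finally I would argue that these writes are themselves safe, i.e.\ they do not corrupt any live term elsewhere. The block allocated for $f$ is brand new memory within the region bound to $r$, so writing its fields cannot overwrite any previously-constructed term. The values written are pointers to (the principal functors of) $X_1,\ldots,X_n$, which by the boundness argument applied to their nodes are also stored in regions bound at this point; and the edges $(n_X,(f,i),n_{X_i})$ guaranteed by Algorithm~\ref{algo:intraproc} (see the proof of Lemma~\ref{lemma:sharing_unifs}) are exactly the inter-region references these writes create, so the region points-to graph correctly records them — meaning the region liveness analysis will have kept all the $n_{X_i}$ regions alive as long as $r$'s region is alive. The main obstacle, I expect, is not any one of these steps individually but marshalling the case analysis cleanly: one has to be careful that the ordering convention (removes-before-creates before a program point, removes after) is actually what makes the \create{r} inserted by T2 effective, and that the \outlivedR{p} subcase is not overlooked — but all of this is already encapsulated in Theorem~\ref{theorem:boundness}, so the proof here is essentially a short specialization of it to construction unifications plus the observation that constructions only write into the block they just allocated.
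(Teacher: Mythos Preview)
Your proposal is correct in spirit and reduces to the same key fact the paper uses --- Theorem~\ref{theorem:boundness} --- but it is more roundabout than necessary, and it has one small gap.

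The gap: you assert ``since $X$ is an output of ${atom}$ \ldots, $r \in \LRa{i}$''. Being an output does not by itself put $X$ into $\LVa{i}$; that requires $X$ to be used \emph{later}. If $X$ were instantly dead (a void or singleton variable), it would not appear in $\LVa{i}$, and hence $r$ need not be in $\LRa{i}$. The paper closes this hole explicitly by invoking the standing assumption (stated near rule T6) that construction unifications binding instantly-dead variables are optimized away before region analysis; you should invoke it too.

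On the route taken: once you know $r \in \LRa{i}$, the paper simply applies Theorem~\ref{theorem:boundness} to conclude $r$ is bound \emph{after} $i$, and then observes that a construction unification never creates a region, so the region must already have been bound \emph{before} the construction executes. That is the whole argument. Your case split on whether $r \in \LRb{i}$ and the subsequent appeal to T2, T5, and the \outlivedR{p} case is not wrong, but it is essentially re-deriving the internals of the proof of Theorem~\ref{theorem:boundness} for the special case of constructions rather than just using its conclusion. Your third paragraph on write safety (fresh block, valid argument pointers, edges recorded in $G_p$) is additional justification the paper does not spell out; it is sound and arguably makes the statement's scope clearer, but it is not needed for the paper's narrower notion of allocation safety, which is simply that the target region exists at the point of the construction.
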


\begin{proof}
An allocation of memory involves a construction unification.
From Theorem~\ref{theorem:loc_and_sharing},
we know the region variable corresponding to
the variable on the left hand side of the construction unification,
whose region is where the memory cell being constructed is stored.
We say that the construction unification is safe
if that region variable is bound to a region before this unification.

Consider the program point associated with the construction unification.
If the left hand side variable were instantly dead,
the unification would have been optimized away before region analysis,
so we know it is live after this unification.
This means that its region variable
must also be live after this point (Algorithm~\ref{algo:lra}).
By Theorem~\ref{theorem:boundness},
the region variable is bound to a region after the program point.
Since the construction unification does not create regions,
the region must have been created before the construction
and is available at the construction.
\end{proof}

\begin{theorem}
When a variable appears as an input argument
to an atomic goal at a program point,
we say that the variable is read at that point.
In region-annotated programs,
when a variable is read at a program point,
the term it is bound to is available.
\end{theorem}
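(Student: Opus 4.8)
The plan is to reduce the availability of the term bound to a read variable $X$ at a program point $i$ to the boundedness of the region variables in $\Reach{X}$, which Theorem~\ref{theorem:boundness} already provides. First I would show that if $X$ is read at $i$ --- i.e.\ $X$ is an input argument of $\atomat{i}$, so $X \in \inargs{\atomat{i}}$ --- then $X$ is live before $i$, that is, $X \in \LVb{i}$. When $i$ is not the first program point this is immediate from the clause $\LVb{i} = (\LVa{i} \setminus \outargs{{atom}_j}) \cup \inargs{{atom}_j}$ of Algorithm~\ref{algo:lva}. When $i$ is the first program point, where Algorithm~\ref{algo:lva} \emph{stipulates} $\LVb{i} = \inargs{p}$, I would argue that a variable which is already instantiated (hence readable) before any body goal has executed must be a formal input argument of $p$, and since our analysis runs after unused-argument elimination we get $X \in \inargs{p} = \LVb{i}$ in this case too.

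Next I would invoke Algorithm~\ref{algo:lra}, which builds $\LRb{i}$ as the union of $\Reach{V}$ over all $V \in \LVb{i}$; from $X \in \LVb{i}$ this gives $\Reach{X} \subseteq \LRb{i}$. Theorem~\ref{theorem:boundness} then says that every region variable live before $i$ is bound to a region before $i$, hence every region variable in $\Reach{X}$ is bound to a region before $i$.

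Finally I would connect boundedness to availability. By the locations part of Theorem~\ref{theorem:loc_and_sharing}, every heap cell making up the term currently bound to $X$ --- the word block holding the principal functor of $X$ itself, and recursively those of its subterms --- is stored in the region represented by some node reachable from $n_X$, hence in a region whose region variable lies in $\Reach{X}$. Since all of these region variables are bound to live, not-yet-removed regions before $i$, every cell of the term is in place and accessible at $i$, so the term is available when $X$ is read. The step I expect to be the main obstacle is making this last link rigorous: it relies on a runtime invariant --- that the heap cells of any instantiated variable actually reside only in the regions predicted by its reachable nodes in the points-to graph --- so that ``the region variable is bound to a region'' really does entail ``the cells are present and readable''; closely related is the need to dispose of the boundary case of the first program point, where the live-before set is defined by fiat rather than derived.
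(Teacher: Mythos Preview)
Your outline starts the same way as the paper --- show $X \in \LVb{i}$, hence $\Reach{X} \subseteq \LRb{i}$ --- but the final step, applying Theorem~\ref{theorem:boundness} at the single point $i$, does not close the argument. Theorem~\ref{theorem:boundness} only guarantees that each $R \in \Reach{X}$ is bound to \emph{some} region before $i$; it does not guarantee that this is the \emph{same} region that held the cells of $X$'s term when $X$ was instantiated. The transformation explicitly allows the remove-then-recreate pattern (see the discussion around Figure~\ref{fig:recreate:annotated}): a region variable can be removed at one point and freshly created at a later one, and Theorem~\ref{theorem:boundness} is perfectly happy with that. If such a remove-recreate of some $R \in \Reach{X}$ occurred between the instantiation of $X$ and the point $i$, then $R$ would be bound before $i$ yet the term data would be gone.

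The paper's proof closes exactly this gap by arguing interval-wide, not point-wise: it shows that $X$ is live at every program point from its instantiation up to $i$ (using the backward propagation in Algorithm~\ref{algo:lva} and the fact that $X$ is not an output of any intermediate goal), hence every $R \in \Reach{X}$ is live throughout that interval, and therefore none of the removal rules T3--T6 can fire for $R$ and rule~L1 prevents any callee from removing $R$. This is the missing ingredient in your plan. Your diagnosis of the ``main obstacle'' is slightly off: the runtime invariant you worry about (cells residing in the predicted regions) is already what Theorem~\ref{theorem:loc_and_sharing} gives you; the real obstacle is ruling out that those regions were reclaimed and rebound in the meantime, and that requires the scope-wide liveness argument rather than a single invocation of Theorem~\ref{theorem:boundness}.
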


\begin{proof}
When a variable is read at a program point,
the mode analysis pass of the Mercury compiler ensures
that it has been instantiated before that point.
From Theorem~\ref{theorem:loc_and_sharing},
we know the region variables in whose regions
the terms that the variable may be bound to are stored;
they are the region variables reachable from the variable.

Because the variable is read at that point,
we consider it a live variable before that point,
and therefore the region variables reachable from it
are also live before the point
(Algorithms~\ref{algo:lva} and \ref{algo:lra}).

Consider a variable $X$ that is read at a program point $i$ in a procedure $p$.
$X$ is bound in $p$
either because it is an input argument of $p$,
or because it is the output argument of some atomic goal in $p$.
Consider some execution path of $p$.
In the first case, $X$ is live before the first program point of the path.
Because it is bound by $p$'s caller, the Mercury mode system ensures that
$X$ cannot be an output of any atomic goal in $p$.
So according to Algorithm~\ref{algo:lva}, we have that
$X$ is live in the scope from before the first program point up to before $i$.
Similarly in the second case, we have that
$X$ is live in the scope from after its producing atomic goal up to before $i$.
This means that all the region variables reachable from $X$
are live during the same scope.
Therefore none of them get removed during the scope,
since rules T3, T4, T5, and T6 are not applicable,
and no procedure calls in the scope remove any of them due to rule L1.

So the term that $X$ is bound to is available at $i$
and the read at $i$ is safe.
\end{proof}

\section{Runtime Support for Regions During Forward Execution}
\label{seCsupportdet}

We now describe the runtime support needed
to execute region-annotated programs.
In this section,
we cover the support needed for forward execution,
while in the next section we will look at
the support needed for backward execution, i.e. backtracking.
The latter is much more extensive,
partly because our analyses in Section~\ref{seClra}
determine liveness only with respect to forward execution.

Let us look at the lifespan of a region during forward execution.
A region comes into existence
with the execution of a \code{create(R)} instruction
that assigns memory to the region and
binds the region variable \code{R} to a so-called {\em region handle},
which refers to the assigned memory.
From then on, terms are \emph{allocated} into the region
by construction unifications annotated with \code{R}.
When the memory referred to by the region handle bound to \code{R}
is no longer needed,
the program will end the lifetime of \code{R}
by executing \code{remove(R)}, which reclaims that memory.

This aspect of our implementation is generally similar to
the ``standard'' RBMM implementations for SML and Prolog,
which are described in detail in~\cite{Makholm00,Makholm00master}.
In our system, a region is a singly-linked list of fixed-size region pages.
Each region page has a \emph{data area},
an array of words that can be used to store program data,
and a pointer to the next region page to form the singly-linked list.
The \emph{handle} of the region,
which is how the rest of the system refers to it,
is the address of the \emph{region header}.
Besides some other fields that we will introduce later,
the header structure includes a \emph{region size record}:
a pointer to the newest region page,
and a pointer to the next available word in the newest region page.
Since region pages have a fixed size,
these two values implicitly also specify
the amount of free space in the newest region page.
As is usual in RBMM systems, we store each region header
at the start of the data area of its region's \emph{first} region page.
\footnote{
Storing the headers separately from the region pages
would require the system that now keeps track of which region pages are free
to also keep a separate free list for header-sized blocks.
This would cause fragmentation that would not occur
with the standard header-in-first-region-page design.
}
Figure~\ref{fig:regionstructure} shows a region with two region pages;
the shaded areas represent memory allocated to user data.

\begin{figure}[htp]
    \centering
    \includegraphics{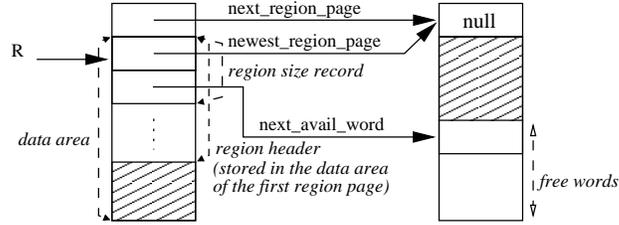}
    \small
    \caption{The data structure of a region \code{R}.}
    \normalsize
    \label{fig:regionstructure}
\end{figure}

There is no bound on the sizes of regions.
When a \emph{region is created} it will contain only one region page,
but it can be extended by adding more region pages when necessary.
The program maintains a global list of free region pages.
If the free list runs out,
the program requests a big chunk of memory from the operating system,
divides it into region pages, and adds them to the free list.
When a region needs to be extended,
we take a region page from the free list
and add it to the region as its new last region page,
and then update the region's size record.
When a \emph{region is reclaimed},
we return all its region pages to the free list.
An \emph{allocation into a region} always happens in its newest region page
simply by incrementing the pointer to the next available word.
When the amount of free memory in this region page
is not enough for the allocation,
we extend the region before allocating.

The advantage of this implementation is that
the basic region management actions are bounded in time;
even freeing all the region pages in a region can be done in constant time
(we can destructively append the region's list of pages
to the free list in constant time
because we maintain pointers to the tails as well as the heads of the lists).
Disadvantages are that
there is no natural size for the region pages \cite{Tofte04Retro},
and that if the remaining space of a region page
is not enough for an allocation,
that space will be wasted when a new region page is added.

Like most RBMM systems,
we do not do garbage collection inside regions.

\section{Runtime Support for Backtracking}
\label{seCsupportnondet}

Backtracking introduces two issues that need to be handled:
reclaiming the memory allocated by the computations backtracked over, and
ensuring that regions are reclaimed only when they are dead
with respect to both forward and backward execution.
The first issue obviously has to be handled at runtime.
For our initial implementation,
we have chosen to deal with the second issue, backward liveness,
in the runtime system too.
We expect this to give us the insights we will need later
to redesign the program analysis in Section~\ref{seClra}
to handle backward liveness both safely and precisely.
Moreover, our current system can serve as a reference for that work.

In Mercury, disjunctions are the main source of backtracking
because they provide alternatives.
However, backtracking is also possible in if-then-elses,
since they are just a special kind of disjunction:
$({if}~C~{then}~T~{else}~E)$
is semantically equivalent to
$(C, T ; {not}~{some}~[\cdots]~C, E)$.
Operationally, Mercury will try $C$.
If $C$ succeeds, Mercury executes $T$;
if $C$ fails, it executes $E$ as if $C$ had never been tried.
The handling of commit (Section~\ref{seCbgCinternal})
is related to the handling of backtracking
because committing to a solution may prune
some alternatives of relevant disjunctions.
Therefore, we need to provide runtime support for backtracking
in the context of these three language constructs.

The region-annotated program in Figure~\ref{fig:running2:annotated}
illustrates our two tasks.

\begin{figure}[htb]
\scriptsize
\vspace{2mm}
\begin{Verbatim}[frame=single,framerule=0.2pt,framesep=3pt]
   main(!IO) :-                            :- pred p(list_int, list_int, list_int, list_int).
       create(R1),                         :- mode p(in, in, out, out) is semidet.
   (1) X <= [1, 3, -1, 3] in R1,           p(X@R4, U@R5, V@R5, Y@R6) :-
       create(R2),                         (1) X => [H | T],
   (2) A <= [-2] in R2,                        ( if
       ( if                                        remove(R4),
           create(R3),                     (2)     H < 0
   (3)     p(X@R1, A@R2, B@R2, Y@R3)             then
         then                              (3)     Y  <= [H] in R6,
   (4)     io.write(B, !IO),               (4)     ( if    member(H, U)
           remove(R2),                     (5)       then  V := U
   (5)     io.write(Y, !IO),               (6)       else  V <= [H | U] in R5
           remove(R3)                              )
         else                                    else
   (6)     io.write(X, !IO),               (7)     p(T@R4, U@R5, V@R5, Y1@R6),
           remove(R1),                     (8)     ( if length(V) > length(Y1)
   (7)     io.write(A, !IO),               (9)       then fail
           remove(R2)                      (10)      else Y <= [H | Y1] in R6
       ).                                          )
                                               ).

   % mode(in, in), semidet                 % mode(in) = out, det.
   member(X, L) :-                         length(L) = N :-
       L => [H | T],                           (
       (                                           L == [], N := 0
           H == X                              ;
       ;                                           L => [_ | T],
           member(X, T)                            N := length(T) + 1
       ).                                      ).
\end{Verbatim}
\small
\caption{Illustrating the interaction of regions and backtracking.}
\label{fig:running2:annotated}
\normalsize
\end{figure}

We constructed this program,
which unfortunately has no \emph{intuitive} meaning,
to illustrate the interaction between regions and backtracking;
we will use it as our running example when describing the runtime support.
(We could find no equally useful real code of manageable size.
Also, we include the definitions of \code{member} and \code{length}
only for completeness;
their behavior is of no importance in this example.)
Regarding the lifetime of the regions,
\code{main} creates \code{R1} and \code{R2}
before the constructions of the lists \code{X} and \code{A}.
\code{main} creates \code{R3} before the call to \code{p} at (3),
and \code{p} will use this region to store the skeleton of \code{Y}.
All the \code{remove} instructions for regions are added
after the last \emph{forward} uses of the terms stored in them.
\code{member} and \code{length} only read their input variables,
so they need no region arguments.
For $p$, $\deadR{p} = \{R4\}$, $\bornR{p} = \emptyset$,
$\outlivedR{p} = \{R5, R6\}$, and $\allocation{p} = \{R5, R6\}$.

\begin{description}
\item[Task 1:]
    Preventing the reclamation of backward live regions.
    The condition of the if-then-else in \code{main}
    is the call to the semidet procedure \code{p}.
    The RBMM transformation marks the region \code{R1} for removal in the call
    because it is forward dead (it is not used in the then part)
    even though it is backward live (it \emph{is} used in the else part).
    We must make sure that \code{R1} is not actually removed
    while it is backward live.
    In this case, that means we need to delay
    the reclamation of \code{R1} until we reach the then part,
    since it is not safe to destroy \code{R1} if the condition fails.
    We therefore distinguish \emph{reclaiming} a region,
    which makes the memory of the region available for other uses
    and thus potentially destroys its contents,
    from the operation of \emph{removing} a region,
    which causes the region to be reclaimed only when it is safe to do so.
    Basically, a region is removed when it is forward dead,
    and it is reclaimed when it is both forward and backward dead.

\item[Task 2:]
    Reclaiming the memory used by backtracked-over computations.
    The call to \code{p} has two output arguments, \code{B} and \code{Y}.
    \code{main} tells \code{p} to put any cells for \code{B} in \code{R2},
    and creates \code{R3} so that \code{p} can put \code{Y} into it.
    If the condition succeeds, we must leave both regions alone.
    If the condition fails,
    we should restore \code{R2} to its size before the condition,
    and we should reclaim \code{R3} in its entirety.
\end{description}

We now define several runtime concepts
that we will use in the rest of the paper.

\noindent\textbf{Old vs new regions.}
A region is \emph{old} with respect to a point
during the execution of a program if it was created before that point,
otherwise it is \emph{new} with respect to that point.
We also refer to old regions as the \emph{existing regions}.
To allow efficient checks whether a region is old or new,
we maintain a global \emph{region sequence number} counter (starting at one)
and include a \code{sequence\_number} field in region headers.
When we create a region,
we timestamp it by setting its \code{sequence\_number} from the global counter,
and increment the counter.
When execution reaches a point in the program that sets up later backtracking,
such as the entry point of a disjunction,
we save the current sequence number.
Then all the regions which are created before that point,
i.e.\ the old regions with respect to the point,
will have their sequence numbers smaller than the saved value;
the regions which are created after that point,
i.e.\ the new regions with respect to the point,
will have their sequence numbers greater than or equal to the saved value.
When the program backtracks to that point, we can use the saved value
to check whether a region has been created before or after the point.
In the context of RBMM,
the memory that we want to reclaim at a resumption point
will be new allocations into existing regions,
and new regions in their entirety
(since they have been created
by the computation we have just backtracked over).

\noindent\textbf{Region list.}
To do instant reclaiming of new regions,
knowing the sequence numbers of the new regions is not enough;
we also need to \emph{reach} them.
We therefore link all the live regions into a doubly-linked \emph{region list}
(using two additional pointers in the region header).
We maintain a global pointer to the head of the list,
which will be the newest live region.
When a region is created, we add it to the head of the region list;
when a region is reclaimed, we remove it from the list.
We maintain the invariant that
the region list is ordered by regions' creation time, newest first.
To reclaim new regions,
we can traverse the region list from its head
and reclaim each region until we meet an old one.

\noindent\textbf{Region size snapshots.}
To do instant reclaiming of new allocations into an existing region,
we need the old size of the region.
When we need to remember the size of a region at a point,
we can save its region size record at that point.

\noindent\textbf{Protection.}
We will prevent the destruction of backward live regions
by \emph{protecting} them so that
when a removal happens to the region during forward execution,
the removal will be ignored.

\noindent\textbf{Changes to live regions by a goal.}
When providing support for backtracking,
sometimes we want to know about the changes which may be caused
by a goal to the set of regions the goal may refer to.
This means we need to know about
any new regions the goal creates, any live regions the goal removes,
and any live regions in which the goal performs allocations.
We refer to these sets of regions as the goal's
\emph{created}, \emph{removed}, and \emph{allocated} sets, respectively.
We have computed several sets of region variables for procedures,
such as $\inputRegs$, $\bornRegs$, $\deadRegs$, and $\allocationRegs$.
The \emph{created}, \emph{removed}, and \emph{allocated} sets of goals
can be computed from these in a fairly straightforward manner,
as shown by the following paragraphs.

\noindent\textbf{Changes to live regions by a goal: creation.}
Only \code{create} instructions and procedure calls may create regions.
A \code{create} instruction always creates the region in its argument.
A procedure call will create the regions
that are the actual region arguments corresponding to the formal arguments
in the ${bornR}$ set of the called procedure.
For a compound goal,
its created set is the set of all regions created inside it,
either directly or through a procedure call,
even if the region is also removed later,
because at compile time we cannot know
whether a removed region is actually reclaimed.

\noindent\textbf{Changes to live regions by a goal: removal.}
We can similarly use
\code{remove} instructions and the $\deadRegs$ sets of procedures
to compute the removed set of each goal.
Some of these regions may be removed, created and removed again.
Since we only care about the old regions which are removed inside a goal,
we exclude regions created inside the goal (i.e.\ the goal’s created set)
from its removed set.

\noindent\textbf{Changes to live regions by a goal: allocation.}
A region is allocated into by construction unifications and by procedure calls.
A construction unification will allocate
into the region with which it is annotated.
A procedure call possibly allocates into the regions of region variables
that are mapped to by those in the procedure's $\allocationRegs$ set.
Because we are only interested in allocations in old regions
(allocations into new regions being reclaimed by reclaiming the whole region),
we restrict the allocated set
to the regions in $\inputRegs \cap \allocationRegs$.

\noindent\textbf{Changes to live regions by a goal: an example.}
Take the condition of the if-then-else in the procedure \code{p}
in Figure~\ref{fig:running2:annotated} as an example goal.
We say that the region \code{R4} is removed in the condition
because \code{R4} is live before the condition
and \code{remove(R4)} has been added to the condition.
Or take the condition of the if-then-else in \code{main}.
We say region \code{R3} is created in the condition
because \code{create(R3)} has been inserted into the condition,
while region \code{R1} is removed in the condition
because it is live before the condition
and is removed in the call to \code{p}.
We have $\allocation{p} = \{R5, R6\}$,
but while \code{R5} is an input argument of \code{p}, \code{R6} is not,
so the only \emph{old} region \code{p} allocates into is \code{R5}.
So the allocation set of the condition in \code{main} is \code{R2},
since ${\code{R2}} = \alpha{({\code{R5})}}$.

We provide the runtime support for backtracking for a program
by generating extra supporting code at the right places to achieve our goals.
In the next three subsections we will describe in detail
the support for disjunctions, if-then-elses, and commits.

\subsection{Support for Disjunctions}
\label{seCsupportnondetCdisjunction}

The Mercury compiler supports only one search strategy:
depth-first search with chronological backtracking,
so that the disjuncts of each disjunction are tried in order.
Given a disjunction \code{(g1; \ldots; gi; \ldots; gn)},
we refer to \code{g1} as the first disjunct,
to the \code{gi}s for all $1 < i < n$ as middle disjuncts,
and to \code{gn} as the last disjunct of the disjunction.
We will also use ``later disjunct'' to refer to any \code{gi} for $i > 1$.

A disjunction can have any determinism.
The most general determinism is of course nondet,
but if one of the disjuncts always has at least one solution,
then the disjunction as a whole does too,
so a disjunction can also be multi.
And if the disjunction has no outputs
(which happens frequently for disjunctions in the conditions of if-then-elses),
then the disjunction as a whole cannot have more than one solution,
which means that it will be either det or semidet,
depending on whether it has an always-succeeding disjunct.
(Typical programs do not contain det disjunctions,
since they are equivalent to \code{true}.)

For our purposes, the important distinction is between
nondet and multi disjunctions on the one hand,
in which backtracking may reach a later disjunct
from code executed outside the disjunction,
\emph{after} the success of a previous disjunct,
and semidet and det disjunctions on the other hand,
in which backtracking to a later disjunct
is possible only from code \emph{within} an earlier disjunct.\footnote{
Semidet code in Mercury never does deep backtracking;
it only ever does local, shallow backtracking.
Semidet procedures return a success/failure indication,
which is then tested by the caller.
An arm of a semidet disjunction \emph{can} call nondet code,
but only if that nondet code is wrapped in a commit (see later);
the commit will convert any deep backtracks done by the code inside it
to shallow backtracking for the code outside it.
}
Since we do not care about the minimum number of solutions of each disjunction,
our support treats multi disjunctions the same as nondet ones
and det disjunctions the same as semidet ones.
In the following,
we will therefore talk only about nondet and semidet disjunctions.
We consider nondet disjunctions first, since they are more general.

Figure~\ref{fig:supportdisj} shows in pseudo-code form
the supporting code we add to a nondet disjunction.
We insert code at the following points:
(d1) which is the start of the first disjunct,
(d2) which represents the start of every middle disjunct,
and (d3) which is the start of the last disjunct.
These code fragments communicate
using shared data in what we call a \emph{disj frame}.
Each entry to a disjunction creates a new disj frame.
Since multiple nested disjunctions can be active at the same time,
we link these frames together to form the \emph{disj stack}
(this is possible due to chronological backtracking).
The disj stack is not a separate stack;
we reserve space for its frames
in the usual stacks used by the Mercury language implementation.
We maintain a global pointer to the top disj frame on the disj stack.

\begin{figure}[tb]
\scriptsize
\begin{Verbatim}[frame=single,framerule=0.2pt,framesep=3pt]
...,
( (d1): start of the disjunction and also of the first disjunct
        (a) push a disj frame
        (b) save the global region sequence number
        (c) save region size records and their number
    g1
; ...
; (d2): start of a middle disjunct
        (a) do instant reclaiming of new regions
        (b) do instant reclaiming of allocations in old regions
    gi
; ...
; (d3): start of the last disjunct
        (a) do instant reclaiming of new regions
        (b) do instant reclaiming of allocations in old regions
        (c) pop the disj frame
    gn
), ...
\end{Verbatim}
\small
\caption{RBMM runtime support for nondet disjunctions.}
\label{fig:supportdisj}
\normalsize
\end{figure}

A disj frame has a fixed part and a nonfixed part.
In Figure~\ref{fig:disjframe}, the fixed part is the 4-slot box
separated by a thick line from the nonfixed part.
The four slots in the fixed part are:
\begin{itemize}
\item
The \code{prev\_disj\_frame} slot holds the pointer to the previous disj frame,
or null if there is none.
\item
The \code{saved\_seq\_num} slot holds
the value of the global region sequence number
at the time when the disjunction was entered.
\item
The \code{num\_prot\_region} field
gives the number of regions which are protected by a semidet disjunction
(which we will discuss later).
For a nondet disjunction, this slot will contain zero.
\item
The \code{num\_size\_rec} field gives
the number of region size records saved in the nonfixed part.
\end{itemize}

\begin{figure}[htp]
\begin{center}
\expandafter\ifx\csname graph\endcsname\relax
   \csname newbox\expandafter\endcsname\csname graph\endcsname
\fi
\ifx\graphtemp\undefined
  \csname newdimen\endcsname\graphtemp
\fi
\expandafter\setbox\csname graph\endcsname
 =\vtop{\vskip 0pt\hbox{%
    \special{pn 8}%
    \special{pa 0 150}%
    \special{pa 4500 150}%
    \special{pa 4500 0}%
    \special{pa 0 0}%
    \special{pa 0 150}%
    \special{fp}%
    \special{pa 0 300}%
    \special{pa 4500 300}%
    \special{pa 4500 150}%
    \special{pa 0 150}%
    \special{pa 0 300}%
    \special{fp}%
    \special{pa 0 450}%
    \special{pa 4500 450}%
    \special{pa 4500 300}%
    \special{pa 0 300}%
    \special{pa 0 450}%
    \special{fp}%
    \special{pa 0 600}%
    \special{pa 4500 600}%
    \special{pa 4500 450}%
    \special{pa 0 450}%
    \special{pa 0 600}%
    \special{fp}%
    \special{pa 0 750}%
    \special{pa 4500 750}%
    \special{pa 4500 600}%
    \special{pa 0 600}%
    \special{pa 0 750}%
    \special{fp}%
    \special{pa 0 900}%
    \special{pa 4500 900}%
    \special{pa 4500 750}%
    \special{pa 0 750}%
    \special{pa 0 900}%
    \special{fp}%
    \special{pa 0 1050}%
    \special{pa 4500 1050}%
    \special{pa 4500 900}%
    \special{pa 0 900}%
    \special{pa 0 1050}%
    \special{fp}%
    \special{pa 0 1200}%
    \special{pa 4500 1200}%
    \special{pa 4500 1050}%
    \special{pa 0 1050}%
    \special{pa 0 1200}%
    \special{fp}%
    \special{pa 0 1350}%
    \special{pa 4500 1350}%
    \special{pa 4500 1200}%
    \special{pa 0 1200}%
    \special{pa 0 1350}%
    \special{fp}%
    \special{pa 0 580}%
    \special{pa 4500 580}%
    \special{fp}%
    \special{pa 0 590}%
    \special{pa 4500 590}%
    \special{fp}%
    \graphtemp=.5ex
    \advance\graphtemp by 0.075in
    \rlap{\kern 0.050in\lower\graphtemp\hbox to 0pt{prev\_disj\_frame\hss}}%
    \graphtemp=.5ex
    \advance\graphtemp by 0.225in
    \rlap{\kern 0.050in\lower\graphtemp\hbox to 0pt{saved\_seq\_num\hss}}%
    \graphtemp=.5ex
    \advance\graphtemp by 0.375in
    \rlap{\kern 0.050in\lower\graphtemp\hbox to 0pt{num\_prot\_region\hss}}%
    \graphtemp=.5ex
    \advance\graphtemp by 0.525in
    \rlap{\kern 0.050in\lower\graphtemp\hbox to 0pt{num\_size\_rec\hss}}%
    \graphtemp=.5ex
    \advance\graphtemp by 0.675in
    \rlap{\kern 0.050in\lower\graphtemp\hbox to 0pt{prot\_region\_id\hss}}%
    \graphtemp=.5ex
    \advance\graphtemp by 0.825in
    \rlap{\kern 0.050in\lower\graphtemp\hbox to 0pt{...\hss}}%
    \graphtemp=.5ex
    \advance\graphtemp by 0.975in
    \rlap{\kern 0.050in\lower\graphtemp\hbox to 0pt{snapshot\_region\_id\hss}}%
    \graphtemp=.5ex
    \advance\graphtemp by 1.125in
    \rlap{\kern 0.050in\lower\graphtemp\hbox to 0pt{snapshot\_size\_record\hss}}%
    \graphtemp=.5ex
    \advance\graphtemp by 1.275in
    \rlap{\kern 0.050in\lower\graphtemp\hbox to 0pt{...\hss}}%
    \graphtemp=.5ex
    \advance\graphtemp by 0.075in
    \rlap{\kern 1.500in\lower\graphtemp\hbox to 0pt{\emph{(previous disj frame)}\hss}}%
    \graphtemp=.5ex
    \advance\graphtemp by 0.225in
    \rlap{\kern 1.500in\lower\graphtemp\hbox to 0pt{\emph{(saved sequence number)}\hss}}%
    \graphtemp=.5ex
    \advance\graphtemp by 0.375in
    \rlap{\kern 1.500in\lower\graphtemp\hbox to 0pt{\emph{(number of protected regions)}\hss}}%
    \graphtemp=.5ex
    \advance\graphtemp by 0.525in
    \rlap{\kern 1.500in\lower\graphtemp\hbox to 0pt{\emph{(number of saved region snapshots)}\hss}}%
    \graphtemp=.5ex
    \advance\graphtemp by 0.675in
    \rlap{\kern 1.500in\lower\graphtemp\hbox to 0pt{\emph{(handle of a protected region)}\hss}}%
    \graphtemp=.5ex
    \advance\graphtemp by 0.975in
    \rlap{\kern 1.500in\lower\graphtemp\hbox to 0pt{\emph{(handle of a region in a snapshot)}\hss}}%
    \graphtemp=.5ex
    \advance\graphtemp by 1.125in
    \rlap{\kern 1.500in\lower\graphtemp\hbox to 0pt{\emph{(snapshot size record of that region)}\hss}}%
    \hbox{\vrule depth1.350in width0pt height 0pt}%
    \kern 4.500in
  }%
}%
\centerline{\box\graph}
\end{center}
\caption{The structure of a disj frame.}
\label{fig:disjframe}
\end{figure}

\noindent\textbf{Disj-protecting backward live regions.}
Consider a region which was created before the execution of a disjunction.
Assume that this region is removed during forward execution,
either by the code of a disjunct,
or after the success of that disjunct
by code following and outside the disjunction,
but that this region is backward live with respect to
a later disjunct of the disjunction.
In this case, we need to make sure that
if the region is removed during forward execution,
it will not be actually reclaimed.
Of course, the instruction that removes the region may not be reached
because forward execution may fail before it gets there.
But in general, we have to assume that
the \code{remove} instruction \emph{will} be executed,
and that if the region may be needed after backtracking,
we will need to prevent it from being reclaimed
during the forward execution.
We achieve this by \emph{disj-protecting} such regions as follows.
At the start of the disjunction, at (d1),
we push a disj frame on the disj stack
and save the current global sequence number into
the \code{saved\_seq\_num} slot of the disj frame.
A region is disj-protected by a disj frame if its sequence number
is smaller than the sequence number saved in that disj frame.
The \code{remove} instruction will only reclaim a region
if the region is not disj-protected.
Due to chronological backtracking,
the order of the frames on the disj stack
always corresponds to the order of the creation of those frames.
Together with the fact that
the global region sequence number is monotonically increasing,
this implies that if a region is protected by a disj frame,
it is also protected by all the later frames on the disj stack.
This invariant means that to check if a region is disj-protected or not,
we only need to check if it is protected by the top disj frame.

The program will no longer backtrack into a disjunction
after starting the execution of its last disjunct.
This means that no regions need to be protected any more by this disjunction.
Therefore, at the start of the last disjunct, at (d3),
we disj-unprotect them by popping the disj frame.
The regions which had previously been protected only by this disj frame
will be reclaimed when execution reaches their \code{remove} instructions.

\noindent\textbf{Instant reclaiming of new regions.}
When the program backtracks to a later disjunct,
we want to reclaim all the regions that have been created
during the computation that has just been backtracked over,
i.e.\ all the regions that were created after entry to the disjunction.
At (d1), we saved the global sequence number in the disj frame.
Therefore at the start of a later disjunct of the disjunction,
at (d2) and (d3),
we just need to traverse the region list,
and reclaim all the regions we see until we encounter a region
whose sequence number indicates that it was created before the disj frame.

\noindent\textbf{Instant reclaiming of new allocations in old regions.}
When arriving at a later disjunct,
we want to restore all the regions that existed before the disjunction
to the sizes they had when entering the disjunction,
recovering any memory that has been allocated in them.
For each old region,
we need to save the region's size record
in the nonfixed part of the disjunction's disj frame at (d1),
so that we can restore the region's size at (d2) and (d3).
We need three slots for each region:
one for the region handle so that we know
to which region the saved record belongs,
and the other two for the record itself (see Figure~\ref{fig:disjframe}).
To be able to loop through the saved records
and restore the regions at (d2) and (d3),
we store the number of saved records in the fixed \code{num\_size\_rec} slot.
The first saved record can be located by taking the address of the frame,
and adding both the size of the fixed part
and the number of slots for protected regions
(which is zero for nondet disjunctions).

The set of regions that existed before the disjunction
and that may be allocated into by code following the disjunction
is not available to the compiler.
In theory, we could implement a global analysis to make it available,
but such an analysis would be very complicated,
especially for multi-module programs.
Even if such an analysis existed,
we would still have a big problem,
which is that the number of regions in this set is not bounded,
and in many cases the set would contain
tens, hundreds or even thousands of regions.
Saving and then restoring the sizes of that many regions
can take a significant amount of both memory and time.
We do not want this overhead to outweigh the benefits of instant reclaiming.

In our implementation,
we have chosen to save and restore the sizes of only the regions
that are \emph{locally} forward live at the start of the disjunction;
this means the regions that are forward live before the disjunction
\emph{and} whose region variables are visible at that point.
(This information is readily available inside the Mercury compiler.)
This means that we do not recover memory in regions
that are forward live before the disjunction
but whose identity was not passed to the current procedure,
and are visible only in its ancestors.
Since nondet disjunctions are quite rare in most Mercury programs
(most programs that do serious searching tend to program their own searches
instead of relying on chronological backtracking),
we do not expect this to be too much of a problem.
We will see below that
we do not miss memory recovery opportunities for semidet disjunctions.

We save and restore the sizes of \emph{all} regions
that are locally forward live at the start of the disjunction
(the number of these regions governs
how much space we reserve for the nonfixed part of the disj frame).
We save and restore the sizes even of regions
that are never allocated into before backtracking,
since (in the absence of the analysis mentioned above)
we do not know which ones of those are.
This may lead to some unnecessary saving and restoring,
but in typical programs,
the number of regions whose size we save and restore at a disjunction
is usually relatively small,
and in that case the memory or runtime cost
of these unnecessary saves and restores is negligible.
In some cases, however, the cost can be significant,
and an optimization that eliminates saves/restores
with a poor cost/benefit ratio would be useful.
Such an optimization would probably need access
to profiling information about region reclamation.
We do not yet generate such information.

\noindent\textbf{Specialized treatment of semidet disjunctions.}
Because at most one disjunct of a semidet disjunction may succeed,
when one of its disjuncts is reached,
it means that all the previous disjuncts have failed and
that therefore (more importantly for us)
execution has not passed outside the disjunction's scope.
Therefore, we only need to provide runtime support for a semidet disjunction
if in its scope there is some change
with respect to the set of existing regions.
This basically means that
the runtime support for nondet disjunctions described above
will only be applied to semidet disjunctions
whose created, removed and allocated sets are not all empty.
In our practical experience with Mercury,
most semidet disjunctions contain only tests,
and rarely make changes to the heap.
Therefore the support we describe below
is needed only by a relatively small fraction of semidet disjunctions.

For a semidet disjunction, the Mercury compiler generates code such that
when one of its non-last disjunct succeeds,
the execution will commit to it and not go back to try any later disjuncts.
This means the code we add at (d3) may not be reached
after the success of a non-last disjunct, causing two problems.
First, the disj frame will not be popped.
Second, the regions which are removed by this disjunction
but are protected against reclamation while later disjuncts exist
will not be
first unprotected at the start of the execution of the last disjunct
and then reclaimed in the body of the last disjunct,
as in the case of nondet disjunctions.
Our solution is to do these two tasks at the end of any non-last disjuncts,
i.e.\ after their success at (e1) and (e2)
as in Figure~\ref{fig:supportsemidisj}.
\begin{figure}[htb]
\scriptsize
\begin{Verbatim}[frame=single,framerule=0.2pt,framesep=3pt]
...,
( (d1): start of the disjunction and of the first disjunct
        (a) push a disj frame
        (b) save the global region sequence number
        (c) save region size records and their number
        (d) save protected regions and their number
    g1
  (e1): end of the first disjunct
        (a) reclaim protected regions
        (b) pop the disj frame
; ...
; (d2): start of a middle disjunct
        (a) do instant reclaiming of new regions
        (b) do instant reclaiming of allocations in old regions
    gi
  (e2): end of a middle disjunct
        (a) reclaim protected regions
        (b) pop the disj frame
; ...
; (d3): start of the last disjunct
        (a) do instant reclaiming of new regions
        (b) do instant reclaiming of allocations in old regions
        (c) pop the disj frame
    gn
), ...
\end{Verbatim}
\small
\caption{RBMM runtime support for semidet disjunction.}
\label{fig:supportsemidisj}
\normalsize
\end{figure}

To solve the first problem, we pop the frame at (e1.b) and (e2.b).
To solve the second problem,
at (d1) we loop through the regions in the disjunction's removed set.
If a region is already protected,
we do not want it to be reclaimed in the disjunction
and its \code{remove} instructions inside the disjunction
will be ineffective anyway,
so we do not need to do anything.
If a region is not already protected,
we save its handle in the nonfixed part of the disj frame.
At the end, we store the number of region handles we saved
in the frame's \code{num\_prot\_region} slot.
The code at (e1.a) and (e2.a) will loop through the saved handles,
and reclaim all the saved regions
(they were logically removed during the disjunct,
but the protection of this disjunction
prevented their \code{remove} instructions from actually reclaiming them.)

At (d1.c), we save the sizes
of only the regions in the disjunction's allocated set.
Since execution cannot leave a semidet disjunction,
we do not miss any memory recovery opportunities
by restricting ourselves to these regions.

\subsubsection{Disjunctions: Summary}
To summarize Section \ref{seCsupportnondetCdisjunction},
we review how we handle Tasks 1 and 2 for disjunctions;
first nondet disjunctions, and then semidet disjunctions.

We prevent the reclamation of backward live regions (Task 1)
by disj-protecting all regions whose sequence number indicates
they were created before the disjunction was entered.
The protection of such regions
starts at the beginning of the first disjunct (d1.a and d1.b),
and ends at the beginning of the last disjunct (d3.c).
Such regions are no longer protected by this disjunction
during the execution of the last disjunct,
so that if they are removed, they can be reclaimed.

Task 2, the reclaiming of memory, consists of two parts.
Instant reclaiming of new regions happens
at the beginning of every nonfirst disjunct (at d2.a and d3.a);
the new regions are identified as such by their sequence numbers.
Instant reclaiming of new allocations in old regions
also happens at the beginning of every nonfirst disjunct (at d2.b and d3.b).
To allow us to restore each old region to its state before the disjunction,
each disj frame contains a list
of the old regions that are allocated into during the disjunction,
together with the sizes of these regions
at the start of the disjunction (d1.c).

Task 1 needs extra support in the case of semidet disjunctions.
The disj frames of such disjunctions have a list of the disj-protected regions,
namely the regions in the removed list of the disjunction
which are disj-protected only by this disj frame (set at d1.d).
We use this list to explicitly reclaim these regions
if a nonlast disjunct succeeds (e1 and e2).

\subsection{Support for If-then-elses}
\label{seCsupportnondetCite}

The condition of an if-then-else (ite) can be either semidet or nondet.
In most Mercury programs, the overwhelming majority are semidet,
and this is the case we will look at first.
Such if-then-elses share some properties with semidet disjunctions.
If the condition succeeds, the execution will never enter the else part,
and if the condition fails,
the failure must have occurred in the scope of the condition.

Like disjunctions, if-then-elses
need to protect regions from being reclaimed while backward live.
But in the case of if-then-elses,
we can restrict our attention to regions removed in the condition
(i.e., in the condition's removed set),
since this is the only part of the code in which
the if-then-else itself can make a region backward live.
When execution reaches the start of the then part,
backtracking to the else part is no longer possible, which means that
any regions that have been marked for removal in the condition
have to be reclaimed for real, unless they are protected by a surrounding scope.

Also, if-then-elses, like disjunctions, should do instant reclaiming
of memory allocated by backtracked-over computations.
In the case of if-then-elses,
this means that at the start of the else part,
we should recover any memory allocated by the condition.

In general, we only need to provide support
for changes to regions which occur inside the condition.
This is good, because the conditions of if-then-elses are often very simple,
containing only one or a few tests.
Conditions whose created, removed and allocated sets are all empty
are therefore fairly common.
For such if-then-elses, the mechanisms we describe below are unnecessary,
and so we optimize them away.
If at least one these three sets is not empty,
we add code at the starts of the condition, the then part, and the else part,
i.e., at points (i1), (i2), and (i3) in Figure~\ref{fig:supportite}.
\begin{figure}[htb]
\scriptsize
\begin{Verbatim}[frame=single,framerule=0.2pt,framesep=3pt]
( if
    (i1): start of the condition
        (a) push an ite frame
        (b) save the protected regions and their number
        (c) save size records and their number
    ...
  then
    (i2): start of the then part
        (a) reclaim the ite-protected regions
        (b) pop the ite frame
    ...
  else
    (i3): start of the else part
        (a) unprotect the ite-protected regions
        (b) do instant reclaiming of new regions
        (c) do instant reclaiming of allocations in old regions
        (d) pop the ite frame
    ...
)
\end{Verbatim}
\small
\caption{RBMM runtime support for if-then-else with semidet condition.}
\label{fig:supportite}
\normalsize
\end{figure}

For each if-then-else, we use a data structure called an \emph{ite frame}
to store the information used for its runtime support.
As with disj frames,
we embed ite frames in the ordinary stacks used by the Mercury implementation,
and link them together into the \emph{ite stack},
with a global variable pointing to its top.
The structure of an ite frame is exactly analogous to that of a disj frame,
the only difference being that
the first slot of the fixed part, \code{prev\_ite\_frame},
holds a pointer to the previous ite frame, or null if there is none.

\noindent\textbf{Ite-protecting backward live regions.}
Since the compiler knows the regions in the removed set of the condition
(in our example in Figure \ref{fig:running2:annotated},
\code{R1} is such a region),
we will stop them from being reclaimed by \emph{ite-protecting} them
at the entry to the if-then-else.
To allow us to ite-protect regions,
we add to the region header a pointer field, \code{ite\_protected},
which is set to null when a region is created.
A region is ite-protected if its \code{ite\_protected} field is not null.
The \code{remove} instruction will now only reclaim a region if its
\code{ite\_protected} field is null and it is not disj-protected.
(We do not use the same protection mechanism as in the case of disjunctions.
We will explain the reason for this
when we describe how we handle if-then-elses with nondet conditions.)
Before entering the condition, i.e.\ at (i1), we push an ite frame,
and then iterate over the to-be-protected regions.
If one of these regions is already protected
by a surrounding disjunction or if-then-else,
we ignore it.
Otherwise, we protect it
by setting its \code{ite\_protected} field, which must be currently null,
to point to the ite frame.
For such a protected region,
we add its handle to a \code{region\_id} slot
in the nonfixed part of the ite frame.
Then we also put the final number of regions we protect in this way
into the frame's \code{num\_prot\_region} slot.
We do this so that
we can loop over all the regions protected by this ite frame in two places:
at the start of the then part (i2.a), where we reclaim all these regions
(giving delayed effect to the \code{remove} instructions in the condition),
and at the start of the else part (i3.a), where we undo their protection
by resetting their \code{ite\_protected} fields to null.

\noindent\textbf{Instant reclaiming.}
When the condition fails, we want to reclaim
both the new regions created inside it
and any new allocations into old regions.
In our example in Figure~\ref{fig:running2:annotated}
we want to reclaim all of \code{R3} and some of \code{R2}.

To reclaim new regions,
at (i1.a) we save the current sequence number
into the new frame's \code{saved\_seq\_num} slot,
and at (i3.b), we add code that traverses the region list
and reclaims all the regions until it meets an old region.

To reclaim new allocations into an old region,
at (i1.c) we save its size record into the nonfixed part of the ite frame.
Although it is reasonable to do this
for the regions in the allocated set of the condition,
it would be wasteful to reclaim new allocations into the regions
which will be reclaimed right at the start of the else part.
Unfortunately, while the compiler knows which old regions
have \code{remove} instructions at the start of the else part,
it does not know which of these will actually reclaim their regions,
since it does not know which regions are protected by surrounding code.
We handle this uncertainty as follows.
We generate code at (i1.c) for every old region which is live at that point.
For those that are not removed at the start of the else branch,
this code always saves their size records unconditionally.
For those that are removed at the start of the else branch,
this code checks whether they are protected \emph{before} this if-then-else,
and saves their size records only if they are.
This is an optimization
because the test to see if a region is protected
takes less time than
saving its size record, and restoring it if the condition fails.
We record the number of size records
we saved in the \code{num\_size\_record} slot,
so that code at (i3.c) can restore them all.

The final action of the support code
for an if-then-else with a semidet condition
is to pop the ite frame at either (i2.b) or (i3.d).

\noindent\textbf{If-then-else with nondet condition.}
Unlike Prolog,
Mercury allows the condition of an if-then-else to have more than one solution.
If the condition is nondet,
then execution can backtrack into the condition
from the then part or later code.
This poses two problems we need to solve.

First,
since the condition can succeed more than once,
the code we add at the start of the then part (i2)
can also be executed more than once.
Because we need the ite frame every one of these times,
we cannot let the code pop it at (i2.b);
we must keep it until after the last time it may be used,
i.e., after the last success of the condition.
We arrange for this to happen
by modifying the way the code generator handles the failure of the condition.

Normally, the code generator arranges for failures of the condition
\emph{before} the condition succeeds for the first time
to cause a branch to the start of the else part,
while a failure of the condition \emph{after} it has succeeded
represents a failure of the if-then-else as a whole,
and will be handled accordingly,
in whatever way the surrounding context demands.
For example, if the if-then-else is one disjunct of a disjunction,
its failure will cause execution to resume at the start of the next disjunct.
We call the place to branch to on failure of the whole if-then-else
the \emph{failure continuation}.

We modified the code generator so that if the nondet condition
needs support for region operations,
i.e., it has a nonempty created set, removed set or allocated set,
we branch to the failure continuation
only after we execute code to pop the ite frame,
the same code that for semidet conditions we would execute at (i2.b).

Second, the condition being nondet means
that it must include, directly or indirectly, a nondet disjunction
(since this is the only Mercury construct that can introduce nondeterminism).
Therefore we must ensure that the supporting code fragments we generate
for the if-then-else and the disjunction inside it
do not step on each other's toes.

Our support for if-then-elses with semidet conditions provides ite-protection
for regions in the condition's removed set
that are not yet protected before the if-then-else.
For such a region in a nondet condition, there are two cases.
The first case is when the region is removed
before the first nondet disjunction inside the condition.
That means that when the \code{remove} instruction is executed,
the region is ite-protected but not disj-protected.
The \code{remove} instruction will (correctly) not reclaim it.
Later on, the region will be reclaimed
when the condition succeeds for the first time by
the supporting code added at (i2).
Because the program may backtrack into the condition and
may reach the then part again,
when the region is reclaimed at (i2.a),
we need to nullify its entry in the ite frame
so that it will not be wrongly reclaimed again
the next time execution reaches (i2.a).
This explains our saving of the pointer to the ite frame
in the \code{ite\_protected} field in the region header of a protected region.

In the second case, the region is removed after the start
of the first disjunction in the condition,
either in the disjunction itself or at some point after it.
In an execution containing a non-last disjunct,
when the \code{remove} instruction is encountered
the region is not reclaimed because it is both ite- and disj-protected.
We need to ensure that if the condition succeeds
and execution reaches the then part,
the region should not be reclaimed at (i2) because
it may be needed when execution backtracks into the condition.
We therefore put different code at (i2.a) if the condition is nondet;
this code will reclaim a region only if it is not currently disj-protected
(Figure~\ref{fig:supportite:nondet}).
The region will remain both ite- and disj-protected until
the execution enters the last disjunct,
at that time it will lose its disj-protection
(Section \ref{seCsupportnondetCdisjunction}).
When the \code{remove} instruction in the condition is executed after this,
it will not reclaim the region because it is still ite-protected,
but the code at (i2.a) will reclaim it.

\begin{figure}[tb]
\scriptsize
\begin{Verbatim}[frame=single,framerule=0.2pt,framesep=3pt]
for each saved region_id
    if region_id != null && !is_disj_protected(region_id)
        reclaim the region
        region_id  = null
\end{Verbatim}
\small
\caption{Code at (i2.a) for if-then-else with nondet condition.}
\label{fig:supportite:nondet}
\normalsize
\end{figure}

When the nondet condition fails, in both cases above,
the region is only ite-protected, not disj-protected.
It is because in the first case, the region is never disj-protected and
in the second case, the failure happens only after all the disjuncts of the
nondet code have been tried and failed,
and the region has been disj-unprotected at the start of the last disjunct.
This situation is exactly the same as when a semidet condition fails.
Therefore the code at (i3)
is exactly the same for nondet conditions as for semidet conditions.

\subsubsection{If-then-elses: Summary}
To summarize Section \ref{seCsupportnondetCite},
we review how we handle Tasks 1 and 2 for if-then-elses;
first if-then-elses with semidet conditions,
and then those with nondet conditions.

We prevent the reclamation of backward live regions (Task 1)
by ite-protecting any regions that are removed in the condition,
but are backward live, and are not protected by any other mechanism.
The mechanism we use for ite-protection takes the form
of \code{ite\_protected} fields in region headers:
if this field is not null, the region is ite-protected.
At the beginning of the condition (i1.a and i1.b),
we set this field to point to the ite frame of the if-then-else
for all the regions that meet the conditions listed above.
If the condition succeeds, then execution enters the then part,
and the code at (i2.a) reclaims these regions
(since backtracking to the else case is no longer possible,
and the regions are therefore no longer backward live).
If the condition fails, code at (i3.a) unprotects these regions.

Task 2 consists of two parts.
Instant reclaiming of new regions
happens at the beginning of the else part (at i3.b);
as with disjunctions,
new regions are identified as such by their sequence numbers.
Instant reclaiming of new allocations in old regions
also happens at the beginning of the else part (at i3.c).
To allow us to restore the size of the old regions,
each ite frame contains a list of old live regions,
together with their sizes at the start of the if-the-else (set at i1.c).

We need extra support for nondet conditions.
The reclaiming at the beginning of the then part
has to be done only if the region is not disj-protected
by a disjunction inside the condition.
The code that executes this reclaiming
executes once for each success of the condition.
A region may be unprotected by disjunctions inside the condition
for more than one of these executions, yet it must be reclaimed only once.
This is why after we reclaim a region
whose protection by a nondet if-then-else has just expired,
we remove it from the list of regions protected by that if-then-else.

\subsection{Support for Commit}
\label{seCsupportnondetCcommit}
When the goal inside a commit succeeds for the first time,
we commit to that solution by
discarding the inner goal's outstanding alternatives.
We call the point in the code where this happens the \emph{commit point}.
If the inner goal is nondet (rather than multi), it may also fail.
When it fails, the compiler's failure-handling mechanism
causes execution to pass through a \emph{failure point}
before the program resumes forward execution
at the resumption point of the next surrounding goal.
The failure point is there to allow the execution of some cleanup code.
We add code to support region operations at two or three points
in Figure~\ref{fig:supportcommit}:
the entry point of the commit (c1),
the commit point (c2),
and the failure point (c3).
If the inside goal has determinism multi,
there is no (c3) to modify as execution would never reach there.

\begin{figure}
\scriptsize
\begin{Verbatim}[frame=single,framerule=0.2pt,framesep=3pt]
some [...]
  (c1): entry to the commit
    (a) push a commit frame
    (b) save the sequence number
    (c) save the pointer to the top disj frame
    (d) save the pointer to the top ite frame
    (e) save the to-be-reclaimed old regions and their number
  ( the inner goal )

  (c2): commit point
    (a) reclaim the saved old regions
    (b) reclaim the new regions
    (c) restore the state of the disj stack
    (d) restore the state of the ite stack
    (e) pop the commit frame

  (c3): failure point
    (a) restore status of the saved regions
    (b) pop the commit frame
\end{Verbatim}
\small
\caption{RBMM runtime support for commit.}
\label{fig:supportcommit}
\normalsize
\end{figure}

Consider a region that is in the removed set of a commit goal.
If it is already protected by a disjunction or if-then-else
when execution arrives at (c1),
then the region should not be reclaimed by any code inside the commit,
and the mechanisms we have described so far are sufficient to ensure this.
If the region is not already protected at (c1),
then the region should be reclaimed
before execution reaches (c2).
Ensuring this needs a new mechanism
because the goal inside a commit will contain, directly or indirectly,
at least one disjunction that can succeed more than once
(if it did not, it would have at most one solution,
and there would be no commit operation),
and the runtime support for this disjunction will protect the region
from being reclaimed during the execution of its non-last disjuncts.
On the other hand, we cannot simply insert code at (c2)
to reclaim the region,
since it \emph{can} already be reclaimed
by its \code{remove} instruction in the execution of the last disjunct
before reaching (c2).
We do not need to worry about the case when regions are protected
only by semidet disjunctions
or by if-then-elses with semidet conditions inside a commit,
since these constructs, if they occur, protect regions only temporarily,
and ensure that any regions that are removed inside them
and are not protected when execution enters them
will be reclaimed before execution exits them.
If-then-elses with nondet conditions cannot protect regions either,
though the nondet disjunctions inside their conditions can.

As before, our solution involves a new embedded stack, the \emph{commit stack}.
We push a new commit frame at (c1), and fill in its fixed fields,
which we will discuss shortly.
Following this will be the code that,
for each region in the removed set of the commit goal,
checks whether the region is already protected.
If it is, that region is left alone.
If it is not, we add the handle of the region
to the commit frame's nonfixed part,
and record the address where this handle is stored in the commit frame
in the region's own header, in a new field called \code{commit\_slot}.
This way, when a region that should be reclaimed inside the commit
actually survives to (c2) due to the protection of an inner disjunction,
code at (c2) can iterate through all the region handles in the commit frame
and reclaim those regions.
However, we cannot do this for regions
that are actually reclaimed inside the commit
(whose remove instructions were executed in the last disjuncts).
That is why, when we reclaim a region,
we check whether its header's \code{commit\_slot} field is null.
If not, then it will contain
the address of a pointer to the region header,
an address that will be in a commit frame,
and the reclaim operation will replace that pointer in the commit frame
with a null.
Making the loop at (c2.a) ignore such nulled-out region handle pointers
ensures that each region recorded in the commit frame's list
is reclaimed exactly once, and that this will happen as soon as possible.

If the goal inside the commit fails, we need to undo
the update of the saved regions' \code{commit\_slot} fields,
so at (c3.a) we reset them all to their original values.
To make this possible, we save each original value in the commit frame
next to the pointer to the region header from which it is taken.
This effectively chains together
all the entries referring to a given region in the commit stack.
The reclaim operation will set to null not just the first slot in this chain,
but all of them.

This mechanism is sufficient to correctly handle
any old regions that are in the commit goal's removed set.
To handle any new regions (regions created inside the commit)
that are also removed inside the commit,
we record the current region sequence number in the commit frame at (c1).
When a new region is removed in the commit,
if it is not protected, it is reclaimed.
If it is protected, we mark it so that at the commit point we can reclaim it.
We add a field \code{destroy\_at\_commit} to the region header,
and we augment the \code{remove} instruction again so that
when a protected, new region is removed in a commit,
the \code{remove} instruction
will set the region's \code{destroy\_at\_commit} field to true
(it is always initialized to false).
At the (c2.b) part of the commit point,
we traverse the region list until meeting an old region,
and reclaim the new regions whose \code{destroy\_at\_commit} field is true.

We do not need to worry about instant reclaiming
of new regions in the created set
and of new allocations into regions in the allocated set of the commit,
since that will be done by the goals surrounding the commit.

At the commit point, the Mercury execution algorithm
throws away all the remaining alternatives of the goal inside the commit.
To reflect this, at (c2) we need to restore the embedded disj stack
to the state it had at (c1).
This is why at (c1.c),
we save the current disj stack pointer in a fixed slot in the new commit frame,
and at (c2.c), we restore the disj stack pointer from there.
The regions protected by the disj frames thrown away by this action
will be exactly the ones removed by the code at (c2.b).

In some rare cases, the thrown-away disj frames
will be from disjunctions inside if-then-elses with nondet conditions.
Such if-then-elses cannot protect any regions
in any code outside their conditions,
but we do still need to ensure that
we leave the embedded ite stack in the same state as we found it.
This is why at (c1.d) and (c2.d), we save and restore its stack pointer.
(The ite frames of if-then-elses with semidet conditions
will have been popped by the time we get to c2,
but the ite frames of if-then-elses with nondet conditions may still be there.)

The layout of commit frames is shown in Figure~\ref{fig:commitframe},
with the fixed and nonfixed parts are separated by a thick line.

\begin{figure}[htp]
\begin{center}
\expandafter\ifx\csname graph\endcsname\relax
   \csname newbox\expandafter\endcsname\csname graph\endcsname
\fi
\ifx\graphtemp\undefined
  \csname newdimen\endcsname\graphtemp
\fi
\expandafter\setbox\csname graph\endcsname
 =\vtop{\vskip 0pt\hbox{%
    \special{pn 8}%
    \special{pa 0 150}%
    \special{pa 4500 150}%
    \special{pa 4500 0}%
    \special{pa 0 0}%
    \special{pa 0 150}%
    \special{fp}%
    \special{pa 0 300}%
    \special{pa 4500 300}%
    \special{pa 4500 150}%
    \special{pa 0 150}%
    \special{pa 0 300}%
    \special{fp}%
    \special{pa 0 450}%
    \special{pa 4500 450}%
    \special{pa 4500 300}%
    \special{pa 0 300}%
    \special{pa 0 450}%
    \special{fp}%
    \special{pa 0 600}%
    \special{pa 4500 600}%
    \special{pa 4500 450}%
    \special{pa 0 450}%
    \special{pa 0 600}%
    \special{fp}%
    \special{pa 0 750}%
    \special{pa 4500 750}%
    \special{pa 4500 600}%
    \special{pa 0 600}%
    \special{pa 0 750}%
    \special{fp}%
    \special{pa 0 900}%
    \special{pa 4500 900}%
    \special{pa 4500 750}%
    \special{pa 0 750}%
    \special{pa 0 900}%
    \special{fp}%
    \special{pa 0 1050}%
    \special{pa 4500 1050}%
    \special{pa 4500 900}%
    \special{pa 0 900}%
    \special{pa 0 1050}%
    \special{fp}%
    \special{pa 0 1200}%
    \special{pa 4500 1200}%
    \special{pa 4500 1050}%
    \special{pa 0 1050}%
    \special{pa 0 1200}%
    \special{fp}%
    \special{pa 0 730}%
    \special{pa 4500 730}%
    \special{fp}%
    \special{pa 0 740}%
    \special{pa 4500 740}%
    \special{fp}%
    \graphtemp=.5ex
    \advance\graphtemp by 0.075in
    \rlap{\kern 0.050in\lower\graphtemp\hbox to 0pt{prev\_commit\_frame\hss}}%
    \graphtemp=.5ex
    \advance\graphtemp by 0.225in
    \rlap{\kern 0.050in\lower\graphtemp\hbox to 0pt{saved\_seq\_num\hss}}%
    \graphtemp=.5ex
    \advance\graphtemp by 0.375in
    \rlap{\kern 0.050in\lower\graphtemp\hbox to 0pt{saved\_disj\_sp\hss}}%
    \graphtemp=.5ex
    \advance\graphtemp by 0.525in
    \rlap{\kern 0.050in\lower\graphtemp\hbox to 0pt{saved\_ite\_sp\hss}}%
    \graphtemp=.5ex
    \advance\graphtemp by 0.675in
    \rlap{\kern 0.050in\lower\graphtemp\hbox to 0pt{num\_saved\_regions\hss}}%
    \graphtemp=.5ex
    \advance\graphtemp by 0.825in
    \rlap{\kern 0.050in\lower\graphtemp\hbox to 0pt{region\_id\hss}}%
    \graphtemp=.5ex
    \advance\graphtemp by 0.975in
    \rlap{\kern 0.050in\lower\graphtemp\hbox to 0pt{prev\_commit\_slot\hss}}%
    \graphtemp=.5ex
    \advance\graphtemp by 1.125in
    \rlap{\kern 0.050in\lower\graphtemp\hbox to 0pt{...\hss}}%
    \graphtemp=.5ex
    \advance\graphtemp by 0.075in
    \rlap{\kern 1.500in\lower\graphtemp\hbox to 0pt{\emph{(previous commitframe)}\hss}}%
    \graphtemp=.5ex
    \advance\graphtemp by 0.225in
    \rlap{\kern 1.500in\lower\graphtemp\hbox to 0pt{\emph{(saved sequence number)}\hss}}%
    \graphtemp=.5ex
    \advance\graphtemp by 0.375in
    \rlap{\kern 1.500in\lower\graphtemp\hbox to 0pt{\emph{(saved disj stack pointer)}\hss}}%
    \graphtemp=.5ex
    \advance\graphtemp by 0.525in
    \rlap{\kern 1.500in\lower\graphtemp\hbox to 0pt{\emph{(saved if-then-else stack pointer)}\hss}}%
    \graphtemp=.5ex
    \advance\graphtemp by 0.675in
    \rlap{\kern 1.500in\lower\graphtemp\hbox to 0pt{\emph{(number of saved regions)}\hss}}%
    \graphtemp=.5ex
    \advance\graphtemp by 0.825in
    \rlap{\kern 1.500in\lower\graphtemp\hbox to 0pt{\emph{(handle of a saved region)}\hss}}%
    \graphtemp=.5ex
    \advance\graphtemp by 0.975in
    \rlap{\kern 1.500in\lower\graphtemp\hbox to 0pt{\emph{(original commit slot of the saved region)}\hss}}%
    \hbox{\vrule depth1.200in width0pt height 0pt}%
    \kern 4.500in
  }%
}%
\centerline{\box\graph}
\end{center}
\caption{The structure of a commit frame.}
\label{fig:commitframe}
\end{figure}

The meaning of the first two fields should be clear.
The third and fourth fields contain
the values of the disj and ite stack pointers respectively
at the time when the commit was entered.
The last fixed field gives the number
of region handles and saved \code{commit\_slot} fields
actually stored by the code at (c1.d) in the nonfixed part.

\subsubsection{Commits: Summary}
To summarize Section \ref{seCsupportnondetCcommit},
we review how we handle commits.

A commit does not need to protect any regions against reclamation,
as it does not make any regions backward live.
When the commit goal succeeds,
it cuts away any backtrack points set up inside it,
so we need to take away all the protections
associated with those backtrack points,
and if this leaves a region (old or new) unprotected,
we need to reclaim it.

We keep in each commit frame
a list of the old regions (existing before the commit)
that may be subject to such reclamation.
We store this list at (c1.e),
and we reclaim the regions in it at (c2.a),
provided they have not been reclaimed within the commit goal itself,
by code executing within or after a last disjunct.
We set the \code{commit\_slot} of each of these regions' headers
to point to their entry in the commit frame;
if and when the region is reclaimed within the commit goal,
we delete this entry to prevent double reclamation.

Since commits may be nested,
a given to-be-reclaimed region may be listed in several commit frames.
We keep its entries in these frames in a chain,
and when a region is actually reclaimed,
we delete its entries in \emph{all} these frames.

To reclaim new regions,
we store a snapshot of the sequence region number
in the commit frame at (c1.b).
When the commit goal succeeds, we reclaim all regions younger than this
whose \code{destroy\_at\_commit} field has been set to true
by a \code{remove} instruction.

If the commit goal fails,
all the protections set up by any disjunctions or if-then-elses inside it
must have expired already, so we need do no more than
simply restore the commit stack to its original state.

\subsection{Compatibility with Tabling}
\label{seCtabling}

Mercury supports three forms of tabling:
loop checking (which detects the simplest form of infinite loops,
and aborts the program if found),
memoization (caching of results),
and minimal model tabling.

The mechanisms we have discussed in this section so far
are compatible with loop checking
because the only two changes loop checking makes to the flow of execution
are to force the execution of some table lookups,
which have no effect on our data structures,
and (maybe) to abort the program,
in which case what our mechanisms do does not matter.

Our mechanisms are also compatible with automatic caching
for det and semidet procedures.
This tabling method surrounds the body of the tabled procedure
with code that checks whether
a call with the current argument list has been seen before.
If it has not been seen,
it computes the answer and records it.
For det procedures, the answer consists of the values of the output arguments;
for semidet procedures, it includes the success/failure indication as well.
If this call has been seen before,
the transformed procedure just returns the recorded answer.
Neither the extra code executed at the starts and ends of new calls
nor the table lookup executed for previously-seen calls
interfere with any of our mechanisms.

Automatic caching for nondet and multi procedures is a more complex case,
because the code that adds answers to a table adds one answer at a time,
and only when execution is about to backtrack out of a new call
does the tabling system know that its set of answers is complete.
The Mercury system handles the interaction of
tabled nondet/multi procedures with commits,
just as it handles handles the interaction of
nondet/multi procedures using RBMM with commits,
but it does not handle the interaction of
tabled nondet/multi procedures using RBMM with commits.
There is no reason why it could not do so,
we just have not implemented it yet,
mainly because memoization is not as useful
for nondet and multi procedures as minimal model tabling.

The current implementation of minimal model tabling in the Mercury system
works by saving segments of the Mercury stacks and restoring them later,
possibly several times \cite{mercurytabling}.
This makes minimal model tabling fundamentally incompatible
with the mechanisms we have presented earlier in this section.

\section{Experimental Evaluation}
\label{seCexper}

\subsection{The Experimental Systems}
We have implemented the region analysis and transformation shown in
Sections~\ref{seCrpta},~\ref{seClra}, and~\ref{seCpt},
as well as the runtime support describe in
Sections~\ref{seCsupportdet} and~\ref{seCsupportnondet}
by incorporating them in the Melbourne Mercury compiler.
The runtime support is currently available
in the backend that generates low-level C code.

We use three variants of our RBMM system in our experiments.
The first one, \emph{rbmm1}, is
similar to the RBMM system in~\cite{Phan08ismm}
in which we do not track which regions that are allocated into.
In rbmm1, while the region operations (Section~\ref{seCsupportdet})
are implemented as C functions,
the runtime support for backtracking (Section~\ref{seCsupportnondet})
is implemented using C macros.
The functionality of the second system, \emph{rbmm2},
is exactly the same as rbmm1,
however we consistently implement the whole runtime support in functions.
The third system, \emph{rbmm3}, also uses only functions in the runtime system,
but differs from rbmm2 in that
it \emph{does} track which regions are allocated into
(using the algorithms in Section~\ref{seCallocCregions}),
which allows us to restrict the set of old regions
for which we take size snapshots for later reclaiming
(see Section~\ref{seCsupportnondet})
to just the regions for which this may have an effect.
We chose these three versions to evaluate
because comparing rbmm1 and rbmm2 tells us
which implementation technology is better,
while comparing rbmm2 and rbmm3 can reveal
the impact of tracing which regions are allocated into and which are not.
We also compare these RBMM variants with a Mercury compiler
that is identical in all aspects except that instead of RBMM,
it uses the Boehm garbage collector~\cite{Boehm88},
which is Mercury's standard garbage collector.
We call this system \emph{boehm}.

For all three RBMM systems,
we use a region page size of 2,048 words,
of which 2,047 are available to store program data.
When needed, we request blocks of 100 region pages from the OS.
The three systems use the same regions
and create and remove them in exactly the same places.
However, they do differ in other aspects,
such as compilation time, size of object files, and runtime performance.

Next, we will present the benchmarks and give the results of our experiments,
and then we will discuss the RBMM behavior of the benchmarks in more detail.
The experiments were performed on a Dell Optiplex 760 PC with
a 2.83 GHz Core 2 Quad Q9550 CPU, 8 GB of RAM,
running Ubuntu Linux, with the kernel version being 2.6.24-25-server SMP.
The Mercury programs were compiled to C with
the 3 December 2009 release-of-the-day of the Mercury system
(with different options for the different variants).
This and other releases-of-the-day are available on the Mercury web site.
The resulting C files were compiled to executables by gcc 3.4.4.
Every time we report was derived
by running the program eight times,
discarding the lowest and highest times,
and averaging the rest.

\subsection{The Benchmark Programs}

In our experiments, we used a set of relatively small benchmark programs.
We selected the benchmarks carefully;
they are actually more like a collection of case studies
that illustrate the strong and weak points of RBMM.
While we would have liked to test our system
with bigger, more realistic programs,
we are currently not able to do so
because the region analysis and transformation do not yet support
higher order code, foreign language code and multi-module programs.

\begin{table}[htb]
  \centering
  \small
  \caption{Information about the benchmarks.}
  \begin{tabular}{l|*{2}{r|}*{3}{c|}}
    \hline
    \hline
    \multirow{2}{*}{}
    & \multicolumn{1}{c|}{\multirow{2}{*}{\textbf{\# Predicates}}}
    & \multicolumn{1}{c|}{\multirow{2}{*}{\textbf{\# LOC}}}
    & \multicolumn{1}{c|}{\multirow{2}{*}{\textbf{if-then-else}}}
    & \multicolumn{2}{c|}{\textbf{disjunction}} \\
    & & &
    & \multicolumn{1}{c}{\textbf{semidet}}
    & \multicolumn{1}{c|}{\textbf{nondet}} \\
    \hline
    dna      & 16 & 251 & x &   &   \\
    isort    &  6 & 101 & x &   &   \\
    nrev     &  5 &  72 & x &   &   \\
    primes   &  8 &  93 & x &   &   \\
    qsort    &  6 &  92 & x &   &   \\
    \hline
    bigcatch & 12 & 159 & x &   &   \\
    boyer    & 17 & 372 & x &   &   \\
    bsolver  & 41 & 805 & x & x &   \\
    crypt    & 15 & 219 & x &   & x \\
    filrev   & 12 & 154 & x &   &   \\
    life     & 18 & 338 & x &   &   \\
    healthy  & 24 & 485 & x &   & x \\
    queens   &  9 & 128 & x &   & x \\
    sudoku   & 22 & 441 & x &   & x \\
    \hline
    rdna     & 17 & 262 & x &   &   \\
    risort   &  7 & 111 & x &   &   \\
    rlife    & 19 & 343 & x &   &   \\
    rqueens  & 10 & 138 & x &   & x \\
    \hline
    \hline
  \end{tabular}
  \normalsize
  \label{table:experiment:benchmarks}
\end{table}

The benchmark programs in Table~\ref{table:experiment:benchmarks}
are divided into three groups.
The first group contains benchmarks
that do not need any runtime support for backtracking.
The benchmarks in the second group do need such support.
The third group consists of manually modified versions of benchmarks
that illustrate how programs can be made more region-friendly
(hence the ``r'' as prefix on their names).

The programs in the first group contain only det code,
and maybe some if-then-elses with semidet conitions
whose created, removed and allocated sets are all empty.
\bench{dna} computes similarities between gene sequences,
\bench{isort} implements insertion sort on a list of 10000 integers,
\bench{nrev} reverses a list of 5000 integers,
\bench{primes} finds all the primes less than 20000, and
\bench{qsort} sorts a list of 100000 integers.

The programs in the second group need runtime support
for if-then-elses and/or disjunctions.
\bench{bigcatch} and \bench{filrev} are
Mercury versions of programs used in~\cite{Aspinall08usageaspect}.
They manipulate lists of lists of integers
and introduce sharing between the input, the temporary data and the output
and as such they also present difficult cases for RBMM.
\bench{bsolver} is a simple solver for systems of
binary linear equations and inequations over integers;
\bench{boyer} is a toy theorem prover;
\bench{crypt} finds the unique answer to a cryptoarithmetic puzzle;
\bench{life} implements the Game of Life
(known to be a difficult case for RBMM);
\bench{healthy} is a nondeterministic variant of \bench{life}
that searches for a generation that after a certain number of reproductions (8)
still has a number of live cells that is higher than a threshold (80);
\bench{queens} solves the 12-queens problem
by first generating permutations and then checking;
\bench{sudoku} finds the solution for a sudoku puzzle
by doing propagation on finite domains.

The programs \bench{rlife} and \bench{rdna}
are versions of \bench{life} and \bench{dna}
that have been manually made region-friendly
by copying some data instead of letting it be shared.
\bench{rqueens} is a modified form of \bench{queens};
its \code{delete} predicate (called by \code{permute})
copies the list remaining after a deletion.
Similarly, \bench{risort} copies the remaining list
when inserting an element into a sorted list.
We will come back to this group of programs
when discussing the benchmarks in detail.

\subsection{Experimental Results}

\subsubsection{Compilation Times and Object File Sizes}

We first compare the three RBMM systems and the Boehm system
with respect to their compilation times and the sizes of their object files
(the text sections).
The results are given in Table~\ref{table:experiment:ct_os}, which contains
two sets of columns, for compilation time and object file size respectively.
The first four columns in each group report results
for each of our four system variants, rbmm1/2/3 and boehm,
while the fifth column is computed by (rbmm3 - boehm)/boehm * 100.

\begin{table}[ht]
  \centering
  \scriptsize
  \caption{Compilation time and object file size.}
  \begin{tabular}{l|*{4}{r|}r||*{5}{r|}}
    \hline
    \hline
    \multirow{3}{*}{}
    & \multicolumn{5}{c||}{\textbf{Compilation time (s)}}
    & \multicolumn{5}{c|}{\textbf{Object file size (bytes)}} \\
    & \multicolumn{1}{c|}{\multirow{2}{*}{\textbf{boehm}}}
    & \multicolumn{3}{c|}{\textbf{rbmm}}
    & \multicolumn{1}{c||}{\textbf{r3/b}}
    & \multicolumn{1}{c|}{\multirow{2}{*}{\textbf{boehm}}}
    & \multicolumn{3}{c|}{\textbf{rbmm}}
    & \multicolumn{1}{c|}{\textbf{r3/b}}\\
    &
    & \multicolumn{1}{c}{\textbf{1}}
    & \multicolumn{1}{c}{\textbf{2}}
    & \multicolumn{1}{c|}{\textbf{3}}
    & \multicolumn{1}{c||}{\textbf{(\%)}}
    &
    & \multicolumn{1}{c}{\textbf{1}}
    & \multicolumn{1}{c}{\textbf{2}}
    & \multicolumn{1}{c|}{\textbf{3}}
    & \multicolumn{1}{c|}{\textbf{(\%)}}\\
    \hline
    dna      & 0.51 & 0.66 & 0.60 & 0.60 & 18 &  4,782 &  6,670 &  6,366 &  6,142 & 28 \\
    isort    & 0.41 & 0.47 & 0.43 & 0.45 & 10 &  1,048 &  1,800 &  1,512 &  1,512 & 44 \\
    nrev     & 0.38 & 0.43 & 0.42 & 0.43 & 13 &    976 &  1,728 &  1,408 &  1,408 & 44 \\
    primes   & 0.39 & 0.44 & 0.43 & 0.42 &  8 &  1,026 &  1,712 &  1,408 &  1,408 & 37 \\
    qsort    & 0.41 & 0.47 & 0.45 & 0.47 & 15 &  1,209 &  2,088 &  1,768 &  1,768 & 46 \\
    \hline
    bigcatch & 0.45 & 0.45 & 0.49 & 0.42 & -7 &  1,601 &  3,569 &  2,657 &  2,241 & 40 \\
    boyer    & 0.78 & 1.23 & 1.20 & 1.18 & 51 & 13,748 & 21,509 & 17,716 & 16,165 & 18 \\
    bsolver  & 0.97 & 1.37 & 1.35 & 1.25 & 29 & 16,034 & 26,227 & 22,867 & 18,931 & 18 \\
    crypt    & 0.57 & 0.67 & 0.68 & 0.58 &  2 &  5,656 &  9,808 &  7,184 &  7,136 & 26 \\
    filrev   & 0.40 & 0.47 & 0.48 & 0.48 & 20 &  1,650 &  3,105 &  2,561 &  2,401 & 46 \\
    life     & 0.56 & 0.70 & 0.67 & 0.67 & 20 &  5,564 &  9,771 &  8,123 &  7,147 & 28 \\
    healthy  & 0.61 & 0.95 & 0.77 & 0.78 & 28 &  7,906 & 16,610 & 11,988 & 10,498 & 33 \\
    queens   & 0.42 & 0.48 & 0.46 & 0.47 & 12 &  1,880 &  3,619 &  2,595 &  2,563 & 36 \\
    sudoku   & 0.65 & 0.87 & 0.87 & 0.85 & 31 &  7,685 & 11,989 & 11,077 & 10,213 & 33 \\
    \hline
    rdna     & 0.55 & 0.62 & 0.59 & 0.61 & 11 &  4,831 &  6,815 &  6,511 &  6,287 & 30 \\
    risort   & 0.40 & 0.43 & 0.46 & 0.44 & 10 &  1,194 &  2,040 &  1,752 &  1,752 & 47 \\
    rlife    & 0.55 & 0.70 & 0.71 & 0.66 & 20 &  5,741 & 10,284 &  8,652 &  7,628 & 33 \\
    rqueens  & 0.43 & 0.50 & 0.43 & 0.49 & 14 &  2,155 &  3,941 &  2,933 &  2,901 & 35 \\
    \hline
    \hline
  \end{tabular}
  \normalsize
  \label{table:experiment:ct_os}
\end{table}

Compilation times for most benchmarks
are so short that we get significant fluctuations due to clock granularity;
times in the table that differ only by a couple of tenths of seconds
are effectively indistinguishable in practice.
That said,
compilation is always somewhat slower for the RBMM systems
than when targeting the Boehm collector,
which is not surprising, given the analysis we have to do.
However, the cost of including RBMM is reasonable;
the average slowdown for rbmm3 is 17\%,
and it is only a bit higher for rbmm1 and rbmm2.
Compilation with the function-based systems
is usually faster than compilation for the partly macro-based rbmm1
because the runtime support functions in rbmm2 and rbmm3
are compiled just once (when the runtime system itself is built)
while in rbmm1 the macros containing their functionality
are expanded and compiled several times
during the compilation of each benchmark.
Compared to rbmm2,
tracing and making use of the allocated regions in rbmm3
sometimes helps to reduce the compilation time,
but the effect is quite small.
This is because the overhead of tracking is rather small,
and having information about allocated regions
allows the compiler to do less work:
it does not need to pass as many region arguments in calls,
and it can skip adding some runtime support code.

The object files of the RBMM systems are, as expected,
larger than those of the Boehm system.
The use of macros in rbmm1 can double the size compared to boehm,
as shown by \bench{bigcatch} and \bench{healthy},
with average increase being 74\%.
Replacing macros with calls reduces the overhead significantly;
the object size ratio between rbmm2 and boehm
ranges from 27\% to 66\%, averaging 43\%.
Rbmm3 yields even smaller object files,
since keeping track of allocated-into regions
allows the compiler to reduce
the number of region arguments passed
and the amount of support code generated;
the object size ratio between rbmm3 and boehm
ranges from 18\% to 47\%, averaging only 35\%.
This shows that for larger programs, rbmm3 is likely to be preferable.

\subsubsection{Memory Usage}

We measured the memory consumption of the regions for the RBMM systems.
Note also that the runtime support consumes some memory
as will be discussed later.
Here we focus on the storage of program data.
The results in Table~\ref{table:experiment:memory}
are the same in all three RBMM systems.
For each benchmark,
we give the total number of regions created during its execution,
and the maximum number of regions coexisting during its run.
We also include the total number of words allocated
and the maximum number of words that coexist.
\emph{SLR} is the Size of the Largest Region and
\emph{S (\%)} is the saving, calculated by 1 - Max words/Total words.

\begin{table}[tb]
  \centering
  \small
  \caption{Memory use in rbmm systems.}
  \begin{tabular}{l|*{6}{r|}}
    \hline
    \hline
    \multirow{2}{*}{} & \multicolumn{2}{c|}{\textbf{Regions}}
    & \multicolumn{2}{c|}{\textbf{Words used}}
    & \multicolumn{1}{c|}{\multirow{2}{*}{\textbf{SLR}}}
    & \multicolumn{1}{c|}{\multirow{2}{*}{\textbf{S (\%)}}} \\
    & \multicolumn{1}{c}{\textbf{Total}}
    & \multicolumn{1}{c|}{\textbf{Max}}
    & \multicolumn{1}{c}{\textbf{Total}}
    & \multicolumn{1}{c|}{\textbf{Max}} & & \\
    \hline
    dna      & 2,082,006 &   8 &  18,926,797 &  4,590,797 & 4,096,000 & 75.7 \\
    isort    &         3 &   1 &  67,029,222 & 67,009,222 &67,009,222 &  0.0 \\
    nrev     &     5,003 &   2 &  25,015,000 &     10,000 &    10,000 & 99.9 \\
    primes   &     2,265 &   1 &   5,221,386 &     39,998 &    39,998 & 99.2 \\
    qsort    &   200,003 &  21 &   5,865,744 &    200,000 &   200,000 & 96.6 \\
    \hline
    bigcatch &         3 &   2 &  25,015,000 & 25,015,000 &25,005,000 &  0.0 \\
    boyer    &         5 &   3 &     143,561 &    143,561 &   143,505 &  0.0 \\
    bsolver  &        78 &   7 &   2,914,444 &  2,911,528 & 2,908,442 &  0.1 \\
    crypt    &       417 &   3 &       3,442 &         94 &        64 & 97.3 \\
    filrev   &         6 &   3 &  25,023,004 & 25,019,000 &25,009,000 &  0.0 \\
    life     &    50,304 & 102 &     894,336 &      8,208 &     6,486 & 99.1 \\
    healthy  & 3,917,124 &  82 &  62,639,310 &      2,794 &     2,054 & 99.9 \\
    queens   & 4,545,703 &   2 & 121,453,230 &        114 &        90 & 99.9 \\
    sudoku   &     6,651 &  88 &      84,080 &     16,678 &    10,916 & 80.1 \\
    \hline
    rdna     & 2,083,006 &   9 &  18,930,797 &    501,752 &   428,733 & 97.3 \\
    risort   &   373,214 &   1 & 289,968,666 &      2,000 &     2,000 & 99.9 \\
    rlife    &    50,356 & 102 &     894,594 &      2,056 &     1,722 & 99.8 \\
    rqueens  &23,080,416 &  13 & 142,047,288 &        156 &        24 & 99.9 \\
    \hline
    \hline
  \end{tabular}
  \normalsize
  \label{table:experiment:memory}
\end{table}

RBMM achieves optimum memory management
in \bench{nrev}, in \bench{primes}, and in \bench{qsort}.
For the nondeterministic programs \bench{crypt}, \bench{healthy},
\bench{queens}, and \bench{sudoku}, the memory savings are also high.
The impact of instant reclaiming on memory reuse
differs among these programs
(Table~\ref{table:experiment:impact_rts}):
in \bench{crypt} and \bench{queens},
instant reclaiming collects most of the words,
while in \bench{healthy} it collects only a small fraction
and it reclaims none at all in \bench{sudoku}.

\begin{table}[htb]
  \centering
  \small
  \caption{Words reclaimed by runtime support.
  (Other words are reclaimed by \code{remove} instructions.)
  Only programs with some nontrivial numbers are shown.}
  \scriptsize
  \begin{tabular}{l|*{8}{r|}}
    \hline
    \hline
    \multirow{2}{*}{}
    & \multicolumn{2}{c|}{\textbf{New allocations}}
    & \multicolumn{2}{c|}{\textbf{New regions}}
    & \multicolumn{2}{c|}{\textbf{Start of then}}
    & \multicolumn{2}{c|}{\textbf{Commit point}} \\
    & \multicolumn{1}{c}{\textbf{Words}}
    & \multicolumn{1}{c|}{\textbf{\%}}
    & \multicolumn{1}{c}{\textbf{Words}}
    & \multicolumn{1}{c|}{\textbf{\%}}
    & \multicolumn{1}{c}{\textbf{Words}}
    & \multicolumn{1}{c|}{\textbf{\%}}
    & \multicolumn{1}{c}{\textbf{Words}}
    & \multicolumn{1}{c|}{\textbf{\%}} \\
    \hline
    bigcatch  &          0 &  0.00 &           0 &  0.00 & 10,000 & 0.04 &     0 & 0.00 \\
    crypt     &          0 &  0.00 &       3,270 & 95.00 &      0 & 0.00 &     6 & 0.17 \\
    queens    & 12,356,378 & 10.17 & 109,096,776 & 89.83 &     52 & 0.00 &     0 & 0.00 \\
    rqueens   &          0 &  0.00 & 133,809,696 & 94.20 &      0 & 0.00 &   132 & 0.00 \\
    healthy   &     81,862 &  0.13 &       3,314 &  0.01 &      0 & 0.00 &     0 & 0.00 \\
    sudoku    &          0 &  0.00 &           0 &  0.00 &      0 & 0.00 & 6,480 & 7.71 \\
    \hline
    \hline
  \end{tabular}
  \normalsize
  \label{table:experiment:impact_rts}
\end{table}

For cases such as \bench{isort}, \bench{bigcatch},
\bench{bsolver} and \bench{filrev},
we see that most of the memory goes to the biggest region.
Typically, this biggest region contains some garbage data,
but as it also holds some live data it cannot be reclaimed.

The boehm version of our system uses
the Boehm-Demers-Weiser garbage collector~\cite{Boehm88} for memory management.
In our experiments,
we just use the default configuration of this collector
as it is in the Mercury compiler distribution.
It is a stop-the-world, sequential mark-and-sweep collector
that uses 1024-word pages.
It starts with a heap of 64k words and
heuristically carries out collections of garbage or expands the heap on demand.

Data about memory use in the boehm system
is shown in Table~\ref{table:experiment:boehmdata}.
The second column (\# gc) shows the numbers of times the collector is run while
the third column (\# expans) tells the numbers of expansions of the heap.
The maximal sizes of the heap in kB and words
are shown in the next two columns, respectively.
The maximal numbers of words used and the numbers of words requested
(i.e., 2048 x the number of region pages requested)
in the rbmm systems are shown in the last two columns for reference purpose.

\begin{table}
  \centering
  \scriptsize
  \caption{Memory use in one iteration.}
  \begin{tabular}{|l|*{6}{r|}}
    \hline
    \hline
    \multirow{2}{*}{} & \multicolumn{1}{c|}{\textbf{\# gc}}
    & \multicolumn{1}{c|}{\textbf{\# expans}}
    & \multicolumn{2}{c|}{\multirow{1}{*}{\textbf{boehm max size}}}
    & \multicolumn{2}{c|}{\textbf{rbmm}} \\
    & & & \multicolumn{1}{c|}{\textbf{kB}}
    & \multicolumn{1}{c|}{\textbf{words}}
    & \multicolumn{1}{c|}{\textbf{max words}}
    & \multicolumn{1}{c|}{\textbf{words requested}}\\
    \hline
    dna                &  7 &  4 &  30,524 &  7,814,144 &  4,590,797 &  4,710,400\\
    isort              & 20 &  4 &  30,524 &  7,814,144 & 67,009,222 & 67,174,400\\
    nrev               &  9 &  4 &  30,524 &  7,814,144 &     10,000 &    204,800\\
    primes             &  3 &  4 &  30,524 &  7,814,144 &     39,998 &    204,800\\
    qsort              &  3 &  4 &  30,524 &  7,814,144 &    200,000 &    409,600\\
    \hline
    bigcatch           &  5 & 10 & 119,804 & 30,669,824 & 25,015,000 & 25,190,400\\
    boyer              &  2 &  2 &  17,168 &  4,395,008 &    143,561 &    204,800\\
    bsolver            &  2 &  4 &  30,524 &  7,814,144 &  2,911,528 &  3,072,000\\
    crypt              &  1 &  2 &  17,168 &  4,395,008 &         94 &    204,800\\
    filrev             &  5 & 10 & 119,804 & 30,669,824 & 25,019,000 & 25,190,400\\
    life               &  2 &  3 &  22,892 &  5,860,352 &      8,208 &    409,600\\
    healthy            & 19 &  4 &  30,524 &  7,814,144 &      2,794 &    204,800\\
    queens             & 36 &  4 &  30,524 &  7,814,144 &        114 &    204,800\\
    sudoku             &  1 &  2 &  17,168 &  4,395,008 &     16,678 &    204,800\\
    \hline
    rdna               &  7 &  4 &  30,524 &  7,814,144 &    501,752 &    614,400\\
    risort             & 83 &  4 &  30,524 &  7,814,144 &      2,000 &    204,800\\
    rlife              &  2 &  3 &  22,892 &  5,860,352 &      2,056 &    409,600\\
    rqueens            & 42 &  4 &  30,524 &  7,814,144 &        156 &    204,800\\
    \hline
    \hline
  \end{tabular}
  \scriptsize
  \label{table:experiment:boehmdata}
  \normalsize
\end{table}

The numbers show that, in almost all of the benchmarks,
the RBMM systems can work within spaces
that are smaller than those requested by the Boehm collector.
RBMM systems often need to request only the minimum,
which in our system is 100 * 2048 words.
The worst case for RBMM is \bench{isort}
in which RBMM is not able to reuse memory efficiently.
The boehm system can work with
only a bit more than one tenth the memory in this case.

\subsubsection{Runtime Performance}
\label{seCevalCruntime}

We also studied the runtime performance of our benchmark programs
because this is probably the most important criterion
for the practicality of RBMM.
To control the uncertainty involved in measuring small times,
we ran each program many times in a loop.
Each benchmark has a row in Table~\ref{table:experiment:runtime}
that gives the number of iterations,
the actual execution times with boehm (boehm)
the boehm system's gc time (gc),
and the boehm system's runtime \emph{minus} the gc time (nogc),
and then the runtime with the three RBMM systems
(all in seconds, all for user mode only).
Each row also includes
the number of collections executed by the Boehm collector,
and the savings achieved by using our preferred RBMM system, rbmm3,
instead of the boehm system.
The savings are given by 1 - rbmm3 runtime / boehm runtime.

\begin{table}[tb]
  \centering
  \small
  \caption{Runtime performance result.}
  \scriptsize
  \begin{tabular}{l|*{9}{r|}}
    \hline
    \hline
    \multirow{2}{*}{}
    & \multicolumn{1}{c|}{\multirow{2}{*}{\textbf{\# Iter}}}
    & \multicolumn{4}{c|}{\textbf{boehm runtime}}
    & \multicolumn{3}{c|}{\textbf{RBMM runtime}}
    & \multicolumn{1}{c|}{\textbf{Saving}}\\
    &
    & \multicolumn{1}{c}{\textbf{boehm}}
    & \multicolumn{1}{c}{\textbf{gc}}
    & \multicolumn{1}{c}{\textbf{nogc}}
    & \multicolumn{1}{c|}{\textbf{\# gcs}}
    & \multicolumn{1}{c}{\textbf{rbmm1}}
    & \multicolumn{1}{c}{\textbf{rbmm2}}
    & \multicolumn{1}{c|}{\textbf{rbmm3}}
    & \multicolumn{1}{c|}{\textbf{rbmm3}}\\
    \hline
    dna      &     100 & 25.27 &  8.80 & 16.47 &  549 & 20.81 & 20.20 & 21.19 & 16.1\% \\
    isort    &      60 & 53.47 & 17.90 & 35.57 & 1141 & 21.43 & 21.45 & 21.66 & 59.5\% \\
    nrev     &     160 & 50.09 & 17.58 & 32.51 & 1134 & 20.39 & 20.39 & 21.12 & 57.8\% \\
    primes   &     400 & 40.94 &  9.46 & 31.48 &  597 & 24.86 & 24.51 & 24.62 & 39.9\% \\
    qsort    &     400 & 41.41 & 12.65 & 28.76 &  701 & 20.62 & 20.45 & 21.15 & 48.9\% \\
    \hline
    bigcatch &      30 & 28.31 &  5.70 & 22.61 &   20 & 20.39 & 20.90 & 20.38 & 28.0\% \\
    boyer    &   8,000 & 25.69 &  5.60 & 20.09 &  357 & 22.59 & 34.34 & 34.83 &-35.6\% \\
    bsolver  &    1500 & 55.00 & 19.44 & 35.56 & 1242 & 22.92 & 23.05 & 22.91 & 58.3\% \\
    crypt    & 300,000 & 21.19 &  4.53 & 16.66 &  293 & 18.85 & 20.84 & 20.70 &  2.3\% \\
    filrev   &      50 & 38.40 & 11.03 & 27.37 &   54 & 24.09 & 24.00 & 23.85 & 37.9\% \\
    life     &     700 & 27.18 &  2.77 & 24.41 &  179 & 26.16 & 31.41 & 23.71 & 12.8\% \\
    healthy  &      30 & 37.65 &  8.34 & 29.31 &  533 & 41.63 & 61.12 & 29.62 & 21.3\% \\
    queens   &      15 & 32.90 &  7.97 & 24.93 &  517 & 22.34 & 29.60 & 30.05 &  8.7\% \\
    sudoku   &  20,000 & 23.02 &  6.45 & 16.58 &  413 & 17.65 & 17.69 & 17.57 & 23.7\% \\
    \hline
    rdna     &     120 & 30.41 & 10.52 & 19.89 &  657 & 24.38 & 25.59 & 23.66 & 22.2\% \\
    risort   &      25 & 89.81 & 31.84 & 57.89 & 2051 & 35.28 & 35.56 & 35.62 & 60.3\% \\
    rlife    &     700 & 27.02 &  2.74 & 24.28 &  179 & 26.04 & 31.23 & 23.54 & 12.9\% \\
    rqueens  &      15 & 35.65 &  9.57 & 26.08 &  604 & 43.09 & 50.24 & 48.95 &-37.3\% \\
    \hline
    \hline
  \end{tabular}
  \label{table:experiment:runtime}
  \normalsize
\end{table}

The rbmm3 system gets clearly better runtimes than the boehm system
for 15 out of our 18 benchmark programs,
including both deterministic and nondeterministic programs.
The speedups range from around 8\% to more than 60\%.
(We do not count the 2.3\% speedup as ``clearly better''.)
The overall average speedup,
even including the two programs with slowdowns, is about 24\%.
We get this promising result because with RBMM,
we avoid the burden of runtime garbage collection,
and because the overhead of supporting regions is reasonably modest.
Moreover, the runtimes of 10 of these 15 programs
are smaller than the corresponding runtimes in the boehm system
even excluding garbage collection times,
which strongly suggests that RBMM also improves data locality.
In \bench{bigcatch} and \bench{filrev}, two difficult cases for RBMM,
their memory-use pattern actually has even more adverse effects
on the operation of the Boehm collector.
These programs all build very large lists that are live data
before producing any garbage,
so during their initial phase,
the traversal of the memory allocated so far
by the collector's marking pass is almost completely a wasted effort.

\begin{table}[tb]
  \centering
  \small
  \caption{Frame statistics in rbmm1 and rbmm2 systems.}
  \tiny
  \begin{tabular}{l@{\hspace{0.03cm}}|*{11}{@{\hspace{0.03cm}}r@{\hspace{0.03cm}}|}}
    \hline
    \hline
    \multirow{2}{*}{}
    & \multicolumn{5}{@{\hspace{0.03cm}}c@{\hspace{0.03cm}}|}{\textbf{Disj frames}}
    & \multicolumn{6}{@{\hspace{0.03cm}}c@{\hspace{0.03cm}}|}{\textbf{Ite frames}} \\
    & \multicolumn{1}{@{\hspace{0.03cm}}c@{\hspace{0.03cm}}}{Total}
    & \multicolumn{1}{@{\hspace{0.03cm}}c@{\hspace{0.03cm}}}{M}
    & \multicolumn{1}{@{\hspace{0.03cm}}c@{\hspace{0.03cm}}}{\# Words}
    & \multicolumn{1}{@{\hspace{0.03cm}}c@{\hspace{0.03cm}}}{Mw}
    & \multicolumn{1}{@{\hspace{0.03cm}}c@{\hspace{0.03cm}}|}{Sr}
    & \multicolumn{1}{@{\hspace{0.03cm}}c@{\hspace{0.03cm}}}{Total}
    & \multicolumn{1}{@{\hspace{0.03cm}}c@{\hspace{0.03cm}}}{M}
    & \multicolumn{1}{@{\hspace{0.03cm}}c@{\hspace{0.03cm}}}{\# Words}
    & \multicolumn{1}{@{\hspace{0.03cm}}c@{\hspace{0.03cm}}}{Mw}
    & \multicolumn{1}{@{\hspace{0.03cm}}c@{\hspace{0.03cm}}}{Sr}
    & \multicolumn{1}{@{\hspace{0.03cm}}c@{\hspace{0.03cm}}|}{P} \\
    \hline
    bigcatch     & 0          &  0  & 0          &  0  &  0          &   5,046      &   1 &  35,504      &  11 & 5,091      & 47 \\
    boyer        & 0          &  0  & 0          &  0  &  0          &   38,629     &   2 &  271,469
  &  14 & 38,984     &  1 \\
    bsolver      & 90         &  1  & 1,170      &  13 &  270        &   244        &   1 &  4,031       &  19 & 1,018      &  1 \\
    crypt        & 55         &  4  & 220        &  16 &  0          &   2          &   1 &  9           &  5  & 0          &  1 \\
    filrev       & 0          &  0  & 0          &  0  &  0          &   5,001      &   1 &  50,005      &  10 & 10,000     &  1 \\
    life         & 0          &  0  & 0          &  0  &  0          &   177,789    &   1 &  1,777,885   &  10 & 355,576    &  1 \\
    healthy      & 2,431      &  9  & 24,304     &  84 &  4,860      &   17,449,110 &   2 &  174,491,089 &  14 &  34,898,216 &  1 \\
    queens       & 12,356,498 &  12 & 86,495,486 &  84 &  12,356,498 &   2          &   1 &  10          &  5  & 0          &  2 \\
    sudoku       & 81         &  81 & 810        &  810&  162        &   2          &   1 &  21          &  16 & 4          &  1 \\
    \hline
    rlife        & 0          &  0  & 0          &  0  &  0          &   177,789    &   1 &  1,777,885   &  10 & 355,576    &  1 \\
    rqueens      & 12,356,498 &  12 & 86,495,486 &  84 &  12,356,498 &   2          &   1 &  10          &  5  & 0          &  2 \\
    \hline
    \hline
  \end{tabular}
  \label{table:experiment:rtsupportrbmm12}
  \normalsize
\end{table}

\begin{table}[tb]
  \centering
  \small
  \caption{Frame statistics in rbmm3 system.}
  \tiny
  \begin{tabular}{l@{\hspace{0.03cm}}|*{11}{@{\hspace{0.03cm}}r@{\hspace{0.03cm}}|}}
    \hline
    \hline
    \multirow{2}{*}{}
    & \multicolumn{5}{@{\hspace{0.03cm}}c@{\hspace{0.03cm}}|}{\textbf{Disj frames}}
    & \multicolumn{6}{@{\hspace{0.03cm}}c@{\hspace{0.03cm}}|}{\textbf{Ite frames}} \\
    & \multicolumn{1}{@{\hspace{0.03cm}}c@{\hspace{0.03cm}}}{Total}
    & \multicolumn{1}{@{\hspace{0.03cm}}c@{\hspace{0.03cm}}}{M}
    & \multicolumn{1}{@{\hspace{0.03cm}}c@{\hspace{0.03cm}}}{\# Words}
    & \multicolumn{1}{@{\hspace{0.03cm}}c@{\hspace{0.03cm}}}{Mw}
    & \multicolumn{1}{@{\hspace{0.03cm}}c@{\hspace{0.03cm}}|}{Sr}
    & \multicolumn{1}{@{\hspace{0.03cm}}c@{\hspace{0.03cm}}}{Total}
    & \multicolumn{1}{@{\hspace{0.03cm}}c@{\hspace{0.03cm}}}{M}
    & \multicolumn{1}{@{\hspace{0.03cm}}c@{\hspace{0.03cm}}}{\# Words}
    & \multicolumn{1}{@{\hspace{0.03cm}}c@{\hspace{0.03cm}}}{Mw}
    & \multicolumn{1}{@{\hspace{0.03cm}}c@{\hspace{0.03cm}}}{Sr}
    & \multicolumn{1}{@{\hspace{0.03cm}}c@{\hspace{0.03cm}}|}{P} \\
    \hline
    bigcatch  & 0          & 0  & 0          &  0   &  0          &  47      & 1 &  235     & 5   &  0       &47\\
    boyer     & 0          & 0  & 0          &  0   &  0          &  38,272  & 1 &  267,899 & 7   & 38,270   & 1\\
    bsolver   & 0          & 0  & 0          &  0   &  0          &  18      & 1 &  175     & 10  &  34      & 1\\
    crypt     & 55         & 4  & 220        &  16  &  0          &  2       & 1 &  9       & 5   &  0       & 1\\
    filrev    & 0          & 0  & 0          &  0   &  0          &  1       & 1 &  5       & 5   &  0       & 1\\
    life      & 0          & 0  & 0          &  0   &  0          &  1       & 1 &  5       & 5   &  0       & 1\\
    healthy   & 2,431      & 9  & 24,304     &  84  &  4,860      &  2       & 1 &  9       & 5   &  0       & 1\\
    queens    & 12,356,498 & 12 & 86,495,486 &  84  &  12,356,498 &  2       & 1 &  10      & 5   &  0       & 2\\
    sudoku    & 81         & 81 & 324        &  324 &  0          &  2       & 1 &  15      & 10  &  2       & 1\\
    \hline
    rlife     & 0          & 0  & 0          &  0   &  0          &  1       & 1 &  5       & 5   &  0       & 1\\
    rqueens   & 12,356,498 & 12 & 86,495,486 &  84  & 12,356,498  &  2       & 1 &  10      & 5   &  0       & 2\\
    \hline
    \hline
  \end{tabular}
  \label{table:experiment:rtsupportrbmm3}
  \normalsize
\end{table}

Before discussing the results of the other programs,
we show detailed information about the disj frames and the ite frames
that are used in the benchmark programs \emph{per iteration}.
This information
is in Table~\ref{table:experiment:rtsupportrbmm12} for rbmm1 and rbmm2
(which always behave the same in these respects)
and in Table~\ref{table:experiment:rtsupportrbmm3} for rbmm3.
Both tables include only the programs
that use at least two frames during their runtime.

The five columns related to disj frames are as follows:
Total is the total number of disj frames used in one iteration;
M is the maximal number of disj frames coexisting at some point;
\# Words is the total number of words used for all the disj frames;
Mw is the maximal number of words used at some point; and
Sr is the total number of size records saved.
No regions are protected at semidet disjunctions in these benchmarks.
For ite frames,
the first five columns have meanings analogous to those for disj frames,
while the last column gives the total number of regions
that are protected by the ite frames
by having their handles saved in these frames.
The Mw columns show that
the memory used by both these kinds of embedded frames
is negligible in all benchmarks.
We do not show information about commit frames at all because
each nondeterministic program uses just one commit frame of four words
and no dynamic information is saved in them.

The rbmm3 system is only a little faster than boehm on \bench{crypt}.
Despite being a nondeterministic program,
the runtime support for backtracking it needs is rather cheap
(see Table~\ref{table:experiment:rtsupportrbmm3}).
However, the program handles a large number of small regions,
more than 125 million regions in total
(417 regions in each of 300,000 iterations),
with an average of just over eight words per region,
and the largest region being 64 words.
The cost of creating and destroying the region
has to be amortized over the words stored in the region.
In large regions,
the proportion of this overhead falling on any one word is negligible, but
in small regions, it can be substantial.
So rbmm3's gain due to avoiding runtime garbage collection
is almost exactly counterbalanced
by the overhead of handling many small regions,
resulting in just a small overall speedup.

This problem also manifests itself to various extents
in the other programs that handle many small-to-medium size regions
(more than ten million of them).
This can be seen in programs such as
\bench{dna}, \bench{life}, \bench{healthy}, \bench{sudoku},
\bench{rdna}, and \bench{rlife},
where we still have clear speedups
but they are not as good
as the speedups for programs with fewer, larger regions.
The memory results in Table~\ref{table:experiment:memory}
show that with rbmm3,
\bench{rdna} indeed needs much less memory than \bench{dna},
since it can reuse memory better with the help of its copying predicate.
Unfortunately, the overhead of copying
still causes \bench{rdna} to be about 12\% slower than \bench{dna},
though the slowdown for rbmm3 is less than for boehm (where it is 20\%).
However, compared to \bench{crypt},
\bench{queens} has many more nondet disjunctions
so it has to pay the cost of supporting backtracking within them many times
(see Tables~\ref{table:experiment:rtsupportrbmm12}
and~\ref{table:experiment:rtsupportrbmm3}),
\emph{and} it has to pay for handling many small regions
(68M regions with an average of about 27 words each),
and yet rbmm3 gets a speedup of 8.7\% over boehm on this benchmark.

The two worst cases for rbmm3 are \bench{rqueens} and \bench{boyer}.
\bench{rqueens} uses about five times as many regions as \bench{queens},
which makes the average region much smaller
than the already too small regions in \bench{queens}.
This is the negative side-effect of copying terms to new regions
to allow their old ones to be freed earlier.
That copying does achieve its objective;
we can see in Table~\ref{table:experiment:impact_rts}
that the memory \bench{queens} recovers from within regions
is recovered by \bench{rqueens} in the form of whole regions.
\bench{rqueens} actually never recovers memory from \emph{within} regions,
which means that overhead it pays for trying to do that
(saving size records at disj frames)
is useless while being quite expensive.
The slowdown in \bench{boyer} is mainly due to
the cost of saving size records (more than 306 million of them) at ite frames,
which are also all in vain.
A closer look at \bench{boyer} reveals that
it contains some semidet procedures that allocate into their input regions,
and the conditions of some if-then-elses call these procedures.
So the compiler needs to save the size records of those regions
if it wants to have instant reclaiming.
However, for the specific input used in our benchmark,
the calls to these semidet predicates all succeed,
so instant reclaiming has no words to reclaim.
See Section~\ref{seCfuture} for an idea
that would allow us to eliminate such unprofitable overhead.

Comparing the runtime results for rbmm2 and rbmm3
gives us an idea about the usefulness of tracking allocated regions.
While the reduction in the number of region arguments
does not have a strong impact in these benchmarks,
having less supporting code for backtracking
shows marked speedups for \bench{life}, \bench{healthy} and \bench{rlife}.
This enhanced performance corresponds with the reductions in
Table~\ref{table:experiment:rtsupportrbmm3} compared to
Table~\ref{table:experiment:rtsupportrbmm12}.
We can see that the main impact is on the ite frames.
For \bench{filrev} and \bench{life}, we can get rid of them completely,
except for one needed by the benchmarking mechanism itself.
For some others, we no longer have to save any size records to ite frames.
This is very important because
while nondet disjunctions are rare in Mercury programs,
if-then-elses are very common.
Ensuring their efficiency is therefore vital
to the efficiency of Mercury programs as a whole.
However, tracking of allocated regions cannot help in all cases,
such as in the case of \bench{boyer}.
For the programs for which rbmm3 seems slower than rbmm2,
this is purely a chance cache effect.
We have examined the C files generated by the Mercury compiler,
and for each such benchmark,
the only difference between the two versions is that
the rbmm2 version executes some statements that the rbmm3 version does not,
while using larger stack frames.

Comparing runtimes for rbmm1 and rbmm2,
we see that in the programs that use runtime support for backtracking,
using macros to implement that support may improve performance.
Table~\ref{table:experiment:runtime} shows that
\bench{boyer}, \bench{life}, \bench{healthy},
\bench{queens}, \bench{rlife} and \bench{rqueens}
are all at least 5\% faster in rbmm1 than they are in rbmm2.
This is because using macros avoids the cost of calling functions,
and because these programs are so small
that the increase in code size
does not adversely affect instruction cache behavior.
However, we expect that for larger programs,
the slowdown due to the reduction in the effectiveness of the instruction cache
will outweigh the cost of the calls.
However, in multi-module programs,
it should be possible to compile most modules with function calls
while compiling with macros
the modules in which the program spends most of its time,
thus getting the best of both worlds.

\subsection{The Impact of Sharing on Reusing Regions}
\label{seCevalSharing}

One can argue that sharing is the most basic and natural form of memory reuse.
However, sharing can conflict with RBMM,
because in RBMM we want terms with different lifetimes
to be stored in different regions,
and a subterm shared between two terms of different lifetimes
obviously cannot be stored in two different regions at once.
In this section we study in detail some benchmark programs
that we selected specifically for insights about the impact of sharing on RBMM.
Some of them are known difficult cases for RBMM
such as \bench{dna} and \bench{life}.
Some others create sharing that make it hard for in-place updating
such as \bench{isort}, \bench{bigcatch},
and \bench{filrev}~\cite{Aspinall08usageaspect}.

In our region points-to analysis,
we essentially put two program variables into the same region in two cases:
when there is an assignment between them,
or they are bound to a term and its same-type subterm
in a recursive data structure (Section~\ref{seCrpta}).
When the variables in a region have different lifetimes,
we will have a sort of memory leak,
because the memory of the variables with shorter lifetimes
will not be reclaimed until the longest lived variable dies.

One solution for this is to
copy the live data in the region to a different region,
so that the space used by the dead data can be reclaimed.
We experiment with this approach in \bench{rdna},
\bench{rlife}, \bench{risort}, and \bench{rqueens}.

The \bench{life} benchmark encodes the Game of Life
in which a new generation is generated from a previous one
based on a set of production rules.
From an initial generation,
it uses a loop (in the \code{life} predicate)
to produce several intermediate ones
before reaching the final generation, which is the wanted output.
We represent a generation by a list of \emph{live} cells,
with each cell being represented by its row and column in a 20x20 board.
To store a generation, we need two regions,
one for the skeleton and the other for the cells.
In the program,
the list skeletons of two successive generations are independent
while their cells may share.
In the recursive case of the predicate \code{life},
we first call \code{next\_gen} to compute the next generation,
whose skeleton could be in a different region,
and then we call \code{life} recursively
with the next generation as input.
In the base case,
we assign the current generation,
which is the ``next'' generation created by the caller,
to the output generation.
The computation is summarized in Figure~\ref{fig:life}.
\begin{figure}
\centering
\includegraphics[scale=0.8]{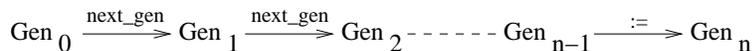}
\caption{The computation of generations in \bench{life}.}
\label{fig:life}
\end{figure}
Due to the assignment in the base case,
which creates sharing only between
the \emph{last} intermediate generation and the output generation,
our region points-to analysis decides
that the skeletons of the input and output generations
in the \code{life} predicate are in the same region,
and then enforces this for \emph{all} the (recursive) calls to \code{life}.
This eventually means that the skeletons of \emph{all} the generations
are placed in one big region with a size of 6,486 words.
In \bench{rlife}, we replace the assignment in the base case
with a call to a copying predicate that does not create any sharing,
thus allowing the compiler
to store the skeleton of each generation in a separate region,
which then can be reclaimed in time.
We see in Table~\ref{table:experiment:memory} that
the maximum amount of memory needed by \bench{rlife} is 2,056 words,
which is a 75\% reduction compared to \bench{life}'s 8,208 words.
This is because in \bench{rlife},
the skeletons of the old generations are reclaimed at each step.

The program \bench{dna} simulates the matching of a given DNA sequence
to each of the DNA sequences in a predefined set.
The matching degree of two sequences is represented by a similarity,
which is computed based on the similarities of their elements
with respect to the spatial relation among them.
The similarities between two sequences are calculated one by one
and put in an ordered tree, which is a recursive data structure.
To store a tree, we need two regions, one for the tree nodes
and the other for the structures where the similarities are stored.
Other than that, in this program,
there are assignments in several predicates
that establish sharing among the similarity structures in such a way that
all the similarities ever computed end up in the same large region of 4M words.
The maximal number of words in use during a run of the program is about 4.6M.
In the so-called region-friendly version \bench{rdna},
we make a fresh copy of each similarity and add the copy to the tree.
This allows the region analysis to decide
that the region to which the similarity is copied
is the region of the nodes of the tree,
and it can reclaim its previous region containing
all the temporary similarities involving in its computation.
The maximum amount of memory needed drops from 4.6M words to only 0.5M.
The size of the largest region also drops from 4M words to 0.43M words;
in \bench{rdna}, it contains only the skeleton of the tree.

In \cite{Phan09ppdp} we proposed a more desirable solution,
a more refined region analysis that,
by taking into account different execution paths,
can keep apart the regions of the variables in an assignment.
A dedicated implementation of the improved analysis
should achieve the same effect as
changing \bench{life} into \bench{rlife}
changing \bench{dna} into \bench{rdna},
without either requiring manual rewriting of the program
or incurring the cost of copying.

Another issue that we found was that
one of the Mercury compiler's existing optimizations, common structure reuse,
was reducing the effectiveness of our region analysis.
This optimization looks for conjunctions
in which the same term is assigned to two or more variables,
and then changes the code so that the term is constructed just once,
and then it is assigned to all the variables.
This is always an optimization for the boehm system,
but in cases where our region analysis
would want to assign those variables to different regions,
making them refer to the same memory cell creates unwanted sharing,
requiring our region analysis to merge the two variables' regions.
In general, the unmerged regions would be reclaimed at different times.
Therefore merging the two regions can delay
the reclamation of an unbounded amount of memory
by an unbounded amount of time.
The best way to avoid this problem
is to teach the optimization about regions,
and make it perform the transformation
only if the variables involved are in the same region.

The problems with memory reuse in RBMM in \bench{isort} and \bench{queens}
are typical for programs
that use recursive data structures such as lists and trees,
and continuously update them by adding to them and deleting from them.
Because the updated structure normally shares most parts of the original,
they are stored in the same regions,
which prevents us from reclaiming
the now-obsolete parts of the original structure.
In \bench{risort} and \bench{rqueens}, we try to improve memory reuse
by adding a predicate to copy the modified structure
so that the original region can be reclaimed after the copying.
In \bench{risort}, the copying happens after an integer is inserted,
while in \bench{rqueens}, it happens after a queen is deleted.
This modification obtains optimal memory management for \bench{risort}
(see Table~\ref{table:experiment:memory}).
In \bench{rqueens}, compared to \bench{queens},
the peak memory usage is higher.
This is due to region protection:
some disj-protected regions are removed but not reclaimed,
and instant reclaiming does not recover their memory until later.
However, the size of largest region drops to 24 words,
which is the storage needed to represent a list of 12 queens.

While memory reuse can be improved by this copying approach,
its runtime overhead is very expensive.
We see a 63\% increase of runtime
for \bench{rqueens} compared to \bench{queens}
in Table~\ref{table:experiment:runtime},
and for \bench{risort} we have to reduce the input size by a factor of ten
(to 1,000 integers, compared to 10,000 integers in \bench{isort})
to allow the program to finish in a reasonable time.
Similar problems with memory reuse in the presence of recursive data structures
can also be seen in \bench{dna} and \bench{rdna},
which insert similarity structures into trees,
and in \bench{bsolver}, which reduces the domains of the integral variables,
with the domains being represented as lists of integers.

The reason why \bench{bigcatch} and \bench{filrev} are not even faster
is also related to recursive data structures.
In this case the structures are not updated but only part of them is used,
i.e.\ only a part is live data,
but that still requires us to keep the whole region alive.
Copying the live data out of the region
would work just as well to recover memory,
and at just as high an overhead,
as in the previous case.
We do not have an automatic solution
for the problems related to the use of recursive data structures
in RBMM-only systems, but then, neither does anyone else.
The problem is well-known
among researchers who use type systems or type inference
to reason about memory structures
~\cite{Baker90,Chase90,TofteTalpin97,Henglein01hmn},
who nevertheless have to accept the loss of precision
as the price of having a finite model.
To improve storage use in such cases,
one can combine RBMM with other techniques,
such as runtime or compile time garbage collection.
The copying approach used by our region-friendly benchmarks
can be viewed as a simulation of runtime copying garbage collection.
Combining RBMM with copying garbage collection
has been realized in the MLKit~\cite{Hallenberg02}.

\section{Related Work}
\label{seCrelatedwork}

In this section, we only mention the most important and most related papers.
It is not our intention to give a detailed overview of the
research on RBMM for other programming paradigms.
An in-depth review of RBMM research for functional programming
can be found in~\cite{Tofte04Retro}.

The research on automated region-based memory management for programming
languages started with the work of Tofte and Talpin
\cite{TofteTalpin97} for functional programming,
in particular for a simplified call-by-value lambda calculus.
They divide program terms into regions
using a technique similar to unification-based type inference
in which the types have been annotated with region variables.
The lifetimes of the regions
are computed based on the lexical scope of the expressions and
the regions themselves are forced to follow stack discipline,
with the last region created always being the first one destroyed.
While lexically-scoped regions and stack discipline
seem natural for the evaluation of lambda expressions
and they simplify the task of deciding region lifetimes,
they often give regions lifetimes that are longer than needed,
increasing the program's memory requirements.
Possibly even more important, the cleanup they often require after a tail call
also spoils tail call optimization.
\cite{Birkedal96from} refined this system in several ways,
the most important being \emph{Storage Mode Analysis},
which mitigates the problems caused by the stack discipline
by resetting regions to zero size when their contents are no longer needed.
However, to make this region resetting possible,
programmers often have to rewrite their programs in unusual ways.

While Aiken et al.\ also used a stack in their inference algorithm,
they nevertheless thought that
forcing stack discipline on the lifetimes of regions
is too strict~\cite{Aiken95better},
and they decoupled region creation and removal,
allowing regions to have arbitrarily overlapped lifetimes.
Going even further in this direction,
Henglein, Makholm, and Niss in~\cite{Henglein01hmn}
proposed an imperative sublanguage on regions.
In their system, regions are allowed not only to have arbitrary lifetimes
but also to change their bindings.
Their regions also contain reference counters
that can give their system more flexibility in controlling their lifetimes.
The most complete functional programming system with RBMM
is the MLKit~\cite{MLKit}, which manages storage solely by RBMM.
This system, while still using stack discipline for the lifetimes of regions,
supports both resetting regions to zero size
and runtime garbage collection within regions.
Its performance is competitive with other state-of-the-art SML compilers.

Our static region analysis and transformation for Mercury
were inspired by the work in~\cite{Cherem04},
which also allowed arbitrarily-overlapped region lifetimes.
The analyses in that paper take into account
the data flow in a Java program
in order to determine the set of needed regions and their lifetimes.
Therefore the analyses had to be
redefined for Mercury to deal with unification and a control flow that are
fundamentally different from object manipulation and control flow in Java.
Cherem and Rugina use the classes of Java
to achieve a finite representation of the storage of (recursive) structures
in terms of regions,
but their starting assumptions are different from ours.
In our analysis,
we start by associating each variable with as many regions
as its type requires (e.g.\ skeletons and elements for \code{list\_int})
whereas they start by associating each variable with only one region
(the one for its class),
and add the other nodes later, on demand.
In the case of recursive types,
we know from the start that e.g.\ all the list skeleton nodes
of a given variable are in the same region.
Given a variable $v$ of class $c$ whose fields include,
directly or indirectly, other variables of class $c$,
they initially allocate different nodes in the region graph
to $v$ and those other variables,
and merge some of those nodes only when they see a link between them.
This complicates their analysis,
though in some cases it allows them to keep the regions separate
and thus free some memory earlier.
In logic programs, recursive types
are almost always processed using recursive procedures,
and such cases would be vanishingly rare.

Another difference between the two systems
that is likely to be more important in practice
is that the liveness information we derive in Section~\ref{seClra}
allows interprocedural creation of regions,
something that was not handled in \cite{Cherem04}.
This can give finer lifetimes to regions,
which can result in better memory reuse in certain situations.
For example, for a region like \code{R1} in \code{p}
in Figure~\ref{fig:recreate:annotated},
the system in~\cite{Cherem04}
would force \code{R1} to be live throughout \code{p}.
If we had replaced the atom at (4) with a recursive call to \code{p}
(such as \code{p(A - 1, B)})
their system would build up
all the temporary memory allocated at (1) in \code{R1}.

Note that using graphs to model storage is not at all new
in research about heap structures \cite{Chase90,Steensgaard96}.
Our graphs share many features with annotated types
where the annotation on each type constructor
is a location or region; see e.g.\ \cite{Baker90,TofteTalpin97}.
Baker in~\cite{Baker90} and many others pointed out that
such annotated types can also give information about sharing,
very similar to the concept of \emph{region-sharing} in this paper.

The first application of RBMM to logic programming
was the work of Makholm for Prolog,
described in \cite{Makholm00master} and~\cite{Makholm00}.
He realized that backtracking can be handled completely by runtime support,
which can keep the region inference simple.
However, the Prolog system he used
was not based on the usual implementation technology for Prolog,
the Warren Abstract Machine or WAM.
This shortcoming was fixed in~\cite{MakholmSagonas02}
where Makholm and Sagonas extended the WAM
to enable region-based memory management.
The main differences between their work and ours are that
Mercury supports if-then-elses with conditions
that can succeed more than once,
and the Mercury implementation generates specialized code
for many situations that Prolog handles with a more general mechanism.
(For example, Mercury has separate implementations
for nondet disjunctions and for semidet disjunctions.)
The first difference required new algorithms,
while the second posed a tough engineering challenge in keeping overheads down,
since due to Mercury's higher speed,
any given overhead would hurt Mercury more than Prolog.

\section{Future Work}
\label{seCfuture}

Our RBMM implementation already has some support for profiling.
When given a certain option,
the Mercury compiler will augment the RBMM support code it generates
with code that counts and keeps track of several things:
the number of region creations and removals,
the amount of memory allocated in regions,
the maximum size of regions,
the number and size of the embedded disj, ite and commit frames, and so on.
This option was the source of the information in
Tables~\ref{table:experiment:memory},~\ref{table:experiment:impact_rts},
\ref{table:experiment:rtsupportrbmm12},
and~\ref{table:experiment:rtsupportrbmm3}.
We would like to modify this profiling mechanism to also report,
for each region variable (both old and new) at each resume point,
the number of instant reclaiming attempts made
at that point for that region variable,
and the amount of memory recovered in those attempts.
We would like to then feed this information back to the compiler,
so that it can find out which attempts are too expensive
for the amount of memory they recover,
so it can simply avoid generating them.

Our current system prevents the reclamation of regions
that are forward dead but backward live entirely at runtime.
Such runtime protection is in fact necessary in general.
Given a procedure $p$ and a region $r$ with $r \in \deadR{p}$,
$p$ cannot know whether some disjunction to the left of its caller
makes $r$ backward live or not.
We could handle this situation by generating three versions of $p$.
The first version would assume that $r$ is backward live
and therefore never reclaim $r$,
the second version would assume that $r$ is backward dead,
and therefore always reclaim $r$,
and the third version
would make neither assumption
and would reclaim $r$ only if it is not protected,
as in our current system.
The caller would call the first version
if it itself makes the region backward live
(e.g. the call may be in one disjunct,
and a later disjunct in that disjunction may need the region),
or because the caller itself is a specialized version
that assumes that the region is backward live.
The caller would call the second version if it itself created the region,
and if there is no nondet construct between that creation and the call
that could make the region backward live.

Unfortunately, a procedure's $\deadRegs$ set may contain several regions,
and given $n$ regions, we may need up to $3^n$ copies of the procedure,
which is far too many,
since that many copies
would significantly degrade the effectiveness of the instruction cache.
Nevertheless, in some situations,
the fraction of execution time spent in the procedure
may justify creation of one or more specialized copies of the procedure.
We intend eventually to implement an optimization
that figures out which of the possible specialized versions can ever be called,
attempts to compare their cost in lost locality
to the speedup we can expect
from optimizing away unnecessary remove instructions,
and creates the specialized versions if and only if
the comparison indicates that it is beneficial to do so.
If a specialized version is not worth it,
the caller can call the original version of the procedure;
since this does runtime tests on all the removed regions
before reclaiming them, it still works in all cases.

What we \emph{could} improve without considering such complicated tradeoffs
are situations where the instruction that removes a region is in a procedure
that itself makes the region unconditionally protected at the removal site.
In such cases, we know statically
that the removal will not actually reclaim the region,
and that therefore we can simply optimize it away.
If such protection is only conditional,
we do have to consider the tradeoff.
Since we cannot guarantee optimizing away all protected removals,
the mechanisms we described in Section~\ref{seCsupportnondet}
will always be needed.

The main limitation of our work is that currently,
the program analysis underlying our system supports only a subset of Mercury.
We intend to work on extending the analysis to handle the rest of the language.
Since we already handle almost all of Mercury,
``the rest of the language'' covers only a few features:
Mercury procedures defined in foreign languages,
multi-module programs, and higher-order code.
To handle them, we need to ensure two things.
First, that the callers and callees involved in
calls to foreign language code, cross-module calls and higher order calls
all agree on the liveness of the regions involved in the call;
second, that they all agree on the sharing between those regions.
The first one is relatively easy to achieve
by simply setting the $\bornRegs$ and $\deadRegs$ sets of those calls to empty.
This will work;
any creations and removals of the regions
that \emph{would} have been in those sets will happen around the call.
The cost is that it may increase the program's memory consumption,
though only to the levels seen in some other RBMM systems.
The real problem is the second issue:
getting consensus between callers and callees on sharing.

\noindent\textbf{Handling foreign language procedures.}
Always setting the $\bornRegs$ and $\deadRegs$ sets
of foreign language procedures to the empty set
avoids burdening programmers with the responsibility
for managing the creation and removal of regions.
Since most foreign language procedures do not allocate any memory,
their writers do not need to know anything about regions at all.
The foreign language procedures that \emph{do} allocate memory
need to know where the allocation of each cell should happen.
In a hybrid system that combines RBMM with the Boehm collector,
it is simple enough to let such foreign procedures
keep doing what they do now,
which is doing all their allocation on the Boehm heap.
An RBMM-only system would need to make
the region arguments added to each procedure by our transformation
visible to the programmer,
and document which of these region variables represent
which part of each of the arguments originally created by the programmer,
so that when he or she writes code to create a new cell
that will become part of a term that will be bound to an output argument,
they can allocate it in the right region.
We would also need to give programmers a mechanism
that they can use to tell the compiler about any sharing they create
between the regions;
our Algorithm~\ref{algo:interproc} could then take this information on trust.
As for temporary structures that can never become part of an output argument,
programmers can put them where they wish.
They can put them in memory managed
by \code{malloc} and \code{free} (if the foreign language is C)
and their equivalents (if the foreign language is something else),
or, if we expose the functions for creating and removing regions,
they can put them in one or more programmer-managed regions instead.

\noindent\textbf{Handling multi-module programs.}
Our current implementation actually allows cross module calls;
if a program cannot call the procedures in the standard library's I/O module,
then it cannot print out its results.
The reason why we cannot yet handle multi-module programs in general
is that currently we do not do any region analyses across modules,
and hence we never pass region variables or any other information about regions
from one module to another.

The reason why implementing region analysis in multi-module programs is hard
is that the fixpoint computation in Algorithm~\ref{algo:interproc}
is inherently incompatible with separate compilation.
Mercury's compilation system ensures that when a module changes,
all other modules dependent on its interface will be recompiled
before the building of the executable,
but it guarantees that this will take a bounded number of steps.
As it is, Algorithm~\ref{algo:interproc} cannot provide a similar guarantee;
the procedures in a single SCC may be in different modules,
and each iteration of the search for the fixpoint
must analyze code in each of those modules.
We therefore need to either change the algorithm,
or make the compilation system flexible enough
to encompass fixpoint computations that need an unbounded number of iterations.
We have looked at the second option in the past,
using the ideas of \cite{latexmodel} as the basis,
but even if it were implemented,
being able to limit the number of iterations
would help compile programs more quickly.
There are some assumptions we can make that can help with that.
For example, we can assume that all input variables of cross-module calls
are in regions that the callee will not allocate in or remove;
if their last use is during the call, the caller will remove them upon return.
This loses some precision and therefore reduces the efficiency of memory reuse,
but this is a known and fairly widespread problem:
\emph{most} program analysis and optimizations
lose precision at module boundaries,
and in almost every case this is seen as an acceptable tradeoff.
The challenge will be in coming up with mechanisms
for handling the regions of output variables
that still allow memory to be recovered effectively enough.
We have some ideas, but no solutions yet.

\noindent\textbf{Handling higher order code.}
Mercury supports two forms of higher order calls:
calling an ordinary higher order term (a closure),
and calling a typeclass method.
The challenge in both cases is that
the identity of the called procedure
may not be apparent when the calling module is compiled,
which prevents Algorithm~\ref{algo:interproc} from analyzing it.
There are two avenues of possible solutions.
First, the Mercury compiler already contains an analysis
that attempts to find out which procedures each higher order value may call.
If this analysis succeeds, an adapted version of Algorithm~\ref{algo:interproc}
can convey the requirements of the calling context to these procedures,
and convey to the caller
the worst-case demands that any of the callees may make
(e.g.\ in terms of which nodes they need merged to reflect their sharing).
Second, in case the analysis fails
(which may happen e.g.\ because
the caller picks up those higher order values
from a data structure created elsewhere),
we need an interface between caller and callee
that is standard and thus does not require negotiation
(which is what the fixpoint iteration
in Algorithm~\ref{algo:interproc} represents).

Our search for this standard interface
will not be restricted to RBMM-only systems.
We will also look at hybrid systems
in which RBMM coexists with the Boehm general purpose garbage collector,
each looking after some of the program's memory.
Hybrid system that combine RBMM with a runtime collector
have proven useful in other contexts \cite{Hallenberg02},
and they may prove useful in this one as well.
We do not intend to look at hybrid schemes that integrate RBMM
with Mercury's accurate garbage collector
since that collector is actually significantly slower than the Boehm collector
\cite{Fergus02gc}.
We do however intend to look at integrating our RBMM system
with the compile time garbage collection scheme
reported in \cite{Mazur00,Mazur01,Mazur04phd}.

\section{Conclusion}
\label{seCconclusion}

We have made region-based memory management available
as an alternative storage management technique
for programs written in a very large subset of Mercury.
This involved the design and implementation of two program analyses
(region points-to analysis and region liveness analysis)
and a program transformation,
the modification of the Mercury code generator
to use the information produced by the analyses and transformation
to generate code that uses RBMM to manage its memory,
and the implementation of the primitive operations used by the generated code.

We provide termination and correctness theorems
for our region analyses and our transformation algorithms.
These ensure the safety of memory accesses and region operations
with respect to forward liveness.
Our discussions in Section~\ref{seCsupportnondet} also strongly argue
that our runtime support operations
guarantee the safety of memory accesses and region operations
with respect to backward liveness
(i.e.\ in the presence of backtracking).
These operations also instantly reclaim
the memory allocated by backtracked-over computations,
which help programs to reuse memory effectively.

The main challenge for the runtime support
is to support backtracking correctly
without incurring significant overhead,
especially in deterministic code.
Our experiments show that using RBMM instead of the Boehm collector
yields nontrivial speedups for 15 out of our 18 benchmark programs,
these speedups ranging from near 10\% to a remarkable more than 60\%.
We even get large speedups for some benchmarks
that are known to be difficult cases for RBMM.
This indicates that the runtime support we provided for backtracking
incurs very modest overhead in most cases,
contributing to the overall better performance.

The memory use results of the benchmarks are also positive:
in some programs we obtain optimal memory consumption.
On average, our benchmarks require about one-twentieth the memory
with RBMM than with the Boehm collector (only 5\%),
and even if we exclude the region-friendly programs,
the figure is about one-eighteenth (5.4\%).
This even before including
any of the optimizations that have been studied for RBMM,
such as stack allocation of regions~\cite{Birkedal96from,Cherem04},
and merging regions that are removed at the same points ~\cite{Makholm00}.

Everything we have described is available
in current releases-of-the-day from the Mercury web site.
The experimental setup for this paper is available from
\code{http://www.cs.kuleuven.be/\~{}gerda/rbmm/rbmm\_benchmarks.tar};
it includes the benchmark programs as well as the benchmarking script.

\bibliography{new_rpta}
\end{document}